 \newtheorem{thm}{Theorem}[section]
\newtheorem{rem}[thm]{Remark}
\newtheorem{definition}[thm]{Definition}
\newtheorem{lem}[thm]{Lemma}
 \newtheorem{prop}[thm]{Proposition}
\newtheorem{cor}[thm]{Corollary}
\newtheorem{exm}[thm]{Example}
 \newtheorem{conj}[thm]{Conjecture}
 \newtheorem{prob}[thm]{Problem}
\def\f#1{{\mathbb{F}}_{#1}}
\begin{document}

\title{Explicit Deep Holes of Reed-Solomon Codes}
\author{Jun Zhang}
\address{Jun Zhang, School of Mathematical Sciences, Capital Normal University, Beijing 100048, P.R. China.}
\email{junz@cnu.edu.cn}
\author{Daqing Wan}
\address{Daqing Wan is with Department of Mathematics,
 University of California, Irvine, CA 92697, USA.}
\email{dwan@math.uci.edu}

\thanks{
	The research of Jun Zhang was supported by the National Natural Science Foundation of China under Grant No. 11601350, by Scientific Research Project of Beijing Municipal Education Commission under Grant No. KM201710028001, and by Beijing outstanding talent training program under Grant No. 2014000020124G140. Jun Zhang is supported by Chinese Scholarship Council for his visit at the University of Oklahoma, USA.  The research of Daqing Wan was supported by National Science Foundation.
}


\begin{abstract}
In this paper, deep holes of Reed-Solomon (RS) codes are studied. A new class of deep holes for generalized affine RS codes is given if the evaluation set satisfies certain combinatorial structure. Three classes of deep holes for projective Reed-Solomon (PRS) codes are constructed explicitly. In particular, deep holes of PRS codes with redundancy three are completely obtained when the characteristic of the finite field is odd. Most (asymptotically of ratio $1$) of the deep holes of PRS codes with redundancy four are also obtained.
\end{abstract}
\keywords{Reed-Solomon codes, covering radius, deep holes}
\maketitle

%

\section{Introduction}
Let $\f{q}^n$ be the $n$-dimensional vector space over the finite field $\f{q}$ of $q$ elements with characteristic $p$. For any vector (also, called \emph{word}) $ {x}=(x_1,x_2,\cdots,x_n)\in \f{q}^n$, the \emph{Hamming weight} $\mathrm{Wt}( {x})$ of $ {x}$ is defined to be the number of non-zero coordinates, i.e.,
$$\mathrm{Wt}( {x})=|\left\{i\,|\,1\leqslant i\leqslant n,\,x_i\neq 0\right\}|.$$
For integers $0\leq k\leq n$,
a \emph{linear $[n,k]$ code} $C$ is a $k$-dimensional linear subspace of $\f{q}^n$. The \emph{minimum distance} $d(C)$ of $C$ is the minimum Hamming weight among all non-zero vectors in $C$, i.e.,
$$d(C)=\min\{\mathrm{Wt}( {c})\,|\, {c}\in C\setminus\{ {0}\}\}.$$
A linear $[n,k]$ code $C\subseteq \f{q}^n$ is called an $[n,k,d]$ linear
code if $C$ has minimum distance $d$. A well-known trade-off between
the parameters of a linear $[n,k,d]$ code is the Singleton bound
which states that
$$d\leqslant n-k+1.$$
 An $[n,k,d]$ code is called a \emph{maximum distance separable}
  (MDS) code if $d=n-k+1$. An important class of MDS codes are affine Reed-Solomon codes and projective Reed-Solomon codes, which will be our main object
  of study in this paper.

 Let $C$ be an $[n,k,d]$ linear code over $\f{q}$. The \emph{error distance} of any word $u\in\f{q}^n$ to $C$ is defined to be
$$d(u,C)=\min\{d(u,v)\,|\,v\in C\},$$
where $$d(u,v)=|\{i\,|\,u_{i}\neq v_{i},\,1\le i\le n\}|$$
is the Hamming distance between words $u$ and $v$.
The error distance plays an important role in the decoding of the code. The maximum error distance
\[
   \rho(C)=\max\{d(u,\, C)\,|\,u\in \f{q}^n\}
\]
is called the \emph{covering radius} of $C$.

Covering radius of codes was studied extensively~\cite{CKMS85,CLS86,GS85,HKM78,MCL84,OST99}.
For MDS codes, the covering radius is known to be either $d-1$ or $d-2$~\cite{GK98}. For a general
MDS code, determining the exact covering radius is difficult. We shall see below that
affine Reed-Solomon codes have covering radius $d-1$. In contrast, the covering radius of projective Reed-Solomon codes
is unknown in general but is conjectured to be $d-2$ (except when $q$ is even and $k=2,\,q-2$ in which case the covering radius is $d-1$).
We now recall the definition of affine and projective  Reed-Solomon codes.
\begin{definition}
Fix a subset
$D=\{x_1,\ldots,x_n\}\subseteq \f{q}$, which is
called the evaluation set. Let $v_1,v_2,\cdots,v_n\in\f{q}^*$. For any integer $0<k<n$, the generalized (affine) RS code
$C=RS(D,k)$ of length $n$, dimension $k$ and scale vector $v=(v_1,v_2,\cdots,v_n)$ over
$\f{q}$ is defined to be
\begin{align*}
    RS_v(D,k)=\{(v_1f(x_1),\ldots,v_nf(x_n))\in
\f{q}^n\,|\, f(x)\in \f{q}[x], \deg f(x)\leq k-1\}.
\end{align*}
\end{definition}
It is easy to check that the minimal distance of this code is $n-k+1$, and thus $RS_v(D, k)$ is an MDS code.
For the case $v=(1,1,\cdots,1)$, we write $RS(D,k)$ for short. For any $v\in(\f{q}^*)^n$, the RS code $RS_v(D, k)$
is equivalent to $RS(D,k)$. Thus,  we only consider $RS(D,k)$ throughout this paper. When $D=\f{q}$, we write $RS(q,k)$ for short.

For any word $u\in \f{q}^{n}$, by the Lagrange interpolation, there is a unique
polynomial $f$ of degree $\le n-1$ such that
\[
   u=u_{f}=(f(x_{1}),f(x_{2}),\cdots,f(x_{n})).
\]
Clearly, $u_{f}\in RS(D,k)$ if and only if $\deg(f)\le k-1$. We also say that $u_{f}$ is defined by the polynomial $f(x)$.
One can easily show~(see \cite{LW08}):
for any $k \leqslant \deg(f)\leqslant n-1$, we have the inequality
 \[
   n-\deg(f)\leqslant d(u_{f},RS(D,k))\leqslant n-k.
\]
It follows that if $\deg(f)=k$, then $d(u_{f},RS(D,k))=n-k$.  One deduces the covering radius of all affine RS codes.

\begin{prop}\label{RSC}
The covering radius of all affine RS codes with parameters $[n,k,d=n-k+1]$ is $n-k=d-1$.
\end{prop}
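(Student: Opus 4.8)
The plan is to establish the two inequalities $\rho(C) \le n-k$ and $\rho(C) \ge n-k$ separately, both of which follow almost immediately from the facts assembled just before the statement. Recall that the covering radius is the maximum of $d(u,C)$ over all words $u \in \f{q}^n$, and that every such word can be written uniquely as $u = u_f$ for a polynomial $f$ of degree at most $n-1$ via Lagrange interpolation.

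For the upper bound, I would split on the degree of $f$. If $\deg(f) \le k-1$, then $u_f \in RS(D,k)$, so $d(u_f, C) = 0 \le n-k$. If instead $k \le \deg(f) \le n-1$, the displayed inequality $n - \deg(f) \le d(u_f, RS(D,k)) \le n-k$ already gives $d(u_f,C) \le n-k$. Since these two cases exhaust all words, taking the maximum yields $\rho(C) \le n-k$.

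For the lower bound, it suffices to exhibit a single word attaining error distance $n-k$. Because $0 < k < n$, the monomial $f(x) = x^k$ has degree exactly $k$ and degree at most $n-1$, so $u_f$ is a genuine word; by the stated consequence that $\deg(f) = k$ forces $d(u_f, RS(D,k)) = n-k$, we obtain $\rho(C) \ge d(u_f,C) = n-k$. Combining the two bounds gives $\rho(C) = n-k$, and since $d = n-k+1$ this equals $d-1$, as claimed.

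I do not expect any real obstacle here: the content is entirely carried by the interpolation normal form and the two-sided estimate quoted from \cite{LW08}, and the only point requiring a moment's care is checking that a degree-$k$ polynomial is admissible, which is guaranteed by the hypothesis $k < n$. The mild subtlety worth flagging is that the quoted estimate is stated only for $k \le \deg(f) \le n-1$, so the in-code case $\deg(f) \le k-1$ must be handled separately (trivially) in order to cover all of $\f{q}^n$ when bounding the maximum.
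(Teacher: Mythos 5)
Your proof is correct and follows exactly the route the paper intends: the paper deduces the proposition directly from the quoted two-sided estimate $n-\deg(f)\le d(u_f,RS(D,k))\le n-k$ together with the observation that $\deg(f)=k$ forces equality, which is precisely your upper bound (splitting off the trivial in-code case) plus your witness $f(x)=x^k$. Nothing further is needed.
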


If the distance from a word to the code achieves the covering radius of the code, then the word is called a \emph{deep hole} of the code. Deciding deep holes of a given code is much harder than the covering radius problem, even for affine RS codes. The deep hole problem for affine RS codes was studied in~\cite{CMP11,CM07,KW15,LW08,LZ15,Liao11,WL08,WH12,ZFL13}. 
As noted above, words $u_f$ with $\deg(f)=k$ are deep holes of affine $RS(D, k)$.
Based on numerical computations, Cheng and Murray~\cite{CM07} conjectured that the converse is also true if $D=\f{q}$.

\begin{conj}[\cite{CM07}]\label{deepholeconj} For $2\leq k\leq q-2$ (and $k\not= q-3$ if $q$ is even),
a word $u_f$ is a deep hole of $RS(q, k)$ if and only if $\deg(f)=k$.
\end{conj}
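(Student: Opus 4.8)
The statement splits into two directions. The \emph{if} direction is already contained in the excerpt: if $\deg(f)=k$, then for any $g$ with $\deg(g)\le k-1$ the difference $f-g$ has degree exactly $k$, hence at most $k$ roots in $\f{q}$, so $u_f$ agrees with every codeword in at most $k$ positions and $d(u_f,RS(q,k))=n-k=d-1$. Thus the whole content of Conjecture~\ref{deepholeconj} is the converse, and the plan is to prove its contrapositive: if $k+1\le\deg(f)\le q-1$, then $u_f$ is \emph{not} a deep hole, i.e.\ some codeword agrees with $u_f$ in at least $k+1$ of the $n=q$ positions.

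The first step is to reformulate agreement in terms of divided differences. A codeword of $RS(q,k)$ agrees with $u_f$ on a set $S\subseteq\f{q}$ of size $k+1$ exactly when the Lagrange interpolant of $f$ on $S$ has degree $\le k-1$; since that interpolant has degree $\le k$, this happens iff its $x^k$-coefficient, the $k$-th divided difference
\[
   f[S]=\sum_{x\in S}\frac{f(x)}{\prod_{y\in S,\,y\ne x}(x-y)},
\]
vanishes. Hence $u_f$ is a deep hole if and only if $f[S]\ne 0$ for \emph{every} $(k+1)$-subset $S$ of $\f{q}$, and the goal becomes: for every $f$ with $k+1\le\deg(f)\le q-1$, produce a $(k+1)$-subset $S$ with $f[S]=0$. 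This reduction is the one underlying the deep-hole analyses of~\cite{LW08,CMP11}, and it re-proves the \emph{if} direction instantly, since when $\deg(f)=k$ the quantity $f[S]$ equals the nonzero leading coefficient of $f$ for every $S$.

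The second step is to turn the existence of a vanishing divided difference into a point-counting problem. Clearing denominators in $f[S]=0$ yields a symmetric polynomial equation in the coordinates of $S$, so counting ordered $(k+1)$-tuples of \emph{distinct} elements of $\f{q}$ that satisfy it amounts to counting $\f{q}$-rational points on an affine hypersurface (dimension $k$ inside $(k+1)$-space) with the diagonals $x_i=x_j$ removed. I would estimate this count by encoding the distinctness condition through inclusion-exclusion and bounding the resulting point/character sums with Weil-type estimates, in the spirit of the Li--Wan method. The main term is of order $q^{k}$, and the aim is to show the error term is strictly smaller, which forces the count to be positive and hence guarantees a genuine $(k+1)$-subset with $f[S]=0$.

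The hard part is uniformity over the full range $2\le k\le q-2$. The Weil-bound argument only produces a positive count once $q$ is large compared to an explicit function of $k$ and $\deg(f)$, so it settles the conjecture when both $k$ and the redundancy $q-k$ are small relative to $q$, but it leaves a genuine gap when $k$ is close to $q$: there $\deg(f)$ is squeezed into a short range and the associated hypersurface can be highly degenerate, so the generic estimate collapses. These boundary regimes are exactly where the known exceptional deep holes live and where the exclusions ``$q$ even, $k=q-3$'' of the statement come from; I expect that controlling them requires the extremal geometry of the normal rational curve and of arcs in $PG(k,q)$ (Segre-type classification) rather than generic point counting, and this is the decisive obstacle that keeps the conjecture open in general.
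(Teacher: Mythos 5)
The statement you are asked to prove is Conjecture~\ref{deepholeconj}, and the paper does not prove it: it is explicitly recorded as an open conjecture of Cheng and Murray, known only in the partial ranges $k+1\le p$ (Zhuang--Cheng--Li) and $k\ge (q-1)/2$ for odd $q$ (Kaipa), as summarized in Proposition~\ref{CLZ15}. So there is no ``paper's own proof'' to match, and your proposal, read as a proof, has a genuine and decisive gap --- one you candidly identify yourself in the final paragraph.

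What is correct in your write-up: the \emph{if} direction (via $n-\deg(f)\le d(u_f,RS(q,k))\le n-k$, or via the constancy of the $k$-th divided difference for $\deg f=k$), and the reduction of the \emph{only if} direction to finding, for each $f$ with $k+1\le\deg(f)\le q-1$, a $(k+1)$-subset $S\subseteq\f{q}$ with vanishing divided difference $f[S]=0$. This is exactly the Li--Wan reformulation that underlies all known partial results. The gap is in your second and third steps: the inclusion--exclusion plus Weil-type estimate makes the error term smaller than the main term $q^k$ only when $q$ is large compared to an explicit function of $k$ and $s=\deg(f)-k$; this recovers the Cheng--Murray/Li--Wan/Cafure-et-al.\ regime (compare Theorem~\ref{deepholes}) but says nothing uniform over $2\le k\le q-2$. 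The complementary range $k\ge (q-1)/2$ was not closed by sharpening the point count but by entirely different methods (MDS code extension and the geometry of arcs/the normal rational curve in $PG(k,q)$), and the remaining range --- non-prime $q$ with $p\le k<(q-1)/2$ --- is open precisely because neither technique reaches it. Your sketch names this obstacle but does not overcome it, so the proposal establishes only the known partial results, not the conjecture. (Two smaller points: the low-degree cases $\deg(f)=k+1,k+2$ are handled unconditionally by the subset-sum Propositions~2.3--2.6, not by Weil bounds; and the exclusion ``$q$ even, $k=q-3$'' arises from the failure of the $(q-2)$-subset-sum problem over $\f{q}$ in characteristic $2$, which is a cleaner explanation than the hyperoval phenomenon you allude to, the latter being the $k=2$ story.)
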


This conjecture remains open, but has been proved
if either $k+1\le p$ by \cite{ZCL16} or $k\geq (q-1)/2$ by \cite{Kaipa17}. In particular, the conjecture is true for prime fields.

In an opposite direction, Wu and Hong~\cite{WH12} found a new class of deep
holes for $RS(\f{q}^*,k)$ working on cyclic codes via Fourier transformation. Note that here the evaluation set $\f{q}^*$ is a proper subset
of $\f{q}$.
Then Zhang, Fu and Liao~\cite{ZFL13}
realized the essence of the new deep holes and generalized it to all affine RS codes $RS(D,k)$ for any proper subset
$D\subset \f{q}$.
\begin{prop}[\cite{ZFL13}]\label{newdeephole12}
For any proper subset $D\subset \f{q}$, the polynomials $$\{{a}{(x-\delta)}^{q-2}+f_{\leq k-1}(x)\,|\,a\neq 0,\,\delta\in\f{q}\setminus D\} $$
define deep holes of RS code $RS(D,k)$, where $f_{\leq k-1}(x)$ runs all polynomials of degree $\leq k-1$.
\end{prop}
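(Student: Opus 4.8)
The plan is to show that the word $u_g$ defined by $g(x)=a(x-\delta)^{q-2}+f_{\leq k-1}(x)$ lies at error distance exactly $n-k$ from $RS(D,k)$, where $n=|D|$. Since $RS(D,k)$ is a linear code and $u_{f_{\leq k-1}}\in RS(D,k)$, the error distance to the code is unchanged when we subtract the codeword $u_{f_{\leq k-1}}$; thus I may assume $f_{\leq k-1}=0$ and work with $g(x)=a(x-\delta)^{q-2}$. By Proposition \ref{RSC} the covering radius equals $n-k$, so $d(u_g,RS(D,k))\leq n-k$ holds automatically, and it suffices to establish the matching lower bound $d(u_g,RS(D,k))\geq n-k$. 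Equivalently, I must show that $u_g$ agrees with every codeword in at most $k$ coordinates.

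The key simplification comes from the hypothesis $\delta\in\f{q}\setminus D$: every evaluation point $x_i\in D$ satisfies $x_i-\delta\neq 0$, so Fermat's little theorem gives $(x_i-\delta)^{q-1}=1$ and hence
$$(x_i-\delta)^{q-2}=(x_i-\delta)^{-1}.$$
Therefore the $i$-th coordinate of $u_g$ is $a/(x_i-\delta)$. This is exactly the point at which $\delta\notin D$ is indispensable: at $x_i=\delta$ the identity would fail, since $(x-\delta)^{q-2}$ vanishes there while $1/(x-\delta)$ is undefined.

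Next I count agreements with an arbitrary codeword $u_h$, where $\deg h\leq k-1$. A coordinate $i$ is an agreement precisely when $a/(x_i-\delta)=h(x_i)$, i.e.\ when $x_i$ is a root of the polynomial $F(x):=(x-\delta)h(x)-a$. Because $a\neq 0$, the polynomial $F$ cannot be identically zero: if $h=0$ it equals the nonzero constant $-a$, and if $h\neq 0$ its top-degree term coincides with that of $(x-\delta)h(x)$, of degree between $1$ and $k$. Hence $F$ is a nonzero polynomial of degree at most $k$, so it has at most $k$ roots in $\f{q}$, and $u_g$ agrees with $u_h$ in at most $k$ positions. Consequently $d(u_g,u_h)\geq n-k$ for every codeword $u_h$, which yields $d(u_g,RS(D,k))\geq n-k$; combined with the upper bound this forces equality, so $u_g$ is a deep hole.

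The argument is essentially self-contained, so I do not anticipate a serious obstacle. The two points requiring care are the reduction to $f_{\leq k-1}=0$ and, more importantly, the observation that clearing the denominator in $a/(x-\delta)=h(x)$ produces a polynomial $F$ whose degree bound \emph{and} non-vanishing are both secured by the condition $a\neq 0$. The decisive structural role of the hypothesis $\delta\notin D$ is precisely that it legitimizes replacing $(x-\delta)^{q-2}$ by $(x-\delta)^{-1}$ at all evaluation points, which is what makes the clean degree count available.
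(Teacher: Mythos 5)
Your proposal is correct and is in essence the paper's own argument (as sketched after the proposition and carried out in the cited source): both hinge on the identity $(x-\delta)^{q-2}=1/(x-\delta)$ on $D$, the reduction modulo codewords, and the upper bound from the covering radius $n-k$. The only difference is packaging: the paper observes that $RS(D,k)\oplus\f{q}\cdot u_{(x-\delta)^{q-2}}$ has consecutive exponent set $\{-1,0,\dots,k-1\}$ and is therefore a generalized RS code (hence MDS with minimum distance $n-k$), whereas you unroll that MDS claim into the explicit root count for $(x-\delta)h(x)-a$ — which is exactly the computation that verifies the supercode is MDS.
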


The essence of this class of deep holes is the equality
\[
   {(x-\delta)}^{q-2}=\frac{1}{x-\delta}\,\,\mbox{on}\,\,D.
\]
It makes the exponent set of the defining Laurent monomials of the new code
\[
   C=RS(D,k)\oplus \f{q}\cdot u_{{(x-\delta)}^{q-2}}
\]
to form consecutive integers $\{-1,0,1,\cdots,k-1\}$. So the new code $C$
is indeed a generalized RS code with the scale vector $$v=(\frac{1}{x_1-\delta},\frac{1}{x_2-\delta},\cdots,\frac{1}{x_n-\delta}),$$ which is MDS. And hence, the vectors defined by ${(x-\delta)}^{q-2}$ is a deep hole of $RS(D,k)$. Note that the deep holes defined by polynomials of degree $k$ have the same essence, i.e.,
the exponent set $\{0,1,\cdots,k\}$ is consecutive. For exponents consisting of consecutive integers, the full rank submatrices are Vandermonde matrices whose determinant are easy to compute. However, there is no uniform formula of determinant for general exponent sets. This is one reason why it is difficult to find other new deep holes.

For a general evaluation set $D$,
Zhuang, Cheng and Li in~\cite{ZCL16} proved that for prime $q>2$ and $k\geq \lfloor\frac{q-1}{2}\rfloor$,   the
polynomials in the above proposition together with polynomials of degree $k$ define all the deep holes of RS code $RS(D,k)$. And it is very recently proved
by Kaipa \cite{Kaipa17}  that the
same statement is true for all odd prime power $q$ under the same assumption $k\geq \lfloor\frac{q-1}{2}\rfloor$. Note that these results are true
for all subsets $D$ of $\f{q}$ under the assumption that $|D|=n > k\geq \lfloor\frac{q-1}{2}\rfloor$ and $q$ odd.  One might conjecture
it holds for all odd prime power $q$ and for $D$ with $|D|>k>2$. In Section~\ref{newdeephole17}, we will show that this conjecture is false for $k<\lfloor\frac{q-1}{2}\rfloor$.

The aim of this paper is to construct new deep holes of  projective Reed-Solomon codes. This turns out to be significantly more difficult.

\begin{definition}
The projective Reed-Solomon (PRS) code is defined to be
\begin{align*}
    PRS(q+1,k)=\{(f(\alpha_1),\cdots,f(\alpha_q),c_{k-1}(f))\,|\,f(x)\in \f{q}, \deg(f(x))<k\}
\end{align*}
where $\f{q}=\{\alpha_1,\alpha_2,\cdots,\alpha_q=0\}$ and $c_{k-1}(f)$ is the coefficient of the term of degree $k-1$ of $f(x)$.
\end{definition}
In the literature, it is also called doubly extended RS code.
As above, we introduce more notations.
As the first $q$ coordinates of $PRS(q+1,k)$ corresponds to an affine part of the projective line, we represent a vector in $\f{q}^{q+1}$ by $(u_f,v)$ where $u_f\in\f{q}^q$ is defined by a polynomial $f$ of degree $\le q-1$:
\[
u_f=\left(f(\alpha_1),\cdots,f(\alpha_q)\right)
\]
and $v\in \f{q}$ is arbitrary. It is easy to see that $(u_f,v)\in \f{q}^{q+1}$ represents a codeword of $PRS(q+1,k)$ if and only if
\[
\deg(f)\le k-1\,\mbox{and}\,v=c_{k-1}(f).
\]

It turns out affine RS codes are affine parts of the PRS code. Although they look very similar, the PRS code is much more difficult.
For instance, the covering radius of the PRS code is already unknown.
A consequence~\cite{Arne94} of the MDS conjecture 
implies the following covering radius conjecture for projective Reed-Solomon codes.

\begin{conj}\label{cover}
For $2\leq k \leq q-2$, the covering radius of $PRS(q+1,k)$ is $q-k$ except when $q$ is even, $k=2, q-2$ in which case the
covering radius is $q-k+1$.
\end{conj}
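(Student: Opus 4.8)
The statement is conditional on the MDS conjecture, so the plan is to reduce the covering radius computation to an extendability question about arcs and then invoke the MDS conjecture. First I would pass to the dual code. Since $PRS(q+1,k)$ is an $[q+1,k,q-k+2]$ MDS code, its covering radius is either $d-1=q-k+1$ or $d-2=q-k$ by~\cite{GK98}, so it suffices to decide which of the two occurs. I would fix a parity check matrix $H$, i.e.\ a generator matrix of the dual code, which is (equivalent to) $PRS(q+1,q+1-k)$; its $q+1$ columns form a normal rational curve, that is, a $(q+1)$-arc in $\mathrm{PG}(q-k,q)$, meaning that any $q+1-k$ of the columns are linearly independent.

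The core of the argument is the syndrome reformulation of covering radius. For a syndrome $s\in\f{q}^{q+1-k}$, the minimum Hamming weight among words in the coset with syndrome $s$ equals the least number of columns of $H$ whose linear span contains $s$. Consequently $\rho(PRS(q+1,k))=d-1$ if and only if there exists some $s$ lying outside the span of every $(q-k)$-subset of columns. Such an $s$ is precisely a point that can be adjoined to the arc while keeping every $(q+1-k)$-subset independent. Thus I would establish the clean equivalence: $\rho(PRS(q+1,k))=d-1=q-k+1$ exactly when the normal rational curve arc in $\mathrm{PG}(q-k,q)$ extends to a $(q+2)$-arc, and $\rho=d-2=q-k$ otherwise.

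With this equivalence in hand I would apply the MDS conjecture in its arc form: for $2\le m\le q-2$, a $(q+2)$-arc in $\mathrm{PG}(m,q)$ exists only when $q$ is even and $m\in\{2,q-2\}$. Setting $m=q-k$ and using the hypothesis $2\le k\le q-2$ (so that $2\le m\le q-2$), this forces the two exceptional cases $k=q-2$ (where $m=2$) and $k=2$ (where $m=q-2$), both with $q$ even. To finish I would verify that extendability genuinely occurs there, rather than merely that a $(q+2)$-arc exists abstractly: for $k=q-2$ the arc is a conic in the plane $\mathrm{PG}(2,q)$, which for even $q$ is completed to a hyperoval by adjoining its nucleus, and the case $k=2$ follows by the dual construction. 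In every remaining case no $(q+2)$-arc exists at all, so the $(q+1)$-arc is non-extendable and $\rho=q-k=d-2$, as claimed.

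The main obstacle is conceptual rather than computational: this is only a conditional statement, since the MDS conjecture is open in general. The genuine mathematical content therefore lies entirely in the reduction itself --- the precise translation of coset minimum weights into arc-extendability and the confirmation of extendability in the two even-characteristic boundary cases --- and not in any unconditional determination of the covering radius.
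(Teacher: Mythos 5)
This statement is a conjecture in the paper, which offers no proof of its own but notes that it is a consequence of the MDS conjecture via D\"{u}r~\cite{Arne94}; your conditional reduction --- coset minimum weights equal the least number of parity-check columns whose span contains the syndrome, hence $\rho = d-1$ iff the normal rational curve in $PG(q-k,q)$ extends to a $(q+2)$-arc, then invoke the arc form of the MDS conjecture and verify extendability in the even-characteristic cases $k=q-2$ (conic plus nucleus) and $k=2$ (dual) --- is exactly that cited argument and is correct as far as it goes. Since the MDS conjecture is open, this does not establish the statement unconditionally, which is why the paper records it as Conjecture~\ref{cover} and treats $\rho(PRS(q+1,k))=q-k$ as a hypothesis in its theorems rather than a proven fact.
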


It is shown in~\cite{ZW16}\cite{Kaipa17} that
the deep hole Conjecture~\ref{deepholeconj} also implies the above covering radius conjecture.
Since the deep hole conjecture is known to be true if $k+1\leq p$ by \cite{ZCL16} or $q$ is an odd prime power and $k\geq (q-1)/2$
by \cite{Kaipa17}, we deduce that the covering radius conjecture is true if $k+1\leq p$ or if $q$ is an odd prime power and $k\geq (q-1)/2$.
There are many other cases that the covering radius conjecture is known to be true, see \cite{Kaipa17}, including the two simplest non-trivial
cases $k=q-2$ and $k=q-3$ which will be used later in this paper.
The next major problem is the deep hole problem for $PRS(q+1,k)$.

\begin{prob}\label{prob} Let $2\leq k \leq q-2$. Assume that the covering radius conjecture is true.
Construct all deep holes of $PRS(q+1,k)$.
\end{prob}

This problem is very difficult and wide open. As shown in  \cite{Kaipa17}, classifying deep holes of $PRS(q+1,2)$ for $q$ even
is equivalent to the difficult problem of classifying hyperovals of $PG(2, q)$. In~\cite{ZW16}, we obtained a class of deep holes of $PRS(q+1,k)$ which are defined by polynomials of degree $k$, if $\rho(PRS(q+1,k))=q-k$.

\begin{thm}[Deep holes of degree $k$~\cite{ZW16}]\label{deephole:degreek}
	Let $q$ be a prime power and let $k$ be an integer such that $2\leq k\leq q-2$ if $2\nmid q$ or $3\leq k\leq q-3$ if $2\mid q$. Suppose $\rho(PRS(q+1,k))=q-k$.
	The words $\{(u_f,v)\,|\,\deg(f)=k,\,v\in\f{q}\}$ are deep holes of $PRS(q+1,k)$.
\end{thm}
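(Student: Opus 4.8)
The plan is to sandwich the error distance $d\bigl((u_f,v),PRS(q+1,k)\bigr)$ between $q-k$ from both sides. The upper bound is immediate from the hypothesis: since $\rho(PRS(q+1,k))=q-k$, every word of $\f{q}^{q+1}$, and in particular $(u_f,v)$, lies within distance $q-k$ of the code. So the whole content of the theorem is the matching lower bound $d\bigl((u_f,v),PRS(q+1,k)\bigr)\ge q-k$, which I would establish by bounding the number of coordinates in which $(u_f,v)$ can agree with an arbitrary codeword.

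First I would fix an arbitrary codeword, which by definition has the form $(u_g,c_{k-1}(g))$ for some $g$ with $\deg g\le k-1$, and count agreements with $(u_f,v)$ separately on the affine part and on the last coordinate. On the affine part, a coordinate $\alpha_i$ contributes an agreement precisely when $f(\alpha_i)=g(\alpha_i)$, i.e.\ when $\alpha_i$ is a root of $f-g$. The key point is that $\deg f=k$ while $\deg g\le k-1$, so $f-g$ is a nonzero polynomial of degree exactly $k$; hence it has at most $k$ roots in $\f{q}$, giving at most $k$ affine agreements. The last coordinate contributes at most one further agreement (namely when $v=c_{k-1}(g)$). Thus $(u_f,v)$ agrees with any codeword in at most $k+1$ of the $q+1$ positions, so its Hamming distance to that codeword is at least $(q+1)-(k+1)=q-k$.

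Minimizing over all codewords then yields $d\bigl((u_f,v),PRS(q+1,k)\bigr)\ge q-k$, and combined with the covering-radius upper bound this forces equality, so $(u_f,v)$ is a deep hole for every $v\in\f{q}$. I do not expect a genuine obstacle here: the argument is a direct root-counting estimate, and the only point requiring a little care is the bookkeeping of the extra last coordinate, which can add at most one agreement and therefore never spoils the bound $k+1$. It is worth emphasizing that the hypothesis $\rho(PRS(q+1,k))=q-k$ is used \emph{only} for the upper bound; the lower bound $d\ge q-k$ holds unconditionally for words defined by degree-$k$ polynomials, which is exactly why these degree-$k$ words are the natural projective analogue of the affine deep holes.
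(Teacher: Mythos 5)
Your proof is correct. The paper does not reproduce a proof of this theorem (it is imported from \cite{ZW16}), but your argument is exactly the standard one and matches how the paper reasons elsewhere (e.g.\ in Theorem~\ref{dhk-ndh1}): the lower bound $d\bigl((u_f,v),PRS(q+1,k)\bigr)\ge q-k$ comes from the fact that $f-g$ is a nonzero polynomial of degree exactly $k<q$ for every $g$ with $\deg g\le k-1$, equivalently $d(u_f,RS(q,k))=q-k$, with the last coordinate contributing at most one further agreement; the matching upper bound is precisely the covering-radius hypothesis $\rho(PRS(q+1,k))=q-k$.
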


The reason why it is difficult to find deep holes of a given linear code is that it is generally \textbf{NP}-hard to compute the error distance of a given word. In this paper, we give a simple but very effective method, embedding the code $PRS(q+1,k)$ to MDS supercodes, to control the error distance so that two new classes of deep
holes of $PRS(q+1,k)$ are constructed explicitly.
\begin{thm}[Construction from irreducible quadratic polynomials]\label{main:irr2}
	Let $q$ be a prime power and let $k$ be an integer such that $2\leq k\leq q-2$ if $2\nmid q$ or $3\leq k\leq q-3$ if $2\mid q$. Let $C=PRS(q+1,k)$. Suppose $\rho(C)=q-k$. Let $p(x)=x^2+\alpha x+\beta\in \f{q}[x]$ be a monic irreducible quadratic polynomial. For any $a,b\in \f{q}$ which are not zero simultaneously, the rational function $\frac{a+bx}{p(x)}$ defines the following coset of deep holes of the PRS code $PRS(q+1,k)$:
	\[
	\left(\frac{a+b\alpha_1}{p(\alpha_1)},\frac{a+b\alpha_2}{p(\alpha_2)},\cdots,\frac{a+b\alpha_q}{p(\alpha_q)},0\right)+C.
	\]
\end{thm}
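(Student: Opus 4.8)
The plan is to prove the two matching bounds $d(w,C)\le q-k$ and $d(w,C)\ge q-k$ for the word
$$
w=\left(\frac{a+b\alpha_1}{p(\alpha_1)},\ldots,\frac{a+b\alpha_q}{p(\alpha_q)},0\right),
$$
after which the coset assertion is free, since the error distance is constant on $w+C$, i.e. $d(w+c_0,C)=d(w,C)$ for every $c_0\in C$. First observe that the coordinates of $w$ make sense precisely because $p$ is irreducible of degree $2$, so that $p(\alpha_i)\ne 0$ for every $\alpha_i\in\f{q}$; this is exactly where irreducibility enters. The upper bound $d(w,C)\le q-k$ is immediate from the hypothesis $\rho(C)=q-k$, so the entire content lies in the lower bound.

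For the lower bound I would fix an arbitrary codeword $c=(u_f,c_{k-1}(f))$ with $\deg f\le k-1$ and analyze the zero pattern of $w-c$. On the first $q$ coordinates one has $w_i-f(\alpha_i)=g(\alpha_i)/p(\alpha_i)$ with
$$
g(x)=(a+bx)-f(x)p(x),\qquad \deg g\le k+1,
$$
while the last coordinate of $w-c$ equals $-c_{k-1}(f)$. The core device (the ``MDS supercode'' embedding) is the diagonal, weight-preserving, invertible map $\phi$ that multiplies coordinate $i\le q$ by the nonzero scalar $p(\alpha_i)$ and fixes the last coordinate, so that $\phi(w-c)=(g(\alpha_1),\ldots,g(\alpha_q),-c_{k-1}(f))$.

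The key bookkeeping step is to recognize $\phi(w-c)$ as a codeword of $PRS(q+1,k+2)$. Since $p$ is monic of degree $2$, the coefficient of $x^{k+1}$ in $g=(a+bx)-fp$ equals $-c_{k-1}(f)$; hence the last coordinate of $\phi(w-c)$ is exactly $c_{k+1}(g)$, and $\phi(w-c)$ is the $PRS(q+1,k+2)$-word defined by $g$. I would then check $g\not\equiv 0$: if $f\equiv 0$ then $g=a+bx\ne 0$ because $(a,b)\ne(0,0)$, and if $f\not\equiv 0$ then $\deg(fp)\ge 2>\deg(a+bx)$, so the top term of $g$ survives. As $PRS(q+1,k+2)$ is again a projective Reed--Solomon code, it is MDS with minimum distance $(q+1)-(k+2)+1=q-k$ (the stated ranges of $k$ keep $k+2$ admissible in both parities), so every nonzero codeword has weight $\ge q-k$. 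Because $\phi$ preserves Hamming weight, $\mathrm{Wt}(w-c)\ge q-k$; minimizing over $c$ gives $d(w,C)\ge q-k$, and combined with the upper bound, $w$ is a deep hole.

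I expect the main obstacle to be pinning down the correct supercode and verifying the infinite-place coordinate on the nose: it is precisely because $p$ is monic of degree exactly $2$ that the degree of $g$ sits inside $PRS(q+1,k+2)$ (raising the dimension by $2$ and pushing the minimum distance down to the target value $q-k$) and that the last coordinate of $\phi(w-c)$ matches $c_{k+1}(g)$ with no extra scalar. A secondary point to handle carefully is the MDS property of $PRS(q+1,k+2)$ across both characteristics in the stated ranges, which is what transfers the weight bound and, notably, makes the lower bound independent of the covering radius conjecture.
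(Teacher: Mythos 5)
Your proposal is correct and is essentially the paper's own argument: the diagonal weight-preserving map $\phi$ you introduce is exactly the equivalence between the paper's MDS supercode $C'$ (the $[q+1,k+2]$ scaled PRS code with scale vector $\left(1/p(\alpha_1),\ldots,1/p(\alpha_q),1\right)$) and $PRS(q+1,k+2)$, and your per-codeword verification that $\phi(w-c)$ is the nonzero $PRS(q+1,k+2)$-word defined by $g=(a+bx)-fp$ reproduces the paper's chain $q-k=d(C')\le d(c',C)\le\rho(C)=q-k$. The bookkeeping (the $x^{k+1}$-coefficient of $g$ matching the last coordinate, the nonvanishing of $g$, and the MDS property of $PRS(q+1,k+2)$) is all accurate.
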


	\begin{thm}[Construction from irreducible cubic polynomials]\label{main:irr3}
	Let $q$ be a prime power. Let $C=PRS(q+1,q-3)$. Let $p(x)\in\f{q}[x]$ be a monic irreducible cubic polynomial. Besides the deep holes in the above theorems,
   there are new cosets of deep holes of $C$ of the form
		\[
		 \left(\frac{a+b\alpha_1+c\alpha_1^2}{p(\alpha_1)},\frac{a+b\alpha_2+c\alpha_2^2}{p(\alpha_2)},\cdots,\frac{a+b\alpha_q+c\alpha_q^2}{p(\alpha_q)},0\right)+C
		\]
	for some non-zero polynomial $a+bx+cx^2\in\f{q}[x]$.
	\end{thm}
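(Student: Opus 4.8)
The plan is to show that the error distance $d(u,C)$ attains the covering radius, which for $C=PRS(q+1,q-3)$ is known to be $\rho(C)=q-(q-3)=3$ (the case $k=q-3$ recalled above). Since $\rho(C)=3$ bounds every error distance from above, it suffices to establish the lower bound $d(u,C)\ge 3$ for a suitably chosen numerator $N(x)=a+bx+cx^2$; equivalently, that no codeword of $C$ agrees with $u$ in $k+2=q-1$ or more coordinates. Writing a codeword as $(u_f,c_{k-1}(f))$ with $\deg f\le k-1$ and clearing the denominator $p$, I would observe that this codeword agrees with $u$ at the affine point $\alpha_j$ exactly when $G(\alpha_j)=0$, where $G:=N-fp$, and at infinity exactly when $\deg G\le k+1$ (since $c_{k-1}(f)$ is the top coefficient of $G$). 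Here $\deg G\le k+2$, $G\not\equiv 0$ because $p\nmid N$, and crucially $G\equiv N\pmod p$, so the entire coset $u+C$ is governed by the single residue class of $N$ modulo $p$.

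First I would count over-agreements by degree. The number of agreements equals the number of distinct roots of $G$ in $\f{q}$ plus the infinity term, and a root count caps this at $q-1=k+2$; moreover equality forces $G$ to split completely over $\f{q}$ into distinct linear factors, omitting exactly one or two points of $\f{q}$. Thus $u$ fails to be a deep hole only if some $G$ of the form $\lambda\,(x^q-x)/(x-\gamma)$ (family (A), degree $q-1$) or $\lambda\,(x^q-x)/((x-\gamma_1)(x-\gamma_2))$ (family (B), degree $q-2$) satisfies $G\equiv N\pmod p$. This is precisely the new feature of the cubic case: for an irreducible quadratic $p$ (Theorem~\ref{main:irr2}) one has $\deg G\le k+1$, the root count never reaches $k+2$, the enveloping rational-function code is MDS for \emph{every} $N$, and all $N$ give deep holes; raising $\deg p$ to $3$ admits the two extra families, which is why here only some $N$ can work.

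Next I would pass to $\f{q^3}$. Let $\theta$ be a root of $p$ and set $\nu:=N(\theta)$; since $N\mapsto N(\theta)$ is a bijection from nonzero polynomials of degree $\le 2$ onto $\f{q^3}^{\ast}$, choosing $N$ is the same as choosing $\nu$. The congruence $G\equiv N\pmod p$ reads $G(\theta)=\nu$, so family (A) forbids $\nu$ from being any scalar multiple of $(\theta^q-\theta)/(\theta-\gamma)$, and family (B) from being any scalar multiple of $(\theta^q-\theta)/((\theta-\gamma_1)(\theta-\gamma_2))$. Viewing these as classes in $\mathbb{P}(\f{q^3})=\mathbb{P}^2(\f{q})$, the forbidden set has at most $q+\binom{q}{2}=\tfrac{q(q+1)}{2}$ points, strictly fewer than $|\mathbb{P}^2(\f{q})|=q^2+q+1$. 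Hence some class $\bar\nu$, i.e. some nonzero $N$ up to scaling, avoids every forbidden point; for that $N$ no over-agreement occurs, $d(u,C)\ge 3$, and $u$ is a deep hole. A trace computation (expanding $\mathrm{Tr}(\mu/(\theta-\gamma))$ and using partial fractions in the two-point case) makes the forbidden classes explicit and exhibits family (A) on a conic and family (B) on its secant lines.

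The hard part will be family (B): the weight-$2$ codewords with $\deg G=k+1$, which have no analogue in the quadratic case and push the statement from universal to merely existential. Handling it cleanly requires the partial-fraction/trace identification above together with the projective count; and the sharper assertion that asymptotically all $N$ work — the ``redundancy four'' count advertised in the abstract — needs the finer geometry of the conic and its secants rather than the crude bound $\tfrac{q(q+1)}{2}$. Finally I would confirm that these cosets are genuinely new — distinct from the degree-$k$ cosets of Theorem~\ref{deephole:degreek} and the irreducible-quadratic cosets of Theorem~\ref{main:irr2} — by comparing the pole behaviour of $N/p$ at the irreducible cubic $p$, which cannot be matched by a polynomial or by a quadratic-denominator rational function modulo $C$.
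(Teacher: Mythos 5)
Your reduction and the existence count are sound and essentially reproduce the paper's first step: the paper embeds $C$ into the MDS supercode $C'$ of parameters $[q+1,k+3]$ generated by $\{x^i/p(x)\}$ to get $2=q-k-1\le d((u_{N/p},0),C)\le q-k=3$, and then characterizes failure of deep-holeness by exactly your two splitting families (A) and (B); your bound $q+\binom{q}{2}=\frac{q(q+1)}{2}<q^2+q+1$ on forbidden classes in $\mathbb{P}^2(\f{q})$ is the same count that gives the paper its exact total of $(q-1)\frac{q^2+q+2}{2}$ deep-hole cosets of this shape.

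The genuine gap is in your final step. ``Pole behaviour at $p$'' is not an invariant of the coset $u+C$: two words lie in the same coset precisely when their defining functions agree \emph{as functions on $\f{q}$} (up to a polynomial of degree $\le k-1$ and the last coordinate), and on $\f{q}$ the polynomial $x^q-x$ vanishes identically. Concretely, $\frac{a+bx+cx^2}{p(x)}=\frac{d+ex}{q(x)}+f(x)$ for all $x\in\f{q}$ amounts to $x^q-x$ dividing $(a+bx+cx^2)q(x)-(d+ex)p(x)-f(x)p(x)q(x)$, which has degree up to $q+1$ when $\deg f\le k-2=q-5$, so the divisibility can hold nontrivially. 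The paper in fact proves the opposite of what you assert: using a distribution lemma for irreducible cubic polynomials (proved via L-functions of character sums), \emph{every} quadratic-denominator deep hole of Theorem~\ref{main:irr2} and \emph{every} degree-$k$ deep hole of Theorem~\ref{deephole:degreek} coincides with some cubic-denominator coset. Newness must therefore be established by counting rather than by pole type: for a fixed $p$, at most $\frac{q^2-q}{2}(q-1)$ of the $(q-1)\frac{q^2+q+2}{2}$ cubic cosets can coincide with quadratic ones (one residue class $a+bx+cx^2\equiv l(x^q-x)/q(x)\bmod p(x)$ per pair consisting of a monic irreducible quadratic and an $l\in\f{q}^*$), leaving $(q-1)(q+1)$, and subtracting the $q(q-1)$ degree-$k$ cosets still leaves at least $q-1$ genuinely new cosets. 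Without this count, or an equivalent replacement, your argument does not establish the ``new'' part of the statement.
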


   In particular, for the case $q$ is odd and $k=q-2$ the covering radius of the code $PRS(q+1,q-2)$ is known to be $q-k=2$. Kaipa~\cite{Kaipa17} gave a classification of the deep holes of $PRS(q+1,q-2)$ using the $PGL(2,q)$ group action on the projective space $PG(2,q)$ of the syndromes. Under the group action, the syndrome space $PG(2,q)$ is divided into $3$ orbits: one is the orbit of syndromes of non-deep-holes, the other two are orbits of syndromes of deep holes. But he did not give the explicit deep holes. In this paper, we will show that if $p(x)$ runs over all the monic irreducible quadratic polynomials, the explicit deep holes in Theorem~\ref{main:irr2} form all the deep holes of $PRS(q+1,q-2)$. The difficulty here is to distinguish the deep holes for different irreducible quadratic polynomials. To overcome the difficulty, we construct a hypergraph related to the distribution of irreducible quadratic polynomials and investigate the properties of the hypergraph in detail.
  	\begin{thm}\label{main:k=q-2}
  	Let $q$ be an odd prime power. When $p(x)$ runs over all the monic irreducible polynomials of degree $2$ over $\f{q}$, the deep holes in Theorem~\ref{main:irr2} are all the deep holes of $PRS(q+1,q-2)$.
  \end{thm}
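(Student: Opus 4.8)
The plan is to pass to the syndrome space of $C=PRS(q+1,q-2)$, identify its deep holes with the points of $PG(2,q)$ off a fixed conic, and then match the construction of Theorem~\ref{main:irr2} with the \emph{external lines} of that conic. Since $k=q-2$, a word $(u_f,v)$ lies in $C$ iff $c_{q-1}(f)=c_{q-2}(f)=0$ and $v=c_{q-3}(f)$, so the coset of $(u_f,v)$ is determined by the syndrome $\sigma=(c_{q-1}(f),\,c_{q-2}(f),\,v-c_{q-3}(f))\in\f q^3$, and $d((u_f,v),C)$ depends only on $\sigma$. As $d(C)=4$ and $\rho(C)=q-k=2$, a word is a deep hole iff its coset has minimum weight $2$, i.e. iff $\sigma\neq 0$ and $\sigma$ is not the syndrome of a weight-one word. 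Using $\sum_{c\in\f q}c^m=0$ for $1\le m\le q-2$ and $=-1$ for $m=q-1$, interpolation gives that the top coefficients are recovered by the power sums of the interpolated function; in particular the weight-one syndromes are, projectively, exactly the $q+1$ points $[1:c:\pm c^2]$ ($c\in\f q$) together with $[0:0:1]$, and these are the points of a nondegenerate conic $\mathcal C$. Hence the deep-hole cosets correspond precisely to the $q^2$ points of $PG(2,q)\setminus\mathcal C$ (Kaipa's two orbits, the external and internal points of $\mathcal C$).

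Next I would compute the syndrome attached to the construction. For a monic irreducible $p(x)=x^2+\alpha x+\beta$ let $\theta,\theta^q\in\f{q^2}$ be its roots and $\Delta=\alpha^2-4\beta$ its (nonsquare) discriminant. The word $w_{p,a,b}$ is interpolated by $g(x)=(a+bx)/p(x)$, so by the previous paragraph $\sigma(w_{p,a,b})$ is a fixed $\pm$-combination of $a\,(M_0,M_1,M_2)+b\,(M_1,M_2,M_3)$, where $M_t=\sum_{c\in\f q}c^t/p(c)$. Using the partial fraction $1/p=(\theta-\theta^q)^{-1}\bigl(1/(x-\theta)-1/(x-\theta^q)\bigr)$ together with the identity $\sum_{c\in\f q}1/(c-\theta)=1/(\theta^q-\theta)$ and polynomial division, one obtains the clean closed form $M_t=-\mathrm{Tr}(\theta^t)/\Delta$, where $\mathrm{Tr}$ is the trace of $\f{q^2}/\f q$. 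Writing $T_i=\mathrm{Tr}(\theta^i)$ (so $T_0=2,\,T_1=-\alpha,\,T_2=\alpha^2-2\beta,\,T_3=-\alpha^3+3\alpha\beta$), the syndromes $\{\sigma(w_{p,a,b})\}_{(a,b)}$ span the plane generated by two vectors with Hankel entries $T_i$; computing the cross product of these two generators yields a normal vector proportional to $(\beta,\alpha,1)$ (up to sign). Thus $\{w_{p,a,b}\}_{(a:b)}$ traces out exactly the projective line $\ell_p:\ \beta X+\alpha Y\pm Z=0$. A point of $\ell_p$ lies on $\mathcal C$ iff $s^2+\alpha s+\beta=0$ has a root $s\in\f q$, which is impossible since $p$ is irreducible; so $\ell_p$ is external to $\mathcal C$. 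Moreover $T_0T_2-T_1^2=\Delta\neq0$, so the two generators are independent and all $q+1$ points of $\ell_p$ (equivalently, all $q^2$ syndromes in the plane they span) are genuinely realized.

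Distinctness is then immediate: the line $\ell_p$ recovers $(\beta:\alpha:1)$, hence $(\alpha,\beta)$, hence $p$, so $p\mapsto\ell_p$ is injective. Conversely every external line of $\mathcal C$ has equation $\beta X+\alpha Y\pm Z=0$ with $s^2+\alpha s+\beta$ irreducible, so $p\mapsto\ell_p$ is a bijection between the $q(q-1)/2$ monic irreducible quadratics and the $q(q-1)/2$ external lines of $\mathcal C$. It then remains only to show that these external lines \emph{cover} every point of $PG(2,q)\setminus\mathcal C$; granting this, every deep-hole syndrome lies on some $\ell_p$ and is therefore realized by a suitable $w_{p,a,b}$, and since each $\ell_p$ avoids $\mathcal C$ no non-deep-hole is ever produced, which is exactly the assertion of the theorem.

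The main obstacle is precisely this covering statement, because the lines overlap heavily: any two $\ell_p,\ell_{p'}$ meet in a single point of $PG(2,q)$, yet their union must reconstitute all $q^2$ non-conic points. I would encode the incidences between the $q^2$ deep holes and the $q(q-1)/2$ external lines as a hypergraph whose vertices are the non-conic points and whose hyperedges are the $\ell_p$, and then, by a counting argument on this hypergraph governed by the distribution of irreducible quadratics, show that every external point of $\mathcal C$ lies on $(q-1)/2$ of the lines and every internal point on $(q+1)/2$ of them — both positive for $q\ge 5$ — so that $\bigcup_p\ell_p=PG(2,q)\setminus\mathcal C$. This combinatorial analysis, which is what distinguishes and reassembles the deep holes coming from different irreducible quadratics, is where the real work lies; the syndrome computation above is exactly what reduces Theorem~\ref{main:k=q-2} to it.
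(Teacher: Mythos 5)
Your proposal is correct in substance, and it takes a genuinely different route from the paper. The paper never passes to the geometry of the conic: it works directly with the rational functions $\frac{a+bx}{p(x)}$, forms a hypergraph whose vertices are these functions modulo $\f{q}^*$-scaling and whose edges are the sets $DH(p(x))$, shows any two edges meet in exactly one vertex, computes every vertex degree to be $\frac{q-1}{2}$ or $\frac{q+1}{2}$ by an explicit discriminant/quadratic-residue count (Lemma~\ref{graph:2degrees}), shows each edge splits evenly between the two degrees (Lemma~\ref{distributionofirreducible}), deduces $|V|=q^2$ by double counting, and finally matches $|V|(q-1)=q^2(q-1)$ against the enumeration of deep-hole cosets in Theorem~\ref{numberofdeepholes:k=q-2}. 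Your route instead computes syndromes explicitly: the formula $M_t=-\mathrm{Tr}(\theta^t)/\Delta$ checks out, the two generators $(T_0,T_1,T_2)$ and $(T_1,T_2,T_3)$ are independent since $T_0T_2-T_1^2=\Delta\neq 0$, and their cross product is indeed $\Delta\,(\beta,\alpha,1)$, so each $DH(p)$ is exactly the external line $\beta X+\alpha Y+Z=0$ of the conic of weight-one syndromes; injectivity of $p\mapsto\ell_p$ plus the count $q(q-1)/2$ on both sides gives a bijection with the external lines. The main thing to point out is that the step you single out as the ``main obstacle'' is not one: once the bijection with the set of \emph{all} external lines is in hand, the covering of $PG(2,q)\setminus\mathcal{C}$ is classical finite geometry --- for $q$ odd, through an external (resp.\ internal) point of a conic pass exactly $\frac{q-1}{2}$ (resp.\ $\frac{q+1}{2}$) external lines, both positive --- so the hypergraph counting you propose to redo would simply reprove Lemma~\ref{graph:2degrees} in geometric language (the paper's hypergraph \emph{is} the incidence structure of non-conic points versus external lines). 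What your approach buys is a conceptual explanation of the degrees $\frac{q\pm 1}{2}$ and a proof that does not need the total count $q^2(q-1)$ as an external input, since the identification of deep-hole cosets with the $q^2$ non-conic syndrome points delivers it; what the paper's approach buys is a self-contained argument requiring no explicit syndrome evaluation and no facts about conics beyond what it proves by hand.
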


The rest of this paper is organized as follows. In Section~\ref{newdeephole17}, we give a new class of deep holes of affine RS code $RS(D,k)$ besides the deep holes defined by polynomials of degree $k$ and
the deep holes in Proposition~\ref{newdeephole12}. In Section~\ref{PRSCode}, we prove Theorem~\ref{main:irr2}. We also investigate the relationship between the deep holes in Theorem~\ref{deephole:degreek} and the deep holes in Theorem~\ref{main:irr2}. In Section~\ref{section:k=q-2}, we prove Theorem~\ref{main:k=q-2}. In Section~\ref{section:k=q-3}, we prove Theorem~\ref{main:irr3} and show that the deep holes in Theorems~\ref{deephole:degreek} and~\ref{main:irr2} ($k=q-3$) are contained in the deep holes in Theorem~\ref{main:irr3} where the L-function theory of exponential sums is applied.

\section{New Deep Holes of Generalized RS Codes}\label{newdeephole17}
In this section, we show that if the evaluation set of the generalized RS Code is zero-sum-free (a combinatorial structure, see Definition~\ref{def:sumfreeset}), then new deep holes of the RS code are obtained.

We first review some results about deep holes of RS codes $RS(D,k)$.
Recall that all deep holes of $RS(q, k)$ were conjectured to be defined by polynomials of degree $k$ (Conjecture~\ref{deepholeconj}).
The same conjecture is false for general evaluation set $D$, see~\cite{WH12,ZFL13}.
To support Conjecture~\ref{deepholeconj}, the easiest case is to determine whether a polynomial of degree $k+1$ defines a deep hole of $RS(q, k)$. Li and Wan~\cite{LW08} interpreted it as a subset sum problem (SSP).
\begin{definition}
Let $G$ be a finite abelian group and $D$ a subset of $G$. The $k$-SSP over $D$ consists in determining for any $g\in G$, if there is a subset $S\subset D$ such that $|S|=k$ and $\sum_{s\in S}s=g$. And denote the number of solutions by
\[
  N(k,g,D)=\left|\left\{S\subset D\,\big|\,|S|=k,\,\sum_{s\in S}s=g\right\}\right|.
\]
\end{definition}

\begin{lem}[\cite{LW08}]\label{SSPDeephole}
For any polynomial $f_{\leq k-1}(x)$ of degree $\leq k-1$, the polynomial $x^{k+1}-ax^k+f_{\leq k-1}(x)$ does not define a deep hole of the RS code $RS(D,k)$ if and only if $N(k+1,a,D)>0$.
\end{lem}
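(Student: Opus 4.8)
The plan is to reduce the deep-hole question to counting factorizations of a monic polynomial of degree $k+1$ over $D$, and then to match those factorizations with subset-sum solutions via Vieta's formulas. Write $f(x) = x^{k+1} - ax^k + f_{\leq k-1}(x)$ and set $n = |D|$. Since $\deg(f) = k+1$, the inequality $n - \deg(f) \leq d(u_f, RS(D,k)) \leq n-k$ recalled above forces the error distance into the two-element set $\{n-k-1,\,n-k\}$. As the covering radius of $RS(D,k)$ is $n-k$ by Proposition~\ref{RSC}, the word $u_f$ fails to be a deep hole if and only if $d(u_f, RS(D,k)) = n-k-1$, that is, if and only if there is a codeword agreeing with $u_f$ in at least $k+1$ of the $n$ coordinates.

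Next I would translate coordinate agreement into root counting. A codeword is $u_g$ for some $g$ with $\deg(g) \leq k-1$, and $u_f$ and $u_g$ agree in the coordinate indexed by $x_i$ precisely when $x_i$ is a root of $f - g$. The key observation is that $f - g$ is \emph{monic of degree exactly $k+1$}: because $\deg(g) \leq k-1$, subtracting $g$ leaves the top two coefficients of $f$ untouched, so the leading coefficient is $1$ and the coefficient of $x^k$ is $-a$. Hence $f - g$ has at most $k+1$ roots, and agreement in at least $k+1$ coordinates is equivalent to $f - g$ having \emph{exactly} $k+1$ distinct roots, all lying in $D$.

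The heart of the argument is then the correspondence between such factorizations and subset-sum solutions. If $f - g = \prod_{s \in S}(x-s)$ with $S \subseteq D$ and $|S| = k+1$, then Vieta's formula identifies the coefficient of $x^k$ as $-\sum_{s \in S} s$; comparing with $-a$ gives $\sum_{s \in S} s = a$, so $S$ is one of the sets counted by $N(k+1, a, D)$. Conversely, given any $S \subseteq D$ with $|S| = k+1$ and $\sum_{s\in S} s = a$, the polynomial $h(x) = \prod_{s\in S}(x-s)$ is monic of degree $k+1$ with $x^k$-coefficient $-a$, so $g := f - h$ has degree $\leq k-1$ and defines a codeword $u_g$ agreeing with $u_f$ on the $k+1$ points of $S$. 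This establishes both directions of the biconditional, yielding that $u_f$ is not a deep hole $\iff N(k+1, a, D) > 0$.

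I expect the only delicate point to be the bookkeeping in the degree argument: verifying that the top two coefficients of $f$ genuinely survive subtraction of an arbitrary $g$ of degree $\leq k-1$, so that the $x^k$-coefficient is pinned to $-a$ and Vieta's relation yields exactly the sum condition $\sum_{s\in S} s = a$ rather than a weaker constraint. Everything else is a direct dictionary between Hamming agreement, polynomial roots, and the subset-sum count.
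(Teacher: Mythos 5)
Your proof is correct and is essentially the standard argument from the cited source \cite{LW08}: the paper itself states this lemma without proof, and the route you take --- pinning the error distance to $\{n-k-1,\,n-k\}$ via the degree bound, translating agreement in $k+1$ coordinates into $f-g$ splitting over a $(k+1)$-subset of $D$, and reading off $\sum_{s\in S}s=a$ from the $x^k$-coefficient via Vieta --- is exactly the intended one. The degree bookkeeping you flag as the delicate point is handled correctly, since $\deg(g)\leq k-1$ indeed leaves the top two coefficients of $f$ intact.
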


For general $D$, solving $k$-SSP is an \textbf{NP}-hard problem. But for the case $D=G=\f{q}$, the $k$-SSP is easy.
\begin{prop}[\cite{LW08}]\label{ssp}
If $D=G=\f{q}$, then for any $g\in \f{q}$ and for any $1\le k\le q-1$ if $2\nmid q$ or for any $3\le k\le q-3$ if $2\mid q$, the $k$-SSP over $D$ always has solutions.
\end{prop}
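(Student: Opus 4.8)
The plan is to reduce the statement to a clean character-sum formula for $N(k,g,\f{q})$ and then establish positivity by elementary binomial estimates. First I would fix a nontrivial additive character $\psi$ of $\f{q}$ and use orthogonality, $\frac1q\sum_{b\in\f{q}}\psi(bt)=\mathbbm{1}[t=0]$, to write
\[
N(k,g,\f{q})=\frac1q\sum_{b\in\f{q}}\psi(-bg)\,[z^k]\prod_{a\in\f{q}}\bigl(1+z\psi(ba)\bigr),
\]
since $\sum_{|S|=k}\prod_{s\in S}\psi(bs)$ is exactly the coefficient of $z^k$ in $\prod_{a}(1+z\psi(ba))$. The $b=0$ term contributes $\binom{q}{k}$. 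For $b\ne 0$ the substitution $a\mapsto ba$ shows the inner coefficient is independent of $b$; writing $\psi(c)=\zeta^{\mathrm{Tr}(c)}$ for a primitive $p$-th root of unity $\zeta$ and using that $\mathrm{Tr}\colon\f{q}\to\f{p}$ hits each element $q/p$ times together with the identity $\prod_{\omega^p=1}(1+z\omega)=1-(-z)^p$, I get
\[
\prod_{a\in\f{q}}\bigl(1+z\psi(ba)\bigr)=\bigl(1-(-z)^p\bigr)^{q/p},\qquad b\ne 0.
\]

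Denote by $A_k$ the coefficient $[z^k](1-(-z)^p)^{q/p}$, which vanishes unless $p\mid k$; for $p$ odd it equals $\binom{q/p}{k/p}\ge 0$, and for $p=2$ it equals $(-1)^{k/2}\binom{q/2}{k/2}$. Splitting off $g=0$ from $g\ne 0$ via $\sum_{b\ne 0}\psi(-bg)=q-1$ or $-1$, I obtain
\[
N(k,0,\f{q})=\frac1q\Bigl[\binom{q}{k}+(q-1)A_k\Bigr],\qquad N(k,g,\f{q})=\frac1q\Bigl[\binom{q}{k}-A_k\Bigr]\ (g\ne 0).
\]
It then remains to check that each expression is strictly positive on the stated range of $k$. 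When $p$ is odd, $A_k\ge 0$ makes the $g=0$ case immediate, and for $g\ne 0$ one only needs $\binom{q}{k}>A_k=\binom{q/p}{k/p}$, which follows from the injection sending an $m$-subset of a $(q/p)$-set to the union of the corresponding $m$ blocks in a partition of a $q$-set into blocks of size $p$; this yields $\binom{pN}{pm}\ge\binom{N}{m}$ with strict inequality whenever $1\le m\le N-1$. The case $p=2$ splits further according to the sign $(-1)^{k/2}$, and three of the four subcases ($k$ odd, and $k=2m$ with $m$ even) are again handled directly by $\binom{q}{k}>0$ or by the same block injection.

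The main obstacle is the single remaining subcase: $p=2$, $g=0$, and $k=2m$ with $m$ odd, where $A_k=-\binom{q/2}{m}$ and positivity of $N(k,0,\f{q})$ demands the sharper inequality $\binom{q}{2m}>(q-1)\binom{q/2}{m}$. Here, writing $N=q/2$, I would argue via the Vandermonde identity $\binom{2N}{2m}=\sum_j\binom{N}{j}\binom{N}{2m-j}\ge\binom{N}{m}^2$ together with the bound $\binom{N}{m}\ge\binom{N}{2}\ge 2N-1$, valid for $2\le m\le N-2$ in the only regime where this subcase is non-vacuous (odd $m$ with $3\le k\le q-3$ forces $q\ge 16$, hence $N\ge 8$); the extra Vandermonde terms make the inequality strict. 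Finally, the constraint $3\le k\le q-3$ is exactly what is needed: at $k=2$ (i.e.\ $m=1$) the inequality degenerates to the equality $\binom{q}{2}=(q-1)\binom{q/2}{1}$, reflecting that no $2$-subset of $\f{q}$ can sum to $0$ in characteristic $2$ (since $a+b=0$ forces $b=a$), and symmetrically at $k=q-2$; these boundary failures are precisely what force the excluded values when $2\mid q$.
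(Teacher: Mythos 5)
The paper does not prove this proposition; it is quoted from \cite{LW08}, where the same exact character-sum formula for $N(k,g,\f{q})$ is derived and positivity is checked case by case. Your derivation is correct --- the orthogonality reduction, the identity $\prod_{a\in\f{q}}(1+z\psi(ba))=(1-(-z)^p)^{q/p}$ for $b\neq 0$, the resulting closed form for $N(k,g,\f{q})$, and the positivity analysis (including the one delicate subcase $p=2$, $g=0$, $k=2m$ with $m$ odd, and the boundary equalities at $k=2,\,q-2$ explaining the excluded values) all hold --- so this is essentially the same approach as the cited source.
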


The above proposition is applied to decide deep holes of $RS(q, k)$.
\begin{prop}[\cite{LW08}]
For any $2\le k\le q-4$, the vectors defined by polynomials of degree $k+1$ are not deep holes of $RS(q, k)$.
\end{prop}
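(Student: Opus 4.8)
The plan is to reduce the claim to a direct application of the subset-sum criterion in Lemma~\ref{SSPDeephole} together with the solvability result in Proposition~\ref{ssp}; the only genuine work is the bookkeeping of the parameter ranges.

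First I would normalize the polynomial. The error distance $d(u_f, RS(q,k))$ depends only on the coset $u_f + RS(q,k)$, and it is invariant under multiplication of $f$ by a nonzero scalar, since scaling all coordinates by a common nonzero constant preserves the Hamming distance and fixes $RS(q,k)$ setwise. Given any $f$ with $\deg f = k+1$, write $f = c_{k+1}x^{k+1} + c_k x^k + f_{\le k-1}(x)$ with $c_{k+1}\neq 0$. Dividing by $c_{k+1}$ and discarding the degree-$\le k-1$ part (which, as $\deg\le k-1$, defines a codeword of $RS(q,k)$ and hence leaves the coset unchanged) reduces the question to a representative of the form $x^{k+1} - a x^k + f_{\le k-1}(x)$ with $a = -c_k/c_{k+1}$. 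Thus it suffices to show that no such polynomial defines a deep hole.

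Next I would invoke Lemma~\ref{SSPDeephole} with $D = \f{q}$: the polynomial $x^{k+1} - a x^k + f_{\le k-1}(x)$ fails to define a deep hole of $RS(q,k)$ precisely when $N(k+1, a, \f{q}) > 0$, that is, when the $(k+1)$-subset-sum problem over $\f{q}$ admits a solution summing to $a$. It then remains to verify $N(k+1, a, \f{q}) > 0$ for every $a \in \f{q}$.

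Finally, I would apply Proposition~\ref{ssp} with subset-sum size parameter equal to $k+1$. The range conditions match exactly: when $q$ is odd the hypothesis $2 \le k \le q-4$ yields $3 \le k+1 \le q-3 \subset [1, q-1]$, and when $q$ is even it is equivalent to $3 \le k+1 \le q-3$, which is precisely the requirement of Proposition~\ref{ssp} in the even case. Hence the $(k+1)$-SSP over $\f{q}$ has a solution for every target $a$, so $N(k+1, a, \f{q}) > 0$, and the normalized polynomial is not a deep hole. Since every degree-$(k+1)$ polynomial reduces to such a representative, no vector defined by a degree-$(k+1)$ polynomial is a deep hole of $RS(q,k)$. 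The one point demanding care is this range matching, especially in even characteristic, where the restricted range of Proposition~\ref{ssp} forces the hypothesis $2 \le k \le q-4$ to be tight.
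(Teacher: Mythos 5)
Your proposal is correct and follows exactly the route the paper intends: the paper states this proposition as an immediate consequence of Lemma~\ref{SSPDeephole} (the subset-sum criterion) and Proposition~\ref{ssp} (solvability of the $(k+1)$-SSP over $\f{q}$), which is precisely your reduction, and your normalization to the representative $x^{k+1}-ax^k+f_{\le k-1}(x)$ and your range bookkeeping (including the tightness of $2\le k\le q-4$ in even characteristic) are both sound.
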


For degree $k+2$ and odd $q$, a complete classification of deep holes for $RS(q,k)$ is obtained by Zhang, Fu and Liao~\cite{ZFL13}, and
detailed error distances are computed by Li and Zhu \cite{LZ15}.
\begin{prop}[\cite{ZFL13,LZ15}]\label{deepholek+2}
For any $2\le k\le q-3$, the vectors defined by polynomials of degree $k+2$ are not deep holes of $RS(q, k)$.
\end{prop}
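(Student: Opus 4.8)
The plan is to show that every word $u_f$ with $\deg f=k+2$ lies within Hamming distance $q-k-1$ of the code, hence strictly inside the covering radius $q-k$, so it is not a deep hole. Concretely, I will produce a polynomial $g$ with $\deg g\le k-1$ such that $f$ and $g$ agree on at least $k+1$ points of $\f{q}$; since $d(u_f,u_g)=q-|\{x\in\f{q}:(f-g)(x)=0\}|$, this gives $d(u_f,RS(q,k))\le q-(k+1)=q-k-1$. Adding a polynomial of degree $\le k-1$ to $f$ and scaling $f$ by a nonzero constant both preserve the degree $k+2$ and the distance to $RS(q,k)$, so I may assume $f$ is monic. Writing $h=f-g$, the condition $\deg g\le k-1$ forces the top three coefficients of $h$ to equal those of $f$; thus $h$ ranges over all monic polynomials of degree $k+2$ with prescribed first two elementary symmetric functions $e_1(h)=a$, $e_2(h)=b$ of its roots (determined by $f$), and the task is to choose the remaining coefficients so that $h$ has at least $k+1$ distinct roots in $\f{q}$.

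The key simplification is to peel off one root. I look for $h=(x-t)\prod_{i=1}^{k+1}(x-s_i)$ with $s_1,\dots,s_{k+1}$ pairwise distinct and $t\in\f{q}$ arbitrary; such an $h$ automatically has at least $k+1$ distinct roots. Matching $e_1$ forces $t=a-\sum_i s_i$, which lies in $\f{q}$ for free, so the single remaining equation is $e_2=b$. Since $q$ is odd, eliminating $t$ and rewriting $e_2$ through the power sums $P=\sum_i s_i$ and $Q=\sum_i s_i^2$ converts $e_2=b$ into the single quadric
\[
(P-a)^2+Q=a^2-2b.
\]
Thus the whole proposition reduces to: for every constant $c=a^2-2b$ there exist distinct $s_1,\dots,s_{k+1}\in\f{q}$ with $\big(\sum_i s_i-a\big)^2+\sum_i s_i^2=c$. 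This is exactly where degree $k+2$ is harder than degree $k+1$: the latter needs only a single \emph{linear} (subset-sum) condition, settled by Proposition~\ref{ssp}, whereas the extra fixed coefficient produces a genuinely \emph{quadratic} condition, and solving it with distinct coordinates is the crux.

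To solve the quadric I would fix $s_3,\dots,s_{k+1}$ to be any $k-1$ distinct elements (there is room since $k+2\le q$), set $S'=\sum_{i\ge3}s_i$, $Q'=\sum_{i\ge3}s_i^2$, and determine the last two roots through their sum $u=s_1+s_2$. For each $u$ the pair $(s_1,s_2)$ exists, lies in $\f{q}$, and is distinct precisely when the discriminant $\Delta(u)$ is a nonzero square; a short computation shows $\Delta$ is a one-variable quadratic in $u$ with leading coefficient $-3$. The heart of the argument, and the main obstacle, is to show this quadratic attains a nonzero square value at some $u$ for which moreover $s_1,s_2\notin\{s_3,\dots,s_{k+1}\}$ (distinctness of $t$ from the $s_i$ is not needed, since a coincidence merely produces a double root while leaving $k+1$ distinct roots). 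Because the image of a quadratic with leading coefficient $-3$ has the form $\{C'-3z:\ z\text{ is a square in }\f{q}\}$, of size $(q+1)/2$, a counting comparison with the $(q-1)/2$ nonzero squares shows the two sets intersect except for at most one exceptional shift $C'$; that single bad case is removed by re-choosing the free roots $s_3,\dots,s_{k+1}$ (which moves $C'$), with $k\le q-3$ guaranteeing enough distinct elements remain available to avoid the finitely many collisions. Finally I would treat the degenerate characteristics separately: the elimination of $t$ used $\tfrac12$, so $q$ odd is the natural setting here (consistent with the framing preceding the statement), and $p=3$ makes $\Delta$ linear, which only makes hitting a nonzero square easier. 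An alternative to this quadratic-value step, closer in spirit to the later sections, is to count completely split squarefree monic polynomials with prescribed $(e_1,e_2)$ via additive characters and bound the error by Weil/L-function estimates, but the elementary route above avoids that machinery.
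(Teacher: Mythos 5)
First, note that the paper itself gives no proof of Proposition~\ref{deepholek+2}: it is imported from \cite{ZFL13,LZ15}, so your proposal can only be measured against what is actually needed. Your reduction is the standard and correct one: normalize $f$ to be monic, absorb terms of degree $\le k-1$, and reduce to producing a monic $h$ of degree $k+2$ with prescribed $e_1=a$, $e_2=b$ and at least $k+1$ distinct roots in $\f{q}$; for odd $q$ the identity $(P-a)^2+Q=a^2-2b$ is a correct restatement of the constraint for $h=(x-t)\prod_i(x-s_i)$ after eliminating $t$, and the discriminant $\Delta(u)$ in $u=s_1+s_2$ does have leading coefficient $-3$.

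The gap is in the last counting step, which you correctly identify as the crux but then close with an inequality that fails for most of the stated range of $k$. With $s_3,\dots,s_{k+1}$ frozen, a workable $u$ must both make $\Delta(u)$ a nonzero square and keep the resulting pair $\{s_1,s_2\}$ disjoint from $\{s_3,\dots,s_{k+1}\}$. The first condition holds for roughly $(q-3)/2$ values of $u$ (your conic count is fine apart from the one degenerate shift), but since $v(u)$ is a monic quadratic in $u$, each fixed $s_j$ excludes up to two values of $u$ via $s_j^2-us_j+v(u)=0$, i.e.\ up to $2(k-1)$ exclusions in total. The comparison $(q-3)/2>2(k-1)$ requires $k<(q+1)/4$; for $k$ up to $q-3$ the excluded set can be several times larger than the good set, so the collisions are not ``finitely many'' in any useful sense --- they grow linearly with $k$ --- and re-choosing $s_3,\dots,s_{k+1}$ does not visibly help, since every choice faces the same deficit (for $k=q-3$ only four elements of $\f{q}$ remain available for $s_1,s_2$ in the first place). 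This is precisely why the proofs in \cite{ZFL13,LZ15} count globally, over all $(k+1)$- or $(k+2)$-subsets with prescribed $e_1,e_2$ simultaneously, using character-sum/Weil-type estimates or explicit combinatorial identities; the ``alternative'' you mention in your final sentence is not an optional refinement but the step that actually closes the argument beyond $k=O(q/4)$. Two secondary points: the statement as printed does not assume $q$ odd, and your manipulations divide by $2$ throughout, so even with the counting repaired the even-characteristic case would remain untreated; and the claim that exactly one shift $C'$ is exceptional is asserted rather than proved, though that part is repairable.
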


For general degrees, the authors in~\cite{CM07} got the first result by reducing the conjecture to the
existence of rational points on a hypersurface over $\f{q}$.
Following a similar approach of Cheng-Wan~\cite{CW07}, Li and Wan~\cite{WL08} improved
the result in~\cite{CM07} with Weil's character sum estimate.
Later, Cafure et.al.~\cite{CMP11} improved the result in~\cite{WL08} a little bit by using tools of algebraic geometry.
Liao~\cite{Liao11} gave a tighter estimation of error distance, which was improved by Zhu and Wan~\cite{ZW12}.

A major progress in proving Conjecture~\ref{deepholeconj} is the recent result:
\begin{prop}[\cite{ZCL16}\cite{Kaipa17}]\label{CLZ15}
If $k+1\le p$ or $k\geq (q-1)/2$, Conjecture~\ref{deepholeconj} is true. In particular, the conjecture is true if $q$ is a prime.
\end{prop}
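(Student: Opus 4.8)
The plan is to prove the nontrivial direction of Conjecture~\ref{deepholeconj} in its contrapositive form: for every $f$ with $k+1 \le \deg f \le q-1$, I must show that $u_f$ is \emph{not} a deep hole of $RS(q,k)$, i.e. $d(u_f,RS(q,k)) \le q-k-1$; the reverse implication (degree $k$ yields deep holes) is already recorded before the conjecture. First I would recast ``not a deep hole'' in the interpolation language underlying Lemma~\ref{SSPDeephole}. Writing $\alpha_1,\dots,\alpha_q$ for the elements of $\f{q}$, the word $u_f$ fails to be a deep hole exactly when $f$ agrees with some polynomial of degree $\le k-1$ on at least $k+1$ of the $\alpha_i$; equivalently, there exist $k+1$ distinct nodes $x_0,\dots,x_k \in \f{q}$ with vanishing $k$-th divided difference
\[
f[x_0,\dots,x_k] \;=\; \sum_{i=0}^{k} \frac{f(x_i)}{\prod_{j\ne i}(x_i-x_j)} \;=\; 0 ,
\]
since this divided difference is precisely the coefficient of $x^{k}$ in the Lagrange interpolant of $f$ on those nodes, and its vanishing forces that interpolant to have degree $\le k-1$ (hence to define a codeword agreeing with $u_f$ in $\ge k+1$ places). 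Thus the whole statement reduces to producing, for each admissible $f$, one distinct-node zero of the divided difference.

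The low-degree cases are handled by the results already quoted. For $\deg f = k+1$ I would combine Lemma~\ref{SSPDeephole} with Proposition~\ref{ssp}: after subtracting a codeword and scaling I may normalize $f = x^{k+1} - a x^k + f_{\le k-1}(x)$, and the subset-sum count $N(k+1,a,\f{q})$ is positive for every $a$ in the stated range, so $u_f$ is not a deep hole. The case $\deg f = k+2$ (for odd $q$) is Proposition~\ref{deepholek+2}. The real substance therefore lies in the degrees $k+3 \le \deg f \le q-1$, and here I would split according to the two hypotheses.

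For the range $k+1 \le p$ I would follow the small-characteristic argument of \cite{ZCL16}, which fits the rational-point reduction of Cheng--Murray mentioned earlier: clearing denominators turns the divided-difference equation into a hypersurface $V_f$ whose $\f{q}$-points are the tuples $(x_0,\dots,x_k)\in\f{q}^{k+1}$ solving it, and one seeks a point lying off the diagonal locus $\{x_i=x_j\}$. The hypothesis $k+1\le p$ is exactly what prevents the top-degree part of the defining equation from collapsing modulo $p$, so that $V_f$ is absolutely irreducible of controlled degree and a Weil-type count forces a solution outside the diagonals. For the range $k \ge (q-1)/2$ I would instead pass to the dual/projective picture of \cite{Kaipa17}: here the redundancy $q-k$ is small, deep holes correspond to syndromes in a distinguished $PGL(2,q)$-orbit on the projective space of syndromes, and the classification of these orbits, via the arc-theoretic (Segre/MDS-conjecture) input, shows that the only deep holes are those of degree $k$. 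Finally the prime case $q=p$ is immediate from the first hypothesis, since $2\le k\le q-2$ gives $k+1\le q-1=p-1\le p$.

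The hard part will be the two high-degree regimes, which resist a single uniform treatment. In the $k+1\le p$ case the obstacle is establishing absolute irreducibility of $V_f$ and then converting the Weil estimate into a guaranteed \emph{off-diagonal} rational point: one must rule out the possibility that the diagonals $\{x_i=x_j\}$ absorb every solution, which is precisely where the bound $k+1\le p$ is decisive. In the $k\ge (q-1)/2$ case the obstacle is the geometric classification of the $PGL(2,q)$-orbits of syndromes, which requires Segre-type and MDS-conjecture results; this difference in machinery is exactly why the two conditions are proved by genuinely different methods rather than a common one.
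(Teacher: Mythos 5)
The paper contains no proof of this proposition to compare against: it is imported as a known theorem, with the two hypotheses attributed respectively to \cite{ZCL16} (the case $k+1\le p$) and \cite{Kaipa17} (the case $k\ge (q-1)/2$). Measured as a standalone proof, your proposal therefore has a genuine gap, and it is a structural one: every step that carries actual content is deferred back to the very papers being cited. Your preliminary reductions are sound. The equivalence between ``$u_f$ is not a deep hole'' and the vanishing of a $k$-th divided difference of $f$ at $k+1$ distinct nodes is correct and is the standard reformulation (for $\deg f=k+1$ it specializes to the subset-sum criterion of Lemma~\ref{SSPDeephole}, and Propositions~\ref{ssp} and~\ref{deepholek+2} dispose of degrees $k+1$ and $k+2$ as you say, modulo boundary care at $k=q-2$ and the even-$q$ exclusions), and the prime-field deduction $k+1\le q-1\le p$ is fine. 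But for $k+3\le\deg f\le q-1$ you assert, rather than prove, the two decisive steps: absolute irreducibility of the hypersurface $V_f$ together with a Weil-type estimate strong enough to guarantee a point off the diagonal locus in the $k+1\le p$ regime, and the arc-theoretic classification of $PGL(2,q)$-orbits of syndromes in the $k\ge(q-1)/2$ regime. Neither is routine; each is essentially the main theorem of its source paper, and writing ``follow \cite{ZCL16}'' and ``pass to \cite{Kaipa17}'' reproduces the paper's citation rather than replacing it with an argument.

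A further caution: your explanation of where the hypothesis $k+1\le p$ enters (``prevents the top-degree part of the defining equation from collapsing modulo $p$'') is a plausible reconstruction but is not verified anywhere in your write-up; in the small-characteristic deep-hole literature the constraint typically arises from symmetric-function and Newton-identity manipulations requiring invertibility of the integers up to $k+1$ in $\f{q}$, so the attribution of mechanism should be checked against \cite{ZCL16} before being relied on. In short, your proposal is a faithful and well-organized roadmap of the two cited proofs, and it correctly identifies why the two hypotheses demand genuinely different machinery; but since the hard steps are named rather than executed, it functions exactly as the paper's own citation does and does not constitute an independent proof of the proposition.
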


With the preparation above, we now give our new deep holes for some generalized RS codes, which can be viewed as a generalization of
the new deep holes found for even characteristic case in~\cite{ZFL13}.
We define
\begin{definition}\label{def:sumfreeset}
  We call a subset $D\subset\f{q}$ an $r$-zero-sum-free set in the finite field $\f{q}$ if for any $r$-subset $S$ of $D$,
  \[
  \sum_{s\in S}s\neq 0.
  \]
\end{definition}

Zero-sum-free sets are extensively studied in combinatorics and number theory (see~\cite{Erd65,TV16}).

\begin{exm}
	Let $p$ be a prime number and $r$ be an integer such that $1<r<p$. Then the set
	 \[
	  \left\{0,1,2,\cdots,\lfloor \frac{p}{r}\rfloor +r-1\right\}
	\]
	is an $r$-zero-sum-free set in the prime field $\f{p}$.
\end{exm}

\begin{exm}
Let $q$ be an odd prime power and let the evaluation set $D\subset \f{q}$ be such that
\[
   s\in D,\, s\neq 0\Rightarrow -s\notin D.
\]
That is, $D\cap \{-D\} \subseteq \{ 0\}$. Then $D$ is a $2$-zero-sum-free set.

\end{exm}

\begin{exm}
Let $q$ be a power of $2$ and let the evaluation set $D\subset \f{q}$ be arbitrary.
Then $D$ is a $2$-zero-sum-free set.
\end{exm}

Using the relationship between subset sum problem and deep holes defined by polynomials of degree $k+1$, we obtain the following new deep holes for generalized RS codes.
\begin{thm}\label{Newdeephole17a}
	Suppose the evaluation set $D\subset \f{q}$ is $r$-zero-sum-free ($r\geq 2$).
	Let $k=|D|-r-1$. For any polynomial $f_{\leq k-1}(x)$ of degree $\leq k-1$, the polynomial $x^{k+1}-(\sum_{s\in D}s) x^k+f_{\leq k-1}(x)$ defines a new class of deep holes of $RS(D,k)$ besides deep holes defined by polynomials of
	degree $k$ and polynomials in Proposition~\ref{newdeephole12}.
\end{thm}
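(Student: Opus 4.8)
The plan is to reduce the deep-hole assertion to the subset-sum count of Lemma~\ref{SSPDeephole}, verify that this count vanishes using the zero-sum-free hypothesis, and then separate the new holes from the two previously known families by a degree computation on interpolating polynomials. Write $n=|D|$ and $a=\sum_{s\in D}s$, so that the polynomial in question is exactly $x^{k+1}-ax^k+f_{\leq k-1}(x)$ with $k+1=n-r$. By Proposition~\ref{RSC} the covering radius of $RS(D,k)$ is $n-k$, and by Lemma~\ref{SSPDeephole} this polynomial defines a deep hole if and only if $N(k+1,a,D)=0$; thus the first step is to show there is no $(k+1)$-subset $S\subseteq D$ with $\sum_{s\in S}s=a$.

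For this I would pass to complements. A subset $S\subseteq D$ of size $k+1=n-r$ has complement $T=D\setminus S$ of size $r$, and $\sum_{s\in S}s=\sum_{s\in D}s-\sum_{t\in T}t=a-\sum_{t\in T}t$. Hence $\sum_{s\in S}s=a$ is equivalent to $\sum_{t\in T}t=0$ with $|T|=r$, which is forbidden precisely because $D$ is $r$-zero-sum-free. Therefore $N(k+1,a,D)=0$ and the word is a deep hole of $RS(D,k)$.

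It remains to check these deep holes are genuinely new. Each is represented by $g(x)=x^{k+1}-ax^k+f_{\leq k-1}(x)$, which has degree $k+1$; since $r\geq 2$ forces $k+1=n-r\leq n-2$, this $g$ is the unique interpolant of $u_g$ of degree $\leq n-1$. For any $h$ with $\deg h=k$, the difference $g-h$ has degree $k+1>k-1$, so $u_g-u_h\notin RS(D,k)$ and the coset $u_g+C$ differs from every degree-$k$ coset. For the family of Proposition~\ref{newdeephole12}, I would first identify the interpolant of $(x-\delta)^{q-2}$ on $D$ for $\delta\notin D$: setting $F(x)=\prod_{i=1}^n(x-x_i)$, the polynomial $P_\delta(x)=\frac{F(\delta)-F(x)}{F(\delta)(x-\delta)}$ satisfies $P_\delta(x_i)=(x_i-\delta)^{-1}=(x_i-\delta)^{q-2}$ and has degree exactly $n-1$ with leading coefficient $-1/F(\delta)\neq 0$. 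Consequently, for any scalar $a'\neq 0$ the difference $g-a'P_\delta$ has degree $n-1>k-1$ (here using $k+1\leq n-2<n-1$, i.e.\ $r\geq 2$), so $u_g+C$ is distinct from every coset in Proposition~\ref{newdeephole12}.

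The main obstacle is the newness step rather than the deep-hole property itself, which follows almost immediately from the complement argument. The substance lies in pinning down the interpolant $P_\delta$ and its degree $n-1$, after which the coset comparisons reduce to bookkeeping on the degrees $k$, $k+1$, and $n-1$. I expect the delicate point to be keeping that degree bookkeeping airtight and observing that the hypothesis $r\geq 2$ is exactly what places $\deg g=k+1$ strictly below $n-1$, thereby separating the new holes from both known families simultaneously.
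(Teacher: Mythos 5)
Your proposal is correct and takes essentially the same route as the paper: the deep-hole property via the complement reduction of the $(k+1)$-subset-sum condition to the $r$-zero-sum-free hypothesis through Lemma~\ref{SSPDeephole}, and newness via degree bookkeeping. The only cosmetic difference is in separating from Proposition~\ref{newdeephole12}: the paper multiplies the supposed identity by $x-\delta$ and counts zeros of the resulting degree-$(k+2)$ polynomial, while you compute the explicit degree-$(n-1)$ interpolant of $(x-\delta)^{q-2}$ on $D$; both arguments invoke $r\geq 2$ at the same point and are interchangeable.
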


 \begin{proof}
 First, we show that the word $x^{k+1}-(\sum_{s\in D}s) x^k+f_{\leq k-1}(x)$ really defines a deep hole of $RS(D,k)$. By lemma~\ref{SSPDeephole}, the word $x^{k+1}-(\sum_{s\in D}s) x^k+f_{\leq k-1}(x)$ does not define a deep hole of $RS(D,k)$ if and only if there is a $k+1$-subset of $D$ whose sum is $\sum_{s\in D}s$. The latter is equivalent to that the $r$ leftover elements in $D$ have sum $0$ which is impossible by the assumption on $D$. So $x^{k+1}-(\sum_{s\in D}s) x^k+f_{\leq k-1}(x)$ defines a deep hole of $RS(D,k)$.

 Second, we show the deep holes are new.
\begin{itemize}
  \item (Differ from polynomials of degree $k$.) If there exist some $a\in\f{q}^*$ and some polynomial $g_{\leq k-1}(x)$ of degree $\leq k-1$ such that
  $$x^{k+1}-(\sum_{s\in D}s) x^k+f_{\leq k-1}(x)=ax^k+g_{\leq k-1}(x)$$
   on $D$, then the polynomial of degree $k+1$
  $$x^{k+1}-(\sum_{s\in D}s) x^k+f_{\leq k-1}(x)-ax^k-g_{\leq k-1}(x)$$
  has at least $|D|=k+r+1$ distinct zeros, which implies it must be $0$. This contradicts to non-zero coefficients of the term $x^{k+1}$.
  \item (Differ from polynomials in Proposition~\ref{newdeephole12}.) If there exist some $a\in\f{q}^*$, $\delta\notin D$ and some polynomial $g_{\leq k-1}(x)$ of degree $\leq k-1$ such that
  $$x^{k+1}-(\sum_{s\in D}s) x^k+f_{\leq k-1}(x)=\frac{a}{x-\delta}+g_{\leq k-1}(x)$$
   on $D$. As $x-\delta$ is non-zero everywhere on $D$, we multiply $x-\delta$ on the equality to get
   $$(x^{k+1}-(\sum_{s\in D}s) x^k+f_{\leq k-1}(x))(x-\delta)=a+(x-\delta)g_{\leq k-1}(x)$$
   on $D$. So the polynomial of degree $k+2$
   $$(x^{k+1}-(\sum_{s\in D}s) x^k+f_{\leq k-1}(x))(x-\delta)-a-(x-\delta)g_{\leq k-1}(x)$$
   has at least $k+r+1$ distinct zeros, which implies that it must be $0$. By the same reason above, we obtain a contradiction.

\end{itemize}

So the deep holes are new.

 \end{proof}

\section{Covering Radii and Deep Holes for PRS codes}\label{PRSCode}
In this section, we study the covering radii and deep holes for PRS codes.
Recall that the PRS code $PRS(q+1,k)$ has one generator matrix of the form
\begin{equation*}
  \left(
    \begin{array}{ccccc}
      1 & 1 & \cdots & 1 & 0 \\
      \alpha_1 & \alpha_2 & \cdots & \alpha_q & 0 \\
      \vdots & \vdots & \ddots & \vdots & \vdots \\
      \alpha_1^{k-2} & \alpha_2^{k-2} & \cdots & \alpha_q^{k-2} & 0 \\
      \alpha_1^{k-1} & \alpha_2^{k-1} & \cdots & \alpha_q^{k-1} & 1 \\
    \end{array}
  \right).
\end{equation*}
It is easy to check that the PRS code has minimum distance $q+2-k$ and thus it is also an MDS code.

For the case $k=1$, the PRS code $PRS(q+1,1)$ is nothing but the repeating code generated by $(1,1,\cdots,1)$.
In this case, one can easily show that the covering radius is $d-2=q-1$ and the deep holes are
permutations of the $(q+1)$-element multi-set $\f{q}\cup \{\alpha\}$, where $\alpha\in\f{q}$ is arbitrary.

For the case $k=q-1$, the proof of Theorem~\ref{thm1} in~\cite{ZW16} shows that the covering radius of $PRS(q+1,k)$ is $d-2=1$ and the deep holes are $(a,\cdots,a,0,v)+PRS(q+1,k)$, where $a\in\f{q}^*$, and $v\in \f{q}$ are arbitrary.

For the case $k=q$, one can show that the covering radius of $PRS(q+1,k)$ is $d-1=1$ and the deep holes are $w+PRS(q+1,k)$ for all $w\in\f{q}^{q+1}$ of weight $1$.

With the boundary cases removed, we can then assume that  $2\le k\le q-2$.

\subsection{Covering radii of PRS codes}
Although the covering radius of RS codes is always $d-1$, it seems a little
surprising that the covering radius of PRS codes is unknown in general. The above boundary cases suggest that the covering radius
of $PRS(q+1,k)$ can have two possibilities $q-k, q-k+1$, which is indeed the case.
Another example in~\cite{CKMS85} is the PRS code $C$ over $\f{5}$ with generator matrix
 \begin{equation*}
\left(
  \begin{array}{cccccc}
    1 & 1 & 1 & 1 & 1 & 0 \\
    1 & 2 & 3 & 4 & 0 & 0 \\
    1 & 2^2 & 3^2 & 4^2 & 0 & 0 \\
    1 & 2^3 & 3^3 & 4^3 & 0 & 1 \\
  \end{array}
\right).
\end{equation*}
The code $C$ has minimum distance $3$ and covering radius $1$.
 This example suggests $PRS(q+1,k)$ may have covering radius $q-k$, two smaller than the minimum distance $q+2-k$.
 This leads to Conjecture~\ref{cover}.
In~\cite{Arne94}, D\"{u}r proved

\begin{prop}[\cite{Arne94}]\label{old}
If $q$ is an odd prime power and $2\leq k<\frac{\sqrt{q}}{4}+\frac{39}{16}$ or $6\sqrt{q\ln q}-2\leq k\leq q-2$; or if $q$ is a power of $2$ and $5\leq k<\frac{\sqrt{q}}{2}-\frac{11}{4}$ or $6\sqrt{q\ln q}-2\leq k\leq q-3$, then the covering radius of $PRS(q+1,k)$ is
\[
   \rho(PRS(q+1,k))=q-k.
\]
\end{prop}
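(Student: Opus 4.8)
The plan is to prove only the upper bound $\rho(PRS(q+1,k)) \le q-k$, since the matching lower bound is automatic. Indeed, by the dichotomy of \cite{GK98} the covering radius of any MDS code equals $d-1$ or $d-2$, and $PRS(q+1,k)$ is MDS with $d = q+2-k$, so $\rho \in \{q+1-k,\, q-k\}$ and in particular $\rho \ge q-k$. Thus it suffices to rule out $\rho = d-1 = q+1-k$, i.e. to show that every word $u \in \f{q}^{q+1}$ lies within Hamming distance $q-k$ of $C = PRS(q+1,k)$. Since $C$ is MDS, a word is at distance exactly $d-1$ precisely when the maximum number of coordinates in which it agrees with a codeword is $k$; hence the goal is equivalent to the statement that \emph{every word agrees with some codeword of $C$ in at least $k+1$ coordinates}.

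Next I would translate this agreement condition into a point-counting problem, exactly as Lemma \ref{SSPDeephole} and Proposition \ref{ssp} do for affine codes. Fixing a word $u$, a codeword agrees with $u$ on a $(k+1)$-subset $T$ of the $q+1$ coordinates if and only if the unique interpolating polynomial of degree $\le k$ through $u|_T$ has vanishing leading coefficient (so that it genuinely has degree $\le k-1$) and is consistent at the point at infinity; this amounts to a \emph{single} algebraic equation in the coordinates of $T$, of the shape $\sum_{t \in T} u_t \big/ \prod_{s \in T,\, s \ne t}(t-s) = 0$. Passing to the syndrome picture, where the $q+1$ columns of a parity-check matrix of $C$ form a normal rational curve in $PG(q-k,q)$ lying in general position, this equation says precisely that the syndrome of $u$ lies on the hyperplane spanned by the $q-k$ complementary curve points; so $\rho \le q-k$ becomes the geometric assertion that \emph{every point of $PG(q-k,q)$ lies on some hyperplane meeting the normal rational curve in $q-k$ distinct rational points}. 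Either way, the task reduces to showing that the number $N(u)$ of admissible subsets is positive for every $u$.

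For the large range $6\sqrt{q\ln q}-2 \le k \le q-2$ the main tool is Weil's character-sum estimate. I would detect the single defining equation by an additive character, writing $N(u) = \tfrac1q \sum_{\psi}(\cdots)$; the trivial character contributes the main term $\tfrac1q\binom{q+1}{k+1}$, while each nontrivial $\psi$ contributes a complete exponential sum which, after clearing denominators and interpreting it as a point count on an auxiliary curve of controlled degree, is bounded by $O(g\sqrt q)$ via $|\#X(\f{q}) - (q+1)| \le 2g\sqrt q$. Summing over the $q-1$ nontrivial characters and comparing with the main term yields $N(u) > 0$ once $q$ is large relative to the complexity of the equation; carrying out the bookkeeping produces the explicit threshold $k \ge 6\sqrt{q\ln q}-2$. \textbf{I expect the genus/degree estimate to be the main obstacle}: one must control the degenerate configurations (repeated points, and subsets containing $\infty$), keep the genus of the auxiliary curve essentially linear in the relevant degree, and track the logarithmic loss that forces the $\sqrt{q\ln q}$ rather than $\sqrt q$ scaling.

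Finally, for the small range $2 \le k < \tfrac{\sqrt q}{4}+\tfrac{39}{16}$, where $q-k$ is close to $q$ and the Weil error is too large, I would use a separate, more geometric argument: in low projective dimension the relevant curve is a conic, a twisted cubic, and so on, and one can show directly—using the classical incidence structure of secants of such rational curves, or the complementary small-dimensional configuration obtained from the duality $m \leftrightarrow q+1-m$—that every point is covered by a secant hyperplane. The extra hypotheses in even characteristic ($5 \le k$, resp. $k \le q-3$) are exactly what is needed to avoid the genuinely exceptional values $k = 2,\, q-2$, where a hyperoval appears and $\rho = d-1$; this matches the exception recorded in Conjecture \ref{cover}. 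The gap $\tfrac{\sqrt q}{4} \lesssim k \lesssim 6\sqrt{q\ln q}$ is reached by neither method, which is precisely why the covering-radius conjecture remains open there.
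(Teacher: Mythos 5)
This proposition is quoted from D\"{u}r \cite{Arne94}; the paper itself gives no proof of it, so there is no internal argument to compare yours against. That said, your opening reduction is correct and is in fact the route taken in the literature: the dichotomy of \cite{GK98} gives $\rho\geq q-k$ for free, and $\rho\leq q-k$ is equivalent to the assertion that every point of $PG(q-k,q)$ lies in the span of $q-k$ columns of a parity-check matrix, i.e.\ that the normal rational curve, viewed as a $(q+1)$-arc in $PG(q-k,q)$, cannot be extended to a $(q+2)$-arc. D\"{u}r's theorem is precisely this completeness statement translated into coding language; the two ranges are then imported from the known results on complete normal rational curves (Segre-type envelope/tangent arguments of Storme--Thas in one regime, Hasse--Weil based counting in the style of Seroussi--Roth in the other).

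The gap is that everything after the reduction remains a gesture, and the part you skip is the entire content of the theorem. The constants $\frac{\sqrt q}{4}+\frac{39}{16}$ and $6\sqrt{q\ln q}-2$ are the output of two substantial and quite different arguments, and you leave both unexecuted: you explicitly defer the ``genus/degree estimate'' and the ``bookkeeping'' in the large-$k$ range, and for the small-$k$ range you only name ``the classical incidence structure of secants'' without an argument. Moreover, a single additive-character detection of one secant equation, as you propose, has no visible source for the $\ln q$ factor; its presence indicates that the real proof aggregates over many subsets or many equations (a union/counting bound) rather than estimating one exponential sum, so the method as described would not be expected to reproduce the stated threshold. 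One smaller inaccuracy: the exclusions $k=3,4$ in the even-characteristic small range are limitations of the method, not genuine exceptions on a par with $k=2,\,q-2$ (Conjecture~\ref{cover} predicts $\rho=q-k$ at $k=3,4$), so your explanation of the extra hypotheses overstates what they reflect. As it stands, the proposal is a correct framing together with an accurate guess at the toolbox, but not a proof of the proposition.
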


In~\cite{ZW16} we studied the relationship between the covering radius Conjecture~\ref{cover} for PRS codes and the deep hole Conjecture~\ref{deepholeconj}.

\begin{thm}[\cite{ZW16}]\label{thm1}
Let $q$ be an odd prime power. Assume that Conjecture~\ref{deepholeconj} is true. Then the covering radius of $PRS(q+1,k)$ is $q-k$.
\end{thm}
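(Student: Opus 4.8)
The plan is to reduce the claim to an upper bound. Since $PRS(q+1,k)$ is MDS with minimum distance $d=q+2-k$, its covering radius is either $d-1$ or $d-2=q-k$ \cite{GK98}, so it suffices to prove $\rho(PRS(q+1,k))\le q-k$, i.e. that every word lies within distance $q-k$ of the code; I will in fact also recover the matching lower bound along the way. Write an arbitrary word as $w=(u_f,v)$, where $f$ is the interpolating polynomial of degree $\le q-1$ determined by the first $q$ coordinates and $v\in\f{q}$ is arbitrary. Showing $d(w,PRS(q+1,k))\le q-k$ is equivalent to producing a codeword $(u_g,c_{k-1}(g))$ with $\deg g\le k-1$ that agrees with $w$ in at least $k+1$ of the $q+1$ coordinates. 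I would split the argument by the value of $\deg f$ (assuming throughout the relevant range $2\le k\le q-2$).

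The two extreme ranges are quick. If $\deg f\le k-1$ then $u_f\in RS(q,k)$, so $(u_f,c_{k-1}(f))$ is a codeword differing from $w$ in at most the last coordinate, whence $d(w,PRS(q+1,k))\le 1\le q-k$. If $\deg f\ge k+1$, then since $q$ is odd Conjecture~\ref{deepholeconj} applies and tells us $u_f$ is \emph{not} a deep hole of the affine code $RS(q,k)$, whose covering radius is $q-k$ by Proposition~\ref{RSC}; hence $d(u_f,RS(q,k))\le q-k-1$ and some $g$ with $\deg g\le k-1$ agrees with $f$ in at least $k+1$ of the affine coordinates. Discarding the last coordinate, $w$ and $(u_g,c_{k-1}(g))$ then agree in at least $k+1$ coordinates, so $d(w,PRS(q+1,k))\le q-k$. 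This is the only place where the hypothesis that Conjecture~\ref{deepholeconj} holds is actually used.

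The case $\deg f=k$ is the heart of the matter and the main obstacle. Writing $f=c_kx^k+c_{k-1}x^{k-1}+\cdots$ with $c_k\ne 0$, the affine part $u_f$ is exactly a deep hole of $RS(q,k)$: no polynomial of degree $\le k-1$ can meet $f$ in more than $k$ affine points, since $f-g$ then has degree $k$ and at most $k$ roots. So to reach $k+1$ agreements we are \emph{forced} to match the last coordinate as well, i.e. to find $g$ of degree $\le k-1$ agreeing with $f$ on a $k$-subset $S\subseteq\f{q}$ and satisfying $c_{k-1}(g)=v$. The key observation is that such a $g$ must be $g=f-c_k\prod_{s\in S}(x-s)$, whose coefficient of $x^{k-1}$ is $c_{k-1}(g)=c_{k-1}+c_k\sum_{s\in S}s$. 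Thus matching $v$ amounts to choosing a $k$-subset $S$ of $\f{q}$ with prescribed sum $\sum_{s\in S}s=(v-c_{k-1})/c_k$, which is exactly a $k$-subset sum instance over $D=\f{q}$ and is always solvable for $q$ odd by Proposition~\ref{ssp}. Any such $S$ yields a codeword agreeing with $w$ in the $k$ coordinates of $S$ together with the last coordinate, so $d(w,PRS(q+1,k))\le(q+1)-(k+1)=q-k$. Finally, the same root-counting shows these degree-$k$ words cannot be closer than $q-k$ to any codeword, so they attain the covering radius and $\rho(PRS(q+1,k))=q-k$ exactly.
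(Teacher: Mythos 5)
Your proof is correct. The paper does not reproduce a proof of this theorem (it is quoted from \cite{ZW16}), but your argument --- reducing to the upper bound $\rho\le q-k$ via the MDS dichotomy, splitting on $\deg f$, invoking Conjecture~\ref{deepholeconj} only for $\deg f\ge k+1$, and handling the critical case $\deg f=k$ by converting the last-coordinate constraint into a $k$-subset sum over $\f{q}$ solved by Proposition~\ref{ssp} --- is exactly the argument of \cite{ZW16} on which the paper relies.
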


Note that the above theorem holds for even $q$: if Conjecture~\ref{deepholeconj} is true, then the covering radius of $PRS(q+1,k)$ is $q-k$ for all $3\leq k\leq q-3$. The proof is the same as the proof in~\cite{ZW16}.

Using recent results on Conjecture~\ref{deepholeconj}~\cite{Kaipa17}, we can deduce

\begin{cor}[\cite{ZW16}]
Let $q$ be a power of the prime $p$. Assume either $3\le k+1\le p$ or $k\geq  (q-1)/2$ (and $k\neq q-2$ if $p=2$). Then the covering radius of $PRS(q+1,k)$ is $q-k$.
\end{cor}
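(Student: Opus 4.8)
The plan is to derive this corollary by chaining the two results already in hand: Proposition~\ref{CLZ15}, which certifies that the deep hole Conjecture~\ref{deepholeconj} holds under the stated arithmetic hypotheses on $k$, and Theorem~\ref{thm1} (together with its even-characteristic extension recorded in the remark following it), which converts the truth of Conjecture~\ref{deepholeconj} into the equality $\rho(PRS(q+1,k)) = q-k$. No new estimate is needed; the content is purely in lining up the ranges of $k$ so that both inputs apply simultaneously.

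First I would dispose of the odd case. If $p$ is odd, the hypothesis gives either $k+1 \le p$ or $k \ge (q-1)/2$, so Proposition~\ref{CLZ15} yields that Conjecture~\ref{deepholeconj} is true for $RS(q,k)$. Since $q$ is an odd prime power, Theorem~\ref{thm1} then applies verbatim and gives $\rho(PRS(q+1,k)) = q-k$, finishing this case.

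The remaining case is $p=2$. Here the alternative $3 \le k+1 \le p = 2$ is vacuous, so the hypothesis forces $k \ge (q-1)/2$; again Proposition~\ref{CLZ15} guarantees that Conjecture~\ref{deepholeconj} holds. Now I would invoke the even-$q$ extension of Theorem~\ref{thm1} noted in the text, which yields $\rho(PRS(q+1,k)) = q-k$ in the window $3 \le k \le q-3$. The point that must be checked is that the hypotheses $k \ge (q-1)/2$ and $k \ne q-2$ place $k$ inside this window: the lower bound $k \ge (q-1)/2 \ge 3$ holds once $q \ge 8$ (and for $q = 4$ the only candidate $k = q-2 = 2$ is excluded, so the statement is vacuous there), while the explicit exclusion $k \ne q-2$ removes exactly the endpoint where, for even $q$, the covering radius jumps to the exceptional value $q-k+1$ of Conjecture~\ref{cover}.

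The main (and really only) obstacle is this bookkeeping at the boundary of the even-characteristic range, where the statement of Conjecture~\ref{deepholeconj} itself excludes $k = q-3$: I would handle the single value $k = q-3$ by appealing directly to the unconditional determination of the covering radius at $k = q-3$ for $PRS(q+1,k)$ (one of the two simplest non-trivial cases known from \cite{Kaipa17} and recalled above), rather than routing it through Conjecture~\ref{deepholeconj}. With that one endpoint treated separately and $k = q-2$ excluded by hypothesis, the remaining values $\,(q-1)/2 \le k \le q-4\,$ fall squarely under the even-$q$ extension of Theorem~\ref{thm1}, and the proof is complete.
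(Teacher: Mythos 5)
Your derivation is exactly the one the paper intends: the corollary is stated without proof as an immediate consequence of Proposition~\ref{CLZ15} combined with Theorem~\ref{thm1} and its even-characteristic extension noted right after it. Your extra bookkeeping at the even-$q$ boundary --- treating $k=q-3$ via the unconditional covering-radius result recalled from \cite{Kaipa17} and observing that $k=q-2$ is excluded by hypothesis --- correctly fills in the only detail the paper leaves implicit.
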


It is natural to ask if the converse holds. We only have partial result.

\begin{thm}\label{dhk-ndh1}
Let $q$ be a prime power and let $k$ be an integer such that $2\leq k\leq q-2$ if $2\nmid q$ or $3\leq k\leq q-3$ if $2\mid q$. If Conjecture~\ref{cover} holds, deep holes of $RS(q,k)$ are not deep hole
of $RS(q,k-1)$.
\end{thm}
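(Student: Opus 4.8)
The plan is to argue by contradiction and play the two deep-hole hypotheses against the covering radius conjecture for the projective code directly. Every word of $\f{q}^q$ equals $u_f$ for a unique $f$ with $\deg f\le q-1$, so I may take an arbitrary deep hole $u_f$ of $RS(q,k)$, assume for contradiction that $u_f$ is \emph{also} a deep hole of $RS(q,k-1)$, and then exhibit a word of $\f{q}^{q+1}$ whose error distance to $PRS(q+1,k)$ strictly exceeds $q-k$. By Proposition~\ref{RSC} the two hypotheses read $d(u_f,RS(q,k))=q-k$ and $d(u_f,RS(q,k-1))=q-k+1$.

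The test word will be $w=(u_f,0)\in\f{q}^{q+1}$. Writing a generic codeword of $PRS(q+1,k)$ as $(u_g,c_{k-1}(g))$ with $\deg g\le k-1$, the Hamming distance splits as
\[
 d\big(w,(u_g,c_{k-1}(g))\big)=d(u_f,u_g)+\mathbf{1}_{\{c_{k-1}(g)\neq 0\}},
\]
since the last coordinate of $w$ is $0$. I then minimize over $g$ by separating two cases according to the value of the top coefficient $c_{k-1}(g)$. If $c_{k-1}(g)=0$ then $g$ has degree $\le k-2$, so $u_g$ is a codeword of $RS(q,k-1)$ and $d(u_f,u_g)\ge q-k+1$ by the second hypothesis; the penalty is $0$ but the affine distance alone already yields at least $q-k+1$. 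If $c_{k-1}(g)\neq 0$ then $u_g$ is still a codeword of $RS(q,k)$, so $d(u_f,u_g)\ge q-k$ by the first hypothesis, and the penalty term contributes an extra $1$, again yielding at least $q-k+1$. In both cases the distance is $\ge q-k+1$, whence $d(w,PRS(q+1,k))\ge q-k+1$.

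This contradicts Conjecture~\ref{cover}: in the stated range ($2\le k\le q-2$ for odd $q$; $3\le k\le q-3$ for even $q$, which excludes the exceptional values $k=2,q-2$) the conjecture asserts $\rho(PRS(q+1,k))=q-k$, so no word can have error distance $q-k+1$. Hence $u_f$ cannot be a deep hole of both codes, and since it is a deep hole of $RS(q,k)$ it fails to be one of $RS(q,k-1)$.

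The conceptual heart of the argument, and the only step requiring care, is recognizing that the last coordinate of $PRS(q+1,k)$ acts as a \emph{penalty} which exactly records the top coefficient $c_{k-1}(g)$: fixing the test value $0$ in that coordinate forces every nearest codeword of the affine code $RS(q,k)$ (all of which satisfy $d(u_f,u_g)=q-k$) to pay the penalty unless its top coefficient vanishes, while the deep-hole hypothesis on $RS(q,k-1)$ rules out the vanishing case. It is essential that $u_f$ be a genuine deep hole of $RS(q,k)$ — were the minimal affine distance smaller than $q-k$, a nearest codeword with nonzero top coefficient could still leave the total at $q-k$, and the contradiction would collapse.
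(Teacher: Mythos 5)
Your proof is correct and is essentially the paper's argument run in contrapositive form: both rest on the same decomposition of the $PRS(q+1,k)$ distance into the affine distance plus the penalty recorded by the last coordinate, combined with the hypothesis $\rho(PRS(q+1,k))=q-k$ applied to the word $(u_f,0)$. The paper phrases it forwards (a deep hole of $RS(q,k)$ is forced within distance $q-k$ of $RS(q,k-1)$, for every choice of last coordinate $v$), while you assume both deep-hole conditions and derive a word of error distance $q-k+1$; the content is the same.
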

\begin{proof}
Consider the word $(u_f,v)$, where $u_f$ is a deep hole of $RS(q,k)$ and $v\in \f{q}$
 is arbitrary. By the assumption $\rho(PRS(q+1,k))=q-k$, we get
 \[
   d((u_f,v),PRS(q+1,k))\leq q-k.
 \]
That is, there is a codeword $(u_h,c_{k-1}(h))\in PRS(q+1,k)$ such that
 \[
   d((u_f,v),(u_h,c_{k-1}(h)))\leq q-k.
 \]
 On the other hand, we have
 \begin{align*}
    d((u_f,v),(u_h,c_{k-1}(h)))&=d(u_f,u_h)+d(v,c_{k-1}(h))\\
    &\geq q-k+d(v,c_{k-1}(h)),
 \end{align*}
 since $d(u_f,RS(q,k))=q-k$ and $u_h\in RS(q,k)$. It forces
 \[
   d(u_f,u_h)=q-k,\qquad\mbox{and}\qquad c_{k-1}(h)=v.
 \]
 So
 \[
    d(u_{f-vx^{k-1}},u_{h-vx^{k-1}})=d(u_f,u_h)=q-k.
 \]
 As $h-vx^{k-1}$ is a polynomial of degree $\leq k-2$, it follows
 \[
    d(u_{f-vx^{k-1}}, RS(q,k-1))\leq q-k.
 \]
Because $\rho(RS(q,k-1))=q-k+1$, $u_{f-vx^{k-1}}$ is not a deep hole of $RS(q,k-1)$.
In particular, taking $v=0$ we prove the lemma.

\end{proof}

\subsection{Deep holes of PRS codes}
A further question is to find all the deep holes of the PRS codes. On the opposite side, the following result shows that a large class of vectors are not deep holes.

\begin{thm}[\cite{ZW16}]\label{deepholes}
	If $\deg(f)\ge k+1$, denote $s=\deg(f)-k+1$. There are positive constants $c_1$ and $c_2$ such that if
	$$s<c_1\sqrt{q},\,\left(\frac{s}{2}+2\right)\log_2(q)<k<c_2q,$$
	then for any $v\in \f{q}$, $(u_f,v)$ is not a deep hole of $PRS(q+1,k)$.
\end{thm}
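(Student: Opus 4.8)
The plan is to show directly that $(u_f,v)$ lies within Hamming distance $q-k-1$ of the code. Since $PRS(q+1,k)$ is MDS with minimum distance $d=q+2-k$, its covering radius is $d-1$ or $d-2$~\cite{GK98}, hence at least $q-k$; so exhibiting a single codeword at distance $\le q-k-1$ already forces $(u_f,v)$ to be a non-deep-hole (in particular no covering-radius assumption is needed). A codeword has the form $(u_h,c_{k-1}(h))$ with $\deg h\le k-1$, and
$$d\big((u_f,v),(u_h,c_{k-1}(h))\big)=d(u_f,u_h)+d(v,c_{k-1}(h)).$$
Thus it suffices to produce $h$ with $\deg h\le k-1$, $c_{k-1}(h)=v$, and $f-h$ vanishing at $k+1$ points of $\f{q}$, so that $d(u_f,u_h)\le q-k-1$ and the last coordinate matches.

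First I would reformulate this as a counting problem over $(k+1)$-subsets. For $T\subseteq\f{q}$ with $|T|=k+1$, put $P_T(x)=\prod_{\alpha\in T}(x-\alpha)$ and let $h_T=f\bmod P_T$, of degree $\le k$; then $f-h_T$ automatically vanishes on $T$, and the required $h$ exists (namely $h=h_T$) exactly when $c_k(h_T)=0$ and $c_{k-1}(h_T)=v$. Writing $\deg f=m=k+s-1$ with coefficients $a_i$, and using the Lagrange/partial-fraction identity $\sum_{\alpha\in T}\alpha^{\,j}/P_T'(\alpha)=h_{j-k}(T)$ (complete homogeneous symmetric functions, vanishing for $j<k$), these two conditions translate into two symmetric polynomial equations in the elements of $T$,
$$C_1(T):\ a_k+\sum_{i=k+1}^{m}a_i\,h_{i-k}(T)=0,\qquad C_2(T):\ (a_{k-1}-v)+\sum_{i=k}^{m}a_i\,h_{i+1-k}(T)=0,$$
of degrees $s-1$ and $s$ in the coordinates. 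When $s=2$ this collapses to a single linear subset-sum condition $\sum_{\alpha\in T}\alpha=\mathrm{const}$, recovering the mechanism behind Lemma~\ref{SSPDeephole}. The goal becomes: show that for $q$ in the stated range some $(k+1)$-subset $T$ satisfies $C_1(T)=C_2(T)=0$.

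To count such subsets I would use additive characters. Writing $\psi$ for a nontrivial additive character of $\f{q}$ and applying orthogonality to both constraints gives
$$N=\frac{1}{q^2}\sum_{(y_1,y_2)\in\f{q}^2}\ \sum_{\substack{T\subseteq\f{q}\\ |T|=k+1}}\psi\big(y_1C_1(T)+y_2C_2(T)\big),$$
whose leading $(y_1,y_2)=(0,0)$ term is $\binom{q}{k+1}/q^2$. To bound the remaining terms I would pass from subsets to distinct-coordinate tuples and apply the Li--Wan distinct-coordinate sieve together with Weil's character-sum bound; since $C_1,C_2$ are symmetric of degree $O(s)$, this controls each nontrivial sum by roughly $q^{\,s/2}$, which is meaningful precisely when $s<c_1\sqrt q$ so that the degrees stay below the Weil threshold. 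Positivity of $N$ then follows once $\binom{q}{k+1}>q^{\,s/2+2}$, and since $\binom{q}{k+1}\ge 2^{\,k+1}$ for $k<c_2q$ (say $c_2=1/2$), this is implied by $(s/2+2)\log_2 q<k$ — exactly the stated hypotheses, with the $+2$ coming from the two constraints and the $s/2$ from the degree. A positive $N$ yields the required $T$, hence the codeword at distance $\le q-k-1$, proving $(u_f,v)$ is not a deep hole.

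The main obstacle is that final estimate. The nontrivial sums $\sum_T\psi(y_1C_1+y_2C_2)$ run over $(k+1)$-subsets and their arguments are genuinely nonlinear symmetric functions (already $h_2$ is quadratic in the coordinates for $s\ge3$), so they do not factor into one-variable Weil sums the way the pure subset-sum case $s=2$ does. Making the sieve bookkeeping interact correctly with Weil's bound---so that the final estimate depends on $s$ only through the exponent $s/2$ and is essentially uniform in $k$---is the delicate technical heart of the argument, and it is what pins down the precise constants $c_1,c_2$ and the $\log_2 q$ threshold.
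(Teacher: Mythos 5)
First, a point of reference: this paper does not prove Theorem~\ref{deepholes} at all --- it is quoted from~\cite{ZW16} --- so the comparison has to be with the argument given there. Your combinatorial reduction is correct and is essentially the standard one: since the covering radius of an MDS code is $d-1$ or $d-2$, it suffices to exhibit a single codeword $(u_h,c_{k-1}(h))$ at distance at most $q-k-1$, and this amounts to finding a $(k+1)$-subset $T\subseteq\f{q}$ such that $h_T=f\bmod\prod_{\alpha\in T}(x-\alpha)$ has vanishing $x^k$-coefficient and $x^{k-1}$-coefficient equal to $v$. Your translation of these two conditions into symmetric-function equations $C_1(T)=0$, $C_2(T)=0$ of degrees $s-1$ and $s$ (via $\sum_{\alpha\in T}\alpha^j/P_T'(\alpha)=h_{j-k}(T)$) is also right, and it correctly explains where the ``$+2$'' and the ``$s/2$'' in the hypothesis should come from. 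Up to this point the proposal is sound and needs no covering-radius assumption, as you observe.

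The gap is the final estimate, which is not a deferred technicality but the entire theorem. After expanding with additive characters you must bound $\sum_{|T|=k+1}\psi\bigl(y_1C_1(T)+y_2C_2(T)\bigr)$ for $(y_1,y_2)\neq(0,0)$. For $s\ge 3$ the arguments involve $h_2,h_3,\dots$, which are not multilinear in the coordinates of $T$, so these sums do not factor into one-variable sums and the one-variable Weil bound simply does not apply to them; the Li--Wan sieve only converts subset sums into tuple sums and does not remove this obstruction. Moreover, the bound you assert --- roughly $q^{s/2}$ for a sum of $\binom{q}{k+1}$ unimodular terms --- is far stronger than square-root cancellation in the number of terms; nothing of this strength follows from general exponential-sum technology, and such exact super-cancellation is a special feature of the \emph{linear} case $C(T)=e_1(T)$ (Li--Wan's subset-sum formula), which is exactly the case $s=2$ you use as your model. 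The proofs that actually exist for statements of this shape (Cafure--Matera--Privitelli and Zhu--Wan for the affine code, adapted in~\cite{ZW16} to absorb the extra coordinate condition) obtain the cancellation from effective Lang--Weil-type estimates for the variety cut out by the top-coefficient conditions on $f(x)-\lambda\prod_i(x-t_i)g(x)$, or from $L$-functions of characters of $(\f{q}[x]/(Q))^*$, and the resulting error terms are of size roughly $D(s)^{k+1}q^{(k+1)/2}$ rather than $q^{s/2}$; the two hypotheses $s<c_1\sqrt{q}$ and $(\tfrac{s}{2}+2)\log_2 q<k<c_2q$ emerge from balancing $\binom{q}{k+1}/q^2$ against a bound of that form, not the one you propose. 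As written, your argument establishes the reduction but not the theorem.
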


On the positive side, we have the following construction of deep holes of PRS codes.

\begin{thm}[Construction from irreducible quadratic polynomials]\label{deephole:irr2}
	Let $q$ be a prime power and let $k$ be an integer such that $2\leq k\leq q-2$ if $2\nmid q$ or $3\leq k\leq q-3$ if $2\mid q$. Suppose $\rho(PRS(q+1,k))=q-k$. Let $p(x)=x^2+\alpha x+\beta\in \f{q}[x]$ be a monic irreducible polynomial over $\f{q}$. For any $a,b\in \f{q}$ which are not zero simultaneously, the rational function $\frac{a+bx}{p(x)}$ defines the following coset of deep holes of the PRS code $PRS(q+1,k)$:
	\[
	  \left(\frac{a+b\alpha_1}{p(\alpha_1)},\frac{a+b\alpha_2}{p(\alpha_2)},\cdots,\frac{a+b\alpha_q}{p(\alpha_q)},0\right)+C.
	\]
\end{thm}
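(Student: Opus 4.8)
The plan is to realize the claimed error distance through the ``MDS supercode'' idea advertised in the introduction: I would embed $C=PRS(q+1,k)$ together with the candidate word $w=\left(\frac{a+b\alpha_1}{p(\alpha_1)},\ldots,\frac{a+b\alpha_q}{p(\alpha_q)},0\right)$ into the larger projective Reed--Solomon code $PRS(q+1,k+2)$ by a weight-preserving diagonal scaling, and then read off the desired lower bound on the error distance from the minimum distance of that supercode. Since $\rho(C)=q-k$ is assumed, the upper bound $d(w,C)\le q-k$ is automatic, so the entire content is the matching lower bound $d(w,C)\ge q-k$, together with the observation that the error distance is constant on cosets, which promotes the single word to the whole coset $w+C$.

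Concretely, first I would introduce the map $\Phi$ on $\f{q}^{q+1}$ that multiplies the $i$-th affine coordinate by $p(\alpha_i)$ and fixes the last coordinate. Because $p$ is irreducible it has no root in $\f{q}$, so every $p(\alpha_i)\neq 0$; hence $\Phi$ is a bijection that preserves Hamming weight and the word $w$ is well defined. The key computation is that $\Phi$ carries $C$ into $PRS(q+1,k+2)$: a codeword of $C$ attached to a polynomial $h$ with $\deg h\le k-1$ is sent to the evaluation vector of $h(x)p(x)$, which has degree $\le k+1$, and---crucially---its last coordinate $c_{k-1}(h)$ equals $c_{k+1}(hp)$ because $p$ is monic of degree exactly $2$. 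The same map sends $w$ to the evaluation vector of $a+bx$, whose last coordinate is $0=c_{k+1}(a+bx)$; thus $\Phi(w)$ also lies in $PRS(q+1,k+2)$.

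With both $\Phi(C)$ and $\Phi(w)$ inside the MDS code $PRS(q+1,k+2)$, I would finish as follows. For every $c\in C$ the vector $\Phi(w)-\Phi(c)=\Phi(w-c)$ is a codeword of $PRS(q+1,k+2)$, and it is nonzero: on the polynomial side it is $(a+bx)-h(x)p(x)$, which cannot vanish identically since $\deg(a+bx)\le 1$ while $hp$ is either $0$ or of degree $\ge 2$, and $a+bx\neq 0$. The code $PRS(q+1,k+2)$ is MDS of minimum distance $(q+1)-(k+2)+1=q-k$, so this nonzero codeword has weight $\ge q-k$; since $\Phi$ preserves weight, $d(w,c)=\mathrm{Wt}(w-c)\ge q-k$ for all $c$, giving $d(w,C)\ge q-k$ and hence equality.

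I expect the main obstacle to be the careful bookkeeping of the projective (infinite) coordinate: one must check that the diagonal scaling by $p$ does not disturb the defining relation $v=c_{k-1}(\cdot)$, which rests on the identity $c_{k+1}(hp)=c_{k-1}(h)$ valid precisely because $p$ is monic of degree two, and one must confirm that $k+2\le q$ so that $PRS(q+1,k+2)$ is a genuine (MDS) projective Reed--Solomon code---both guaranteed by the hypotheses $2\le k\le q-2$ and the irreducibility of $p$. As an alternative route that avoids naming a supercode, I could argue directly by agreement counting: a codeword agreeing with $w$ in at least $k+2$ of the $q+1$ positions would force the polynomial $h(x)p(x)-(a+bx)$ to have more roots than its degree permits and hence vanish, and the degree comparison $\deg(a+bx)\le 1$ versus $\deg(hp)\ge 2$ again yields a contradiction; the supercode formulation simply organizes this counting conceptually.
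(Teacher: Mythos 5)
Your proposal is correct and is essentially the paper's own argument: the paper builds the MDS supercode $C'$ of dimension $k+2$ by scaling the rows of a $PRS(q+1,k+2)$-type generator matrix by $1/p(\alpha_i)$ and checks $C\subseteq C'$ via a matrix identity, which is exactly the inverse of your weight-preserving map $\Phi$; both then combine $d(C')=q-k$ with the assumed covering radius to force $d(w+c,C)=q-k$. The only cosmetic difference is that you move $C$ and the word into the standard $PRS(q+1,k+2)$ rather than pulling that code back to a supercode of $C$.
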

\begin{proof}
	We will define a new MDS code $C'$ with parameters $[q+1,k+2]$ which contains $C=PRS(q+1,k)$ as a subcode. So all the words in $C'\setminus C$ are deep holes of $C$.
	
	The code $C'$ is generated by the generator matrix
	\begin{equation*}
	\left(
	\begin{array}{ccccc}
	\frac{1}{p(\alpha_1)} & \frac{1}{p(\alpha_2)} & \cdots & \frac{1}{p(\alpha_q)} & 0 \\
	\frac{\alpha_1}{p(\alpha_1)} & \frac{\alpha_2}{p(\alpha_2)} & \cdots & \frac{\alpha_q}{p(\alpha_q)} & 0 \\
	\vdots & \vdots & \ddots & \vdots & \vdots \\
	\frac{\alpha_1^{k}}{p(\alpha_1)} & \frac{\alpha_2^{k}}{p(\alpha_2)} & \cdots & \frac{\alpha_q^{k}}{p(\alpha_q)} & 0 \\
	\frac{\alpha_1^{k+1}}{p(\alpha_1)} & \frac{\alpha_2^{k+1}}{p(\alpha_2)} & \cdots & \frac{\alpha_q^{k+1}}{p(\alpha_q)} & 1 \\
	\end{array}
	\right).
	\end{equation*}
	From the definition, it is easy to see that $C'$ is equivalent to the PRS code $[q+1,k+2]$ code with the scale vector
	$$\left(\frac{1}{p(\alpha_1)}, \frac{1}{p(\alpha_2)}, \cdots, \frac{1}{p(\alpha_q)}, 1\right).$$
	So $C'$ is an MDS code with parameters $[q+1,k+2]$.
	
	The PRS code $C=PRS(q+1,k)$ is a subcode of the new code $C'$, as
	\begin{align*}
  \left(
\begin{array}{ccccc}
1 & 1 & \cdots & 1 & 0 \\
\alpha_1 & \alpha_2 & \cdots & \alpha_q & 0 \\
\vdots & \vdots & \ddots & \vdots & \vdots \\
\alpha_1^{k-2} & \alpha_2^{k-2} & \cdots & \alpha_q^{k-2} & 0 \\
\alpha_1^{k-1} & \alpha_2^{k-1} & \cdots & \alpha_q^{k-1} & 1 \\
\end{array}
\right)=
  \left(
\begin{array}{cccccccc}
\beta & \alpha & 1 & 0& \cdots & 0 & 0 & 0 \\
0 & \beta & \alpha & 1& \cdots &0 &0 & 0 \\
\vdots & \vdots &\vdots &\vdots & \ddots & \vdots& \vdots & \vdots \\
0 & 0 & 0&0& \cdots & \beta&\alpha & 1 \\
\end{array}
\right)\\\times\left(
\begin{array}{ccccc}
\frac{1}{p(\alpha_1)} & \frac{1}{p(\alpha_2)} & \cdots & \frac{1}{p(\alpha_q)} & 0 \\
\frac{\alpha_1}{p(\alpha_1)} & \frac{\alpha_2}{p(\alpha_2)} & \cdots & \frac{\alpha_q}{p(\alpha_q)} & 0 \\
\vdots & \vdots & \ddots & \vdots & \vdots \\
\frac{\alpha_1^{k}}{p(\alpha_1)} & \frac{\alpha_2^{k}}{p(\alpha_2)} & \cdots & \frac{\alpha_q^{k}}{p(\alpha_q)} & 0 \\
\frac{\alpha_1^{k+1}}{p(\alpha_1)} & \frac{\alpha_2^{k+1}}{p(\alpha_2)} & \cdots & \frac{\alpha_q^{k+1}}{p(\alpha_q)} & 1 \\
\end{array}
\right).
\end{align*}	

For any non-zero vector $c'\in C'\setminus C$, we have
\[
 q-k= d(C')\leq d(c',C) \leq \rho(C)=q-k,
\]
so
\[
  d(c',C)= \rho(C)=q-k.
\]
Note that any non-zero vector $c'\in C'\setminus C$ has the form
$$c'=\left(\frac{a+b\alpha_1}{p(\alpha_1)},\frac{a+b\alpha_2}{p(\alpha_2)},\cdots,\frac{a+b\alpha_q}{p(\alpha_q)},0\right)+c$$
for some $a,b\in \f{q}$ which are not zero simultaneously and some $c\in C$.

So for any $a,b\in \f{q}$ which are not zero simultaneously, the vectors
\[
\left(\frac{a+b\alpha_1}{p(\alpha_1)},\frac{a+b\alpha_2}{p(\alpha_2)},\cdots,\frac{a+b\alpha_q}{p(\alpha_q)},0\right)+C
\]
are deep holes of the PRS code $PRS(q+1,k)$.
\end{proof}

We study the relationship between deep holes constructed from monic irreducible polynomials of degree $2$ in the above theorem and the deep holes defined by polynomials of degree $k$ in Theorem~\ref{deephole:degreek}.
\begin{thm}
Let $q$ be a prime power and let $k$ be an integer such that $2\leq k\leq q-2$ if $2\nmid q$ or $3\leq k\leq q-3$ if $2\mid q$. Suppose $\rho(PRS(q+1,k))=q-k$. If $k\leq q-3$,
	the deep holes $\{(u_f,v)\,|\,\deg(f)=k,\,v\in\f{q}\}$ are different from the deep holes in Theorem~\ref{deephole:irr2}. If $q$ is odd and $k=q-2$, the deep holes $\{(u_f,v)\,|\,\deg(f)=k,\,v\in\f{q}\}$ are contained in the deep holes in Theorem~\ref{deephole:irr2}.
\end{thm}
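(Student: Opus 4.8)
The plan is to recast both assertions as a single membership question in the supercode constructed in the proof of Theorem~\ref{deephole:irr2}. Fix a monic irreducible quadratic $p(x)=x^2+\alpha x+\beta$ and let $C'=C'_p$ be the $[q+1,k+2]$ code built from it. Its codewords are exactly the vectors
\[
\left(\frac{\psi(\alpha_1)}{p(\alpha_1)},\cdots,\frac{\psi(\alpha_q)}{p(\alpha_q)},c_{k+1}(\psi)\right),\qquad \deg\psi\le k+1,
\]
and, as established there, the deep holes of Theorem~\ref{deephole:irr2} coming from $p$ form precisely $C'_p\setminus C$. Hence a word $(u_f,v)$ is one of those deep holes for some $p$ if and only if $(u_f,v)\in C'_p\setminus C$ for some irreducible $p$. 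Since $p(\alpha_i)\neq0$ on $\f q$, membership $(u_f,v)\in C'_p$ is equivalent to: there exists $\psi$ with $\deg\psi\le k+1$ such that $f(x)p(x)\equiv\psi(x)\pmod{x^q-x}$ and $c_{k+1}(\psi)=v$. The whole argument then turns on the degree $\deg(fp)=\deg f+2=k+2$.

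For the case $k\le q-3$ I would argue by a pure degree count. Here $\deg(fp)=k+2\le q-1<q$, so $fp$ is already its own reduction modulo $x^q-x$; any $\psi$ of degree $\le k+1<q$ agreeing with $fp$ at all $q$ points of $\f q$ must coincide with $fp$ as a polynomial (their difference has degree $<q$ yet $q$ roots), forcing $\deg\psi=k+2$. This contradicts $\deg\psi\le k+1$. Consequently $(u_f,v)\notin C'_p$ for every irreducible $p$, and the degree-$k$ deep holes are disjoint from those of Theorem~\ref{deephole:irr2}.

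For the case $q$ odd and $k=q-2$ the degree count no longer obstructs, since now $\deg(fp)=q$ and $\psi=f p-c_{q-2}(f)(x^q-x)$ is a genuine polynomial of degree $\le q-1=k+1$. The crux is to match the last coordinate. I would read off the coefficient of $x^{q-1}$ in $fp$, obtaining $c_{q-1}(\psi)=c_{q-2}(f)\,\alpha+c_{q-3}(f)$, and then set
\[
\alpha:=\frac{v-c_{q-3}(f)}{c_{q-2}(f)},
\]
which is legitimate because $c_{q-2}(f)\neq0$ and forces $c_{k+1}(\psi)=v$. It remains to realize this prescribed $\alpha$ by an \emph{irreducible} quadratic: since $q$ is odd, $x^2+\alpha x+\beta$ is irreducible exactly when $\alpha^2-4\beta$ is a non-square, and as $\beta$ ranges over $\f q$ this discriminant takes every value, so a suitable $\beta$ exists. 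Finally $(u_f,v)\notin C$ because $\deg f=q-2$ exceeds $q-3$, whence $(u_f,v)\in C'_p\setminus C$ is a deep hole of Theorem~\ref{deephole:irr2}.

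The main obstacle is the $k=q-2$ direction, and within it the interplay between the last coordinate and the irreducibility requirement: the value $v$ rigidly determines the linear coefficient $\alpha$ of $p$, after which one must still be free to pick the constant term $\beta$ making the quadratic irreducible. This is exactly where oddness of $q$ is needed (for even $q$ one may be forced to $\alpha=0$, and $x^2+\beta$ is never irreducible), in line with the exclusion of $k=q-2$ for even $q$ in Theorem~\ref{deephole:irr2}.
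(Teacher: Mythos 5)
Your proposal is correct and follows essentially the same route as the paper: the disjointness for $k\le q-3$ is the same degree count after clearing the denominator $p(x)$ modulo $x^q-x$, and for $k=q-2$ you likewise read off that the prescribed last coordinate $v$ forces the linear coefficient $\alpha$ of $p$ (your $\alpha=(v-c_{q-3}(f))/c_{q-2}(f)$ agrees with the paper's choice $p(x)=x^2+(w+v)x+\beta$ after its normalization $f=x^k-wx^{k-1}$), and then pick $\beta$ via the discriminant. The only difference is presentational — you phrase everything as membership in the supercode $C'_p$ and handle the last coordinate through $c_{k+1}(\psi)$ rather than first reducing $(u_f,v)$ to $(u_g,0)$ — which does not change the substance.
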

\begin{proof}
	Wlog, we may assume $f(x)=x^k-wx^{k-1}$, as $(u_{x^i},0)\in PRS(q+1,k)$ for all $0\leq i\leq k-2$ and any non-zero scale multiple of a deep hole is still a deep hole. Let
	 \[
	  g(x)=f(x)-vx^{k-1}=x^k-(w+v)x^{k-1}.
	\]
	Then we have
	\[
	d((u_g,0),PRS(q+1,k))=d((u_f,v),PRS(q+1,k)).
	\]
	
	For the case $k\leq q-3$, suppose there are some monic irreducible polynomial $p(x)\in\f{q}[x]$ of degree $2$, some non-zero polynomial $a+bx\in\f{q}[x]$ and some polynomial $h(x)\in\f{q}[x]$ of degree $\leq k-2$ such that
	\[
	   g(x)=\frac{a+bx}{p(x)}+h(x)\qquad \forall x\in \f{q}.
	\]
	Then the polynomial of degree $k+2\leq q-1$
	\[
	  (g(x)-h(x))p(x)-(a+bx)
	\]
	has $q$ different zeros (as it is zero on $\f{q}$), which is impossible.
	So in this case, the deep holes $\{(u_f,v)\,|\,\deg(f)=k,\,v\in\f{q}\}$ are different from the deep holes in Theorem~\ref{deephole:irr2}.
	
	For the case $k=q-2$, let $p(x)=x^2+\alpha x+\beta \in \f{q}[x]$ be any monic irreducible polynomial over $\f{q}$. Let
	\[
	a+bx\equiv -(x^q-x)\mod p(x).
	\]
	Since $x^q-x$ is coprime to $p(x)$, the polynomial $a+bx$
	is non-zero. There exists some polynomial $h(x)\in\f{q}[x]$ of degree $\leq k-2$ such that
	\[
	(x^k-\alpha x^{k-1}-h(x))p(x)={a+bx}+x^q-x.
	\]
	Hence by choosing a monic irreducible polynomial\footnote{If $q$ is odd (or $q$ is even and $w+v=0$), one can always use the discriminant to determine if $p(x)$ is irreducible. That is, if $(w+v)^2-4\beta$ is a non-square in $\f{q}$, then $p(x)$ is irreducible over $\f{q}$. So in this case, there are $\frac{q-1}{2}$ many of $\beta\in\f{q}$ making $(w+v)^2-4\beta$ a non-square for given $w+v$. However the discriminant method is invalid for even $q$ and $w+v\neq 0$. We show the existence of $\beta$ by contradiction. Suppose for all $\beta\in\f{q}$, the quadratic polynomial $p(x)=x^2+(w+v) x+\beta$ is reducible over $\f{q}$. So $p(x)$ has two roots in $\f{q}$, saying $\gamma_1,\gamma_2$. Then $\gamma_1\neq \gamma_2$, as $\gamma_1+\gamma_2=w+v\neq 0$. On the other hand, for any $\gamma\in\f{q}$ there is at most one $\beta\in\f{q}$ such that $p(\gamma)=0$. So when $\beta$ runs over $\f{q}$ there are totally $2q$ pairwise distinct roots in $\f{q}$ which is impossible.}
	$$p(x)=x^2+(w+v) x+\beta \in \f{q}[x],$$ there is some polynomial $h(x)\in\f{q}[x]$ of degree $\leq k-2$ such that
	\[
	(g(x)-h(x))p(x)={a+bx}+x^q-x,
	\]
	which is also equivalent to
	\[
	g(x)=\frac{a+bx}{p(x)}+h(x)\qquad \forall x\in \f{q}.
	\]
	So in this case, the deep holes $\{(u_f,v)\,|\,\deg(f)=k,\,v\in\f{q}\}$ are contained in the deep holes in Theorem~\ref{deephole:irr2}.

\end{proof}

The following lemma shows that deep holes in Theorem~\ref{deephole:irr2} are disjoint for different monic irreducible polynomials of degree $2$ in the case $\leq q-3$. And the case $k=q-2$ will be studied in detail in Section~\ref{section:k=q-2}.

\begin{lem}\label{disjoint:degree2}
	Suppose $2\leq k\leq q-3$. For any non-zero polynomials $a_1+b_1x, a_2+b_2x\in \f{q}[x]$ and any two different monic irreducible polynomials $p_1(x), p_2(x)\in \f{q}[x]$ of degree $2$, there is no polynomial $f(x)\in \f{q}[x]$ of degree $\leq k-2$ such that
	\[
	  \frac{a_1+b_1x}{p_1(x)}=\frac{a_2+b_2x}{p_2(x)}+f(x)\qquad\forall x\in \f{q}.
	\]
\end{lem}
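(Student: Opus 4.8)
The plan is to clear denominators and reduce the rational identity to a polynomial one. Since $p_1$ and $p_2$ are irreducible of degree $2$, neither vanishes anywhere on $\f{q}$, so the hypothetical identity
\[
\frac{a_1+b_1x}{p_1(x)}=\frac{a_2+b_2x}{p_2(x)}+f(x)\qquad\forall x\in\f{q}
\]
is equivalent, after multiplying through by $p_1(x)p_2(x)$, to
\[
(a_1+b_1x)p_2(x)-(a_2+b_2x)p_1(x)=f(x)p_1(x)p_2(x)\qquad\forall x\in\f{q}.
\]
Writing $F(x)$ for the difference of the two sides, $F$ is a polynomial vanishing at every element of $\f{q}$, i.e.\ at $q$ distinct points.

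The key step — and the only place the hypothesis $k\le q-3$ enters — is a degree count that upgrades this pointwise vanishing to a genuine polynomial identity. The two explicit terms on the left have degree at most $3$, while $f(x)p_1(x)p_2(x)$ has degree at most $(k-2)+4=k+2\le q-1$. Hence $\deg F\le\max(3,\,q-1)=q-1<q$, and a nonzero polynomial of degree $<q$ cannot have $q$ roots. Therefore $F\equiv 0$, yielding the polynomial identity
\[
(a_1+b_1x)p_2(x)-(a_2+b_2x)p_1(x)=f(x)p_1(x)p_2(x).
\]

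To finish I would compare degrees once more and then invoke unique factorization. The left-hand side has degree at most $3$; if $f\neq 0$ the right-hand side has degree $\deg f+4\ge 4$, a contradiction, so necessarily $f=0$. The identity then reduces to $(a_1+b_1x)p_2(x)=(a_2+b_2x)p_1(x)$. Since $p_1$ and $p_2$ are distinct monic irreducibles they are coprime, so $p_1(x)\mid(a_1+b_1x)$; as $\deg(a_1+b_1x)\le 1<2=\deg p_1$, this forces $a_1+b_1x=0$, contradicting the hypothesis that $a_1+b_1x$ is nonzero.

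I expect the only genuine subtlety to be the degree bookkeeping in the middle step: one must verify that clearing denominators keeps the total degree strictly below $q$, which is exactly what $k\le q-3$ guarantees (and which would break down at $k=q-2$, consistent with the top case being handled separately in Section~\ref{section:k=q-2}). The remaining arguments — the final degree comparison and the coprimality of $p_1,p_2$ — are routine.
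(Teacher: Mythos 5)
Your proof is correct and follows essentially the same route as the paper: clear denominators, observe that the resulting polynomial vanishes on all of $\f{q}$ while having degree at most $k+2\le q-1$, conclude it is identically zero, and then derive a contradiction from $(a_1+b_1x)p_2(x)=(a_2+b_2x)p_1(x)$. You even spell out the final coprimality step, which the paper leaves implicit.
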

\begin{proof}
	Suppose there is a polynomial $f(x)\in \f{q}[x]$ of degree $\leq k-2$ such that
	\[
	\frac{a_1+b_1x}{p_1(x)}=\frac{a_2+b_2x}{p_2(x)}+f(x)\qquad\forall x\in \f{q}.
	\]
	Denote the polynomial
	\[
	   g(x)=(a_1+b_1x)p_2(x)-(a_2+b_2x)p_1(x)+f(x)p_1(x)p_2(x).
	\]
	Then we have $$\deg(g(x))=\deg(f(x))+4\leq k+2\leq q-1$$ and
	\[
	x^q-x\,|\,g(x).
	\]
	So $g(x)$ must be the zero polynomial. It enforces $f(x)=0$ and
	\[
	 (a_1+b_1x)p_2(x)=(a_2+b_2x)p_1(x)
	\]
	which is impossible under our assumption.
\end{proof}

   \subsection{The number of deep holes of PRS codes}
	A classification result of Kaipa~\cite{Kaipa17} about deep holes of $PRS(q+1,k)$, if $\rho(PRS(q+1,k))=q-k$, could allow us to count the number of cosets of deep holes. Consider the syndrome of the received word $u\in \f{q}^{q+1}$ defined as follows:
	\[
	Hu^T= \left(
	\begin{array}{ccccc}
	1 & 1 & \cdots & 1 & 0 \\
	\alpha_1 & \alpha_2 & \cdots & \alpha_q & 0 \\
	\vdots & \vdots & \ddots & \vdots & \vdots \\
	\alpha_1^{q-k-1} & \alpha_2^{q-k-1} & \cdots & \alpha_q^{q-k-1} & 0 \\
	\alpha_1^{q-k} & \alpha_2^{q-k} & \cdots & \alpha_q^{q-k} & 1 \\
	\end{array}
	\right)u^T.
	\]
	If $\rho(PRS(q+1,k))=q-k$, the coset $u+PRS(q+1,k)$ is a coset of deep holes of $PRS(q+1,k)$ if and only if the syndrome $Hu^T$ is not in the span of any $q-k-1$ vectors in the normal rational curve
	\[
	\mbox{NRC}=\{(1,x,\cdots,x^{q-k})\,|\,x\in\f{q}\}\cup \{(0,\cdots,0,1)\}.
	\]
	
	\begin{thm}[\cite{Kaipa17}]\label{numberofdeepholes:k=q-2}
		Let $q$ be an odd prime power and $k=q-2$. There are $(q-1)q^2$ cosets of deep holes of $PRS(q+1,k)$.
	\end{thm}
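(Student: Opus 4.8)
The plan is to translate the deep-hole count entirely into the syndrome space and reduce it to a point count in the projective plane. For $k=q-2$ we have $q-k=2$, and by the results cited in the introduction the covering radius $\rho(PRS(q+1,q-2))=q-k=2$ is known for odd $q$, so the syndrome criterion displayed just before the theorem applies. The parity check matrix $H$ has $q-k+1=3$ rows, so the syndrome $Hu^T$ ranges over all of $\f{q}^3$; distinct syndromes index distinct cosets, so there are exactly $q^3$ cosets of $PRS(q+1,q-2)$ in total, one for each vector of $\f{q}^3$. The whole problem is then to decide, for each nonzero syndrome, whether its coset consists of deep holes.

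Next I would specialize Kaipa's criterion to the boundary value $q-k-1=1$. The span of a single nonzero vector $v$ of the $\mbox{NRC}$ is just the line $\{\lambda v\mid\lambda\in\f{q}\}$ through the origin, which projectivizes to one point of $PG(2,q)$. Hence a coset with syndrome $s=Hu^T$ is a coset of deep holes if and only if $s\ne 0$ and the projective point $[s]\in PG(2,q)$ is not a point of the $\mbox{NRC}$; the zero syndrome (the code itself) lies in every such span and is automatically excluded. I would then identify the curve geometrically: $\mbox{NRC}=\{(1,x,x^2)\mid x\in\f{q}\}\cup\{(0,0,1)\}$ is precisely the conic $\{[X_0:X_1:X_2]\mid X_1^2=X_0X_2\}$, which over $\f{q}$ is nondegenerate and carries exactly $q+1$ points.

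Finally the count is bookkeeping. The plane $PG(2,q)$ has $q^2+q+1$ points, of which $q+1$ lie on the conic, leaving $q^2$ points off it. Each projective point is the image of exactly $q-1$ nonzero vectors of $\f{q}^3$, so the number of nonzero syndromes $s$ with $[s]$ off the conic is $q^2(q-1)$. By the criterion these are exactly the syndromes of cosets of deep holes, giving $(q-1)q^2$ such cosets. As a consistency check one may compute directly $(q^3-1)-(q+1)(q-1)=q^3-q^2=(q-1)q^2$, where $(q+1)(q-1)$ counts the nonzero vectors projecting onto the $q+1$ conic points.

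The main obstacle is conceptual rather than computational: one must correctly interpret ``the span of $q-k-1$ vectors in the $\mbox{NRC}$'' in the degenerate case $q-k-1=1$, and then keep careful track of the distinction between the syndrome vectors in $\f{q}^3$, which index the cosets, and their projective classes in $PG(2,q)$, which determine deep-hole status; this is precisely where the factor $q-1$ enters the count. The geometric input that the $\mbox{NRC}$ is a conic with $q+1$ rational points is standard, but it is worth recording that the odd-characteristic hypothesis is what guarantees the conic is nondegenerate so that this point count holds.
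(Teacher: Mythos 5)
Your argument is correct: the covering radius $2$ is known for odd $q$, the syndrome criterion with $q-k-1=1$ reduces the count to the $q^2$ points of $PG(2,q)$ off the $(q+1)$-point conic $\mbox{NRC}$, and multiplying by the $q-1$ nonzero scalars per projective point gives $(q-1)q^2$. The paper itself only cites \cite{Kaipa17} for this statement, but your computation is exactly the method the paper does spell out for the sibling count in Theorem~\ref{numberofdeepholes:k=q-3} (there subtracting the union of spans of pairs of $\mbox{NRC}$ vectors from $q^4$), so you have recovered the intended argument.
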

	
	\begin{thm}\label{numberofdeepholes:k=q-3}
		Let $q$ be a prime power and $k=q-3$. There are $(q-1)\left(\frac{q^3}{2}+q^2+\frac{q}{2}\right)$ cosets of deep holes of $PRS(q+1,k)$.
	\end{thm}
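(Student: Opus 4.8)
The plan is to combine Kaipa's syndrome criterion (the displayed characterization just before Theorem~\ref{numberofdeepholes:k=q-2}) with the known fact that $\rho(PRS(q+1,q-3))=q-k=3$, and then to reduce the count to a classical incidence problem for the twisted cubic in $PG(3,q)$. For $k=q-3$ the redundancy is $q+1-k=4$, so the syndrome $Hu^{T}$ lives in $\f{q}^{4}$ and the columns of $H$ are exactly the $q+1$ points
\[
\mathcal{C}=\{(1,x,x^2,x^3)\,|\,x\in\f{q}\}\cup\{(0,0,0,1)\}
\]
of the normal rational curve, viewed in $PG(3,q)$. By the criterion, a coset $u+PRS(q+1,q-3)$ consists of deep holes if and only if $Hu^{T}\neq 0$ and $Hu^{T}$ does not lie in the span of any $q-k-1=2$ of these columns. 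Since distinct cosets correspond bijectively to distinct syndromes, and the deep-hole property depends only on the projective class $[Hu^{T}]$, the number of deep-hole cosets equals $(q-1)$ times the number of projective points of $PG(3,q)$ that lie neither on $\mathcal{C}$ nor on any chord (secant line) $\langle P,Q\rangle$ with $P,Q\in\mathcal{C}$ distinct.

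First I would record the incidence lemma that makes the count clean: the $\binom{q+1}{2}$ chords of $\mathcal{C}$ are pairwise disjoint away from $\mathcal{C}$. Parametrizing $\mathcal{C}$ homogeneously by $PG(1,q)=\f{q}\cup\{\infty\}$ through $\lambda=(s:u)\mapsto P_\lambda=(s^3,s^2u,su^2,u^3)$, four curve points $P_{\lambda_1},\dots,P_{\lambda_4}$ are linearly independent precisely when the $\lambda_i$ are pairwise distinct, because the corresponding $4\times 4$ determinant equals $\pm\prod_{i<j}(s_iu_j-s_ju_i)$, which vanishes if and only if two of the $\lambda_i$ coincide. Hence two chords with disjoint parameter pairs span all of $PG(3,q)$ and are therefore skew, while two chords sharing one parameter are distinct lines meeting only in the corresponding point of $\mathcal{C}$. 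This argument is valid in every characteristic, and the homogeneous parametrization puts the point at infinity on the same footing as the affine points.

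With the lemma in hand the enumeration is routine. Each chord carries exactly $q-1$ points off $\mathcal{C}$ (it meets the arc $\mathcal{C}$ in exactly two points), and by disjointness these $(q-1)\binom{q+1}{2}=\tfrac{q^3-q}{2}$ points are all distinct. Since $PG(3,q)$ has $q^3+q^2+q+1$ points and $\mathcal{C}$ accounts for $q+1$ of them, the number of good projective points is
\[
(q^3+q^2+q+1)-(q+1)-\frac{q^3-q}{2}=\frac{q(q+1)^2}{2}.
\]
A point in this set is not in the span of any two columns of $H$, so its error distance is at least $3$; because $\rho=3$ it is exactly $3$, confirming these points are deep holes. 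Multiplying by $q-1$ gives $(q-1)\bigl(\tfrac{q^3}{2}+q^2+\tfrac{q}{2}\bigr)$ cosets, as claimed. As a sanity check, the identical scheme for $k=q-2$ replaces the twisted cubic by a conic in $PG(2,q)$, where the criterion asks only that $[Hu^{T}]$ avoid the $q+1$ curve points, yielding $(q-1)q^2$ cosets and recovering Theorem~\ref{numberofdeepholes:k=q-2}.

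The main obstacle is the incidence lemma, namely that no point off $\mathcal{C}$ lies on two chords, since everything else is bookkeeping. Fortunately this is exactly the assertion that the twisted cubic is an arc with the stronger ``disjoint chords'' property, which the generalized Vandermonde determinant delivers uniformly in $q$, independently of the characteristic. A secondary point requiring care is the appeal to $\rho(PRS(q+1,q-3))=3$ for both parities of $q$, which I take from the known covering-radius results quoted in the introduction.
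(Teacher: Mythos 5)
Your proof is correct and follows essentially the same route as the paper's: both apply Kaipa's syndrome criterion (valid since $\rho(PRS(q+1,q-3))=3$ is known) and count the syndromes lying in the span of at most two points of the normal rational curve, the key fact in each case being that any four such points are linearly independent. The only differences are cosmetic — you count projective points of $PG(3,q)$ and multiply by $q-1$ where the paper counts vectors in $\f{q}^{4}$ directly, and you supply the homogeneous Vandermonde justification that the paper leaves implicit.
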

	\begin{proof}
		Since any $4$ vectors in the $\mbox{NRC}$
			\[
		\left(
		\begin{array}{ccccc}
		1 & 1 & \cdots & 1 & 0 \\
		\alpha_1 & \alpha_2 & \cdots & \alpha_q & 0 \\
		\alpha_1^{2} & \alpha_2^{2} & \cdots & \alpha_q^{2} & 0 \\
		\alpha_1^{3} & \alpha_2^{3} & \cdots & \alpha_q^{3} & 1 \\
		\end{array}
		\right)
		\] are linearly independent over $\f{q}$, the union of all the span of $2$ vectors in $\mbox{NRC}$ has cardinality
		\[
		\binom{q+1}{2}(q-1)^2+(q+1)(q-1)+1.
		\]
		So the number of cosets of deep holes is
		\[
		q^{4}-\left(\binom{q+1}{2}(q-1)^2+(q+1)(q-1)+1\right)=(q-1)\left(\frac{q^3}{2}+q^2+\frac{q}{2}\right).
		\]
	\end{proof}

	\section{Complete deep holes of $PRS(q+1,q-2)$}\label{section:k=q-2}
	In this section, we focus on the case $q$ is odd and $k=q-2$. In this case, the covering radius of $PRS(q+1,q-2)$ is deterministic, saying $\rho(PRS(q+1,q-2))=2$. In~\cite{Kaipa17}, Kaipa gave a classification of the syndromes of deep holes of $PRS(q+1,q-2)$ using the $PGL(2,q)$ group action on the projective space $PG(2,q)$, but without explicit construction of the deep holes.
	In this section, we will show that deep holes in Theorem~\ref{deephole:irr2} are all the deep holes of $PRS(q+1,q-2)$. So all the deep holes of $PRS(q+1,q-2)$ are constructed explicitly.
	
	We first give some facts about the deep holes found in Theorem~\ref{deephole:irr2}. For any monic irreducible quadratic polynomial $p(x)\in\f{q}[x]$ we denote
	\[
	DH(p(x))=\left\{\frac{a+bx}{p(x)}\,|\,a,b\in\f{q}\,\mbox{are not zero simultaneously}\right\}.
	\]
	As there is one-to-one correspondence between rational functions in $DH(p(x))$ and the cosets of deep holes of the form
	\[
	\left(\frac{a+b\alpha_1}{p(\alpha_1)},\frac{a+b\alpha_2}{p(\alpha_2)},\cdots,\frac{a+b\alpha_q}{p(\alpha_q)},0\right)+C,
	\] we also denote by $DH(p(x))$ the cosets of deep holes of the above form defined by $p(x)$.
	
	\textbf{Fact 1.} We have
	\[
	  |DH(p(x))|=q^2-1,
	\]
	and
	\[
	  |DH(p_1(x))\cap DH(p_2(x))|=q-1
	\]
	for any two distinct $p_1(x),p_2(x)$.
	
	\begin{proof}
		It is obvious that
			\[
		|DH(p(x))|=q^2-1.
		\]
		The same argument in the proof of Lemma~\ref{disjoint:degree2} shows that the only possibility for
     	\[
        \frac{a_1+b_1x}{p_1(x)}=\frac{a_2+b_2x}{p_2(x)}+f(x)\qquad\forall x\in \f{q}.
        \]
		 is that $\deg(f(x))=k-2=q-4$ and
		\[
		g(x)=a(x^q-x)
		\]
		for some $a\in\f{q}^*$. So
		\[
		p_1(x)p_2(x)\,|\,a(x^q-x)-(a_1+b_1x)p_2(x)+(a_2+b_2x)p_1(x).
		\]
		Since $ p_1(x)$, and $p_2(x)$ are two distinct monic irreducible polynomial over $\f{q}$, the condition above is equivalent to
		\[\begin{cases}
		p_1(x)\,|\,a(x^q-x)-(a_1+b_1x)p_2(x),\\
		p_2(x)\,|\,a(x^q-x)+(a_2+b_2x)p_1(x),
		\end{cases}\]
		which has solution as follows
		\[\begin{cases}
		a_1+b_1x\equiv \frac{a(x^q-x)}{p_2(x)}\mod p_1(x),\\
		a_2+b_2x\equiv -\frac{a(x^q-x)}{p_1(x)}\mod p_2(x).
		\end{cases}\]
		So when $a$ runs over $\f{q}^*$, it gives the $q-1$ elements of $DH(p_1(x))\cap DH(p_2(x))$. And hence,
		 	\[
		 |DH(p_1(x))\cap DH(p_2(x))|=q-1.
		 \]
	\end{proof}

	\textbf{Fact 2.} For any three monic irreducible polynomials $p_1(x),p_2(x), p_3(x)$ of degree $2$, the condition
	\[
	  DH(p_1(x))\cap DH(p_2(x))\cap DH(p_3(x))\neq \emptyset
	\]
	if and only if $p_1(x),p_2(x), p_3(x)$ are linearly dependent, i.e.,
	there is unique $c\in\f{q}\setminus\{0,1\}$ such that
	\[
	p_1(x)=cp_2(x)+(1-c)p_3(x).
	\]

	\begin{proof}
		Suppose
		\[
		\frac{a_1+b_1x}{p_1(x)}\in DH(p_1(x))\cap DH(p_2(x))\cap DH(p_3(x)).
		\]
		Then there are $c_1,c_2\in\f{q}^*$ such that
		\[
		a_1+b_1x\equiv \frac{c_1(x^q-x)}{p_2(x)}\mod p_1(x),
		\]
		and
		\[
		a_1+b_1x\equiv \frac{c_2(x^q-x)}{p_3(x)}\mod p_1(x).
		\]
        So
		\[
		\frac{c_2(x^q-x)}{p_3(x)}\equiv \frac{c_1(x^q-x)}{p_2(x)}\mod p_1(x)
		\]
		or
		\[
		p_1(x)\,|\,(c_2p_2(x)-c_1p_3(x))(x^q-x).
		\]
		Since $x^q-x$ is coprime to $p_1(x)$, we have
		\[
		p_1(x)\,|\,c_2p_2(x)-c_1p_3(x).
		\]
		So $c_1\neq c_2$ and
		\[
		(c_2-c_1)p_1(x)=c_2p_2(x)-c_1p_3(x).
		\]
		Let $c=\frac{c_2}{c_2-c_1}$. Then
		$$c\in\f{q}\setminus\{0,1\},$$
		and
		\[
		p_1(x)=cp_2(x)+(1-c)p_3(x).
		\]
		If there is another $c'\in\f{q}\setminus\{0,1\}$ such that	
			\[
		p_1(x)=c'p_2(x)+(1-c')p_3(x).
		\]
		Then
		\[
		cp_2(x)+(1-c)p_3(x)=c'p_2(x)+(1-c')p_3(x)
		\]
		which is impossible unless $c'=c$. So $c$ is unique.
		
		Conversely, if there is $c\in\f{q}\setminus\{0,1\}$ such that
		\[
		p_1(x)=cp_2(x)+(1-c)p_3(x),
		\]
		then
		\[
		DH(p_1(x))\cap DH(p_2(x))\cap DH(p_3(x))\neq \emptyset.
		\]
		Let
		\[
		a_1+b_1x\equiv\frac{x^q-x}{p_2(x)}\mod p_1(x).
		\]
		Then
		\[
		\frac{a_1+b_1x}{p_1(x)}\in DH(p_1(x))\cap DH(p_2(x))
		\]
		By the assumption, we have
		\[
		\frac{x^q-x}{p_2(x)}\equiv \frac{c}{c-1}\frac{x^q-x}{p_3(x)}\mod p_1(x).
		\]
		So
		\[
		a_1+b_1x\equiv\frac{c}{c-1}\frac{x^q-x}{p_3(x)}\mod p_1(x),
		\]
		which implies
		\[
		\frac{a_1+b_1x}{p_1(x)}\in DH(p_1(x))\cap DH(p_3(x))
		\]
		Hence,
		\[
		\frac{a_1+b_1x}{p_1(x)}\in DH(p_1(x))\cap DH(p_2(x))\cap DH(p_3(x)).
		\]
		
	\end{proof}
	
	\textbf{Fact 3.} For any pairwise distinct monic irreducible quadratic polynomials $p_1(x),p_2(x),\cdots,p_j(x)$ if
	\[
	|DH(p_1(x))\cap DH(p_2(x))\cap \cdots\cap DH(p_j(x))|\neq \emptyset,
	\]
	then
	\[
	|DH(p_1(x))\cap DH(p_2(x))\cap \cdots\cap DH(p_j(x))|=q-1.
	\]
	
   Based on the above facts, we construct a hypergraph concerning of the distribution of irreducible quadratic polynomials, which makes the picture of deep holes more clear. Since $DH(p(x))$ is closed under any non-zero scaling, we consider two polynomials
   \[
     \frac{a+bx}{p(x)}\qquad\mbox{and}\qquad c\frac{a+bx}{p(x)},\quad c\in\f{q}^*
   \]
	as the same vertex in the hypergraph $G=(V,E)$. So the vertices $V$ of the hypergraph are defined to be
	\[
	  \bigcup_{p(x)}DH(p(x))/\f{q}^*
	\]
	where $p(x)$ runs through all monic irreducible polynomials of degree $2$. Any edge $e\in E$ is defined to be
	\[
	  e=DH(p(x))/\f{q}^*
	\]
	for some monic irreducible polynomial $p(x)$ of degree $2$.
	By Fact~1, any edge has $q+1$ vertices and any two edges intersects with $1$ vertex.  Facts~2 tells that any vertex has degree about $\frac{q-1}{2}$, as there are about half $c\in\f{q}$ making $p_1(x)$ irreducible if given $p_2(x)$ and $p_3(x)$.

	\begin{lem}\label{graph:2degrees}
	Any vertex $v$ in the hypergraph $G$ has degree $\frac{q-1}{2}$ or $\frac{q+1}{2}$.
	\end{lem}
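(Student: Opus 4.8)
The plan is to convert the degree computation into a point-counting problem in the affine plane $\mathbb{A}^2=\{(\alpha,\beta):\alpha,\beta\in\f{q}\}$, where the point $(\alpha,\beta)$ is identified with the monic quadratic $x^2+\alpha x+\beta$. The guiding observation comes from Fact~2: the relation $p_1(x)=c\,p_2(x)+(1-c)p_3(x)$ has coefficients summing to $1$, hence is an \emph{affine} relation, so three monic quadratics share a common vertex exactly when the three corresponding points of $\mathbb{A}^2$ are collinear. I therefore expect the degree of a vertex to equal the number of monic irreducible quadratics lying on a suitable line, and I will compute that number by a quadratic character sum.

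First I would fix a vertex $v$ together with one edge $DH(p_1(x))$ through it, writing $v=[\,(a_1+b_1x)/p_1(x)\,]$ with $(a_1,b_1)\neq(0,0)$. Reusing the computation in the proof of Fact~1, the condition $v\in DH(p_2)$ for $p_2\neq p_1$ amounts to $(a_1+b_1x)p_2(x)\equiv c(x^q-x)\pmod{p_1(x)}$ for some $c\in\f{q}^{*}$. Evaluating at a root $\theta\in\f{q^2}$ of $p_1$ (so $a_1+b_1\theta\neq0$ and $\theta^q-\theta\neq0$) gives $p_2(\theta)=c\gamma$, where $\gamma=(\theta^q-\theta)/(a_1+b_1\theta)$ is a fixed nonzero element. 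Since $p\mapsto p(\theta)$ is an $\f{q}$-affine bijection from the monic quadratics onto $\f{q^2}$ sending $p_1\mapsto0$, the quadratics $p_2$ sharing the vertex $v$, together with $p_1$ itself, are exactly the preimages of $\f{q}\gamma$; these fill out a single affine line $\ell_v\ni p_1$ of $\mathbb{A}^2$, and $\deg(v)$ equals the number $I(\ell_v)$ of monic irreducible quadratics on $\ell_v$. Because this constructs $\ell_v$ directly from one edge, it applies to every vertex and will let me exclude small degrees without a separate argument.

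The remaining step is to show $I(\ell)\in\{(q-1)/2,(q+1)/2\}$ for every line $\ell$ carrying at least one irreducible point. Write $\ell$ as $(\alpha(t),\beta(t))=(\alpha_0+ut,\ \beta_0+wt)$ with $(u,w)\neq(0,0)$ and $t\in\f{q}$; since $q$ is odd, the point is a monic irreducible quadratic iff its discriminant $P(t)=\alpha(t)^2-4\beta(t)$ is a non-square for the quadratic character $\chi$. Expansion gives $P(t)=u^2t^2+(2\alpha_0u-4w)t+(\alpha_0^2-4\beta_0)$, of degree $\le2$. If $u=0$ then $P$ is linear with nonzero slope $-4w$, hence a bijection of $\f{q}$, so exactly $(q-1)/2$ of its values are non-squares. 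If $u\neq0$ the leading coefficient $u^2$ is a nonzero square, so $\chi(u^2)=1$; because $\ell$ contains an irreducible point, $P$ assumes a non-square value and so is not a nonzero constant times the square of a linear polynomial, forcing its discriminant in $t$ to be nonzero. The standard evaluation then gives $\sum_{t\in\f{q}}\chi(P(t))=-1$, and splitting according to whether $P$ has two or no roots in $\f{q}$ yields $I(\ell)=(q-1)/2$ or $(q+1)/2$. As $\ell_v$ always contains the irreducible point $p_1$, we get $\deg(v)=I(\ell_v)\in\{(q-1)/2,(q+1)/2\}$.

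I expect the main obstacle to be the first step: showing cleanly that the edges through an \emph{arbitrary} vertex $v$ (without presupposing $\deg v\ge2$) are precisely the irreducible points of one well-defined line through $p_1$, and that this description is independent of the chosen edge. The evaluation-at-$\theta$ computation makes the $q-1$ companion quadratics explicit and exhibits their collinearity with $p_1$, which is exactly what forces $\deg(v)=I(\ell_v)\ge(q-1)/2\ge2$ and thereby rules out vertices of degree $\le1$. The subsidiary calculations requiring care are verifying that the leading coefficient of $P(t)$ is a nonzero square (so that $\chi$ of it equals $+1$) and that irreducibility of $p_1$ prevents $P$ from being a perfect square, since both are what pin the character sum to $-1$ and deliver the two admissible degree values.
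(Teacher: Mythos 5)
Your proof is correct and takes essentially the same approach as the paper's: both reduce $\deg(v)$ to counting monic irreducible quadratics in the one-parameter affine family $p_2(x)=p_1(x)+c(e_1x+e_2)$ (your line $\ell_v$ through $p_1$) and then count the values of the parameter for which the quadratic discriminant polynomial is a non-square, splitting into the degenerate linear case and the nondegenerate quadratic case. Your evaluation-at-$\theta$ description of the line and your use of the character-sum identity $\sum_{t}\chi(P(t))=-\chi(u^2)$ are only cosmetic variants of the paper's modular-inverse computation and its explicit parametrization of the conic $c^2-\delta=y^2$.
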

	
     \begin{proof}
     	For any fixed vertex $v=\frac{a_1+b_1x}{p_1(x)}\in DH(p_1(x))$ ($p_1(x)=x^2+\alpha_1 x+\beta_1$), we need to count the number of monic irreducible polynomials $p_2(x)=x^2+\alpha_2 x+\beta_2$ ($p_2(x)\neq p_1(x)$) such that
     	\begin{equation}\label{conditionofintersection}
     	a_1+b_1x\equiv \frac{c(x^q-x)}{p_2(x)}\mod p_1(x)\tag{*}
     	\end{equation}
     	for some $c\in\f{q}^*$.
     	
     	Because $a_1+b_1x$ is coprime to $p_1(x)$, let $f(x)$ be the inverse of $a_1+b_1x$ modulo $p_1(x)$, i.e.,
     	\[
     	 (a_1+b_1x)f(x)\equiv 1 \mod p_1(x).
     	\]
     	Since $f(x)$ and $p_2(x)$ are coprime to $p_1(x)$, we multiply $f(x)p_2(x)$ on both sides of the equation~(\ref{conditionofintersection}) and obtain
     	\[
     	p_2(x)\equiv c(x^q-x)f(x)\mod p_1(x).
     	\]
     	Denote
     	\[
     	  e_1x+e_2\equiv c(x^q-x)f(x)\mod p_1(x).
       	\]
     	Then $e_1, e_2$ are non-zero simultaneously. The condition~(\ref{conditionofintersection}) is equivalent to
     	\[
          p_2(x)\equiv c(e_1x+e_2)\mod p_1(x).
     	\]
        So the existence of $p_2(x)$ and $c\in\f{q}^*$ is equivalent to that the discriminant
        \begin{align*}
         \Delta&=(\alpha_1+e_1c)^2-4(\beta_1+e_2c)\\
               &=e_1^2c^2+(2\alpha_1e_1-4e_2)c+(\alpha_1^2-4\beta_1)\\
        \end{align*}
     	is a non-square for some $c\in\f{q}^*$. Hence the degree of the vertex $v$ is
     	\[
     	\deg(v)=|\{c\in\f{q}^*\,|\,\Delta \mbox{ is a non-square}\}|+1.
     	\]
     	Since $c=0$ makes $\Delta$ a non-square, we have
     	\[
     	\deg(v)=|\{c\in\f{q}\,|\,\Delta \mbox{ is a non-square}\}|.
     	\]
     	
     	\begin{itemize}
     		\item If $e_1=0$, then $e_2\neq 0$ and there are $\frac{q-1}{2}$ many of $c\in\f{q}$ such that $\Delta$ is a non-square. So in this case,
     		\[
     		\deg(v)=\frac{q-1}{2}
     		\]
     		\item If $e_1\neq 0$, then the discriminant
     		\begin{align*}
     		\Delta&=e_1^2c+(2\alpha_1e_1-4e_2)c+(\alpha_1^2-4\beta_1)\\
     		&=\left(e_1c+\frac{\alpha_1e_1-2e_2}{e_1}\right)^2+ (\alpha_1^2-4\beta_1)-\left(\frac{\alpha_1e_1-2e_2}{e_1}\right)^2
     		\end{align*}
     		Denote
     		\[
     		\delta=(\alpha_1^2-4\beta_1)-\left(\frac{\alpha_1e_1-2e_2}{e_1}\right)^2.
     		\]
     		Then after a linear transformation of $\f{q}$,
     		\[
     		\deg(v)=|\{c\in\f{q}\,|\,c^2-{\delta}\,\mbox{ is a non-square in}\, \f{q}\}|.
     		\]	
     		Since $\alpha_1^2-4\beta_1$ is a non-square, the constant term is non-zero, i.e.,
     		\[
     		\delta\neq 0.
     		\]
     		Next, we count the number of $c\in\f{q}$ that makes $c^2-{\delta}$ a square. Suppose
     		\[
     		c^2-{\delta}=y^2
     		\]
     		for some $y\in\f{q}$. Then
     		\[
     		\delta=(c-y)(c+y).
     		\]
     		It is easy to see that the above equation has solutions
     		\[
     		c=\frac{\delta/l+l}{2}, y=\frac{\delta/l-l}{2},\,\,\forall l\in\f{q}^*.
     		\]
     		So
     		\[
     		|\{c\in\f{q}\,|\,c^2-{\delta}\,\mbox{ is a square in}\, \f{q}\}|=\left|\left\{\frac{\delta/l+l}{2}\,|\,l\in\f{q}^*\right\}\right|.
     		\]
     		Hence,
     		\begin{align*}
     		\deg(v)&=|\{c\in\f{q}\,|\,c^2-{\delta}\,\mbox{ is a non-square in}\, \f{q}\}|\\
     		&=q-\left|\left\{\frac{\delta/l+l}{2}\,|\,l\in\f{q}^*\right\}\right|\\
     		&=\begin{cases}
     		\frac{q-1}{2}\,\qquad\mbox{if $\delta$ is a square;}\\
     		\frac{q+1}{2}\,\qquad\mbox{if $\delta$ is a non-square.}
     		\end{cases}
     		\end{align*}
     	\end{itemize}
     	
     \end{proof}

	 \begin{lem}\label{distributionofirreducible}
	    In any edge of the hypergraph $G$, half of the vertices have degree $\frac{q+1}{2}$, and the other half of the vertices have degree $\frac{q-1}{2}$.
	 \end{lem}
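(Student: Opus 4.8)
The plan is to fix an edge $e = DH(p_1(x))/\f{q}^*$ with $p_1(x) = x^2 + \alpha_1 x + \beta_1$ and to determine how the degrees of its $q+1$ vertices, already known by Lemma~\ref{graph:2degrees} to lie in $\{\tfrac{q-1}{2}, \tfrac{q+1}{2}\}$, are distributed. First I would pass to the field $\f{q^2} = \f{q}[x]/(p_1(x))$, writing $\theta$ for a root of $p_1$ and $\bar\theta = \theta^q$ for its conjugate, so that $(\theta - \bar\theta)^2 = \alpha_1^2 - 4\beta_1$ is a fixed non-square. Each vertex of $e$ is a class $\frac{a+bx}{p_1(x)}$ modulo scaling, hence corresponds bijectively to the point $(a:b) \in \mathbb{P}^1(\f{q})$, equivalently to the class of $a + b\theta$ in $\f{q^2}^*/\f{q}^*$; there are exactly $q+1$ of them.

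The key step re-reads the criterion from the proof of Lemma~\ref{graph:2degrees} in field-theoretic terms. Writing $E := e_2 + e_1\theta$ and evaluating the defining congruence $e_1 x + e_2 \equiv (x^q - x)f(x) \pmod{p_1(x)}$ at $\theta$ gives $E = (\bar\theta - \theta)/(a + b\theta)$, and the computation there shows $\deg(v) = \tfrac{q+1}{2}$ exactly when $N_{\f{q^2}/\f{q}}(E)$ is a non-square (the degenerate case $e_1 = 0$, where $E \in \f{q}^*$ has square norm, correctly returns $\tfrac{q-1}{2}$, so the statement is uniform). Taking norms yields $N(E) = -(\alpha_1^2 - 4\beta_1)/Q(a,b)$, where $Q(a,b) := N(a+b\theta) = a^2 - \alpha_1 ab + \beta_1 b^2$ is the anisotropic norm form attached to $p_1$. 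Since $\alpha_1^2 - 4\beta_1$ is a non-square, $N(E)$ is a non-square if and only if $Q(a,b)$ lies in one fixed square class; thus the degree of a vertex is determined entirely by the square class of $Q(a,b)$, taking one value on each class.

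It then remains to count, over the $q+1$ vertices, how often $Q(a,b)$ is a square. Because $Q(a,b) = N_{\f{q^2}/\f{q}}(a+b\theta)$ and the norm is multiplicative, the map $\f{q^2}^*/\f{q}^* \to \{\pm 1\}$ sending the class of $a+b\theta$ to $\chi(N(a+b\theta))$ (with $\chi$ the quadratic character of $\f{q}$) is a well-defined homomorphism, since scaling by $\f{q}^*$ multiplies the norm by a square. Surjectivity of the norm $\f{q^2}^* \to \f{q}^*$ makes this homomorphism nontrivial, so its kernel has index $2$: exactly $\tfrac{q+1}{2}$ of the vertices have $Q(a,b)$ a square and the remaining $\tfrac{q+1}{2}$ have it a non-square. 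Combined with the dichotomy of the previous paragraph, exactly $\tfrac{q+1}{2}$ vertices of $e$ have degree $\tfrac{q+1}{2}$ and the other $\tfrac{q+1}{2}$ have degree $\tfrac{q-1}{2}$, which is the claim.

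I expect the main obstacle to be the middle step: faithfully converting the discriminant-and-$\delta$ bookkeeping of Lemma~\ref{graph:2degrees} into the clean statement that the degree is controlled by the square class of the norm form, while tracking the stray factors of $-1$, the non-square $\alpha_1^2 - 4\beta_1$, and the value of $\chi(-1)$. It is worth noting, however, that the precise value of $\chi(-1)$ is irrelevant to the final count: once the degree is known to be constant on each square-class fiber of $Q$ and the two fibers are shown to have equal size $\tfrac{q+1}{2}$, the even split follows regardless of which fiber carries which degree. The equidistribution itself is immediate from multiplicativity of the norm and requires no character-sum estimate.
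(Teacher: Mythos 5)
Your proof is correct, but it takes a genuinely different route from the paper. The paper's argument is a short global double count: writing $m$ and $n$ for the number of vertices of the edge $e$ of degree $\frac{q-1}{2}$ and $\frac{q+1}{2}$ respectively, it observes (via Fact~1) that each of the other $\frac{q^2-q}{2}-1$ edges meets $e$ in exactly one vertex while each vertex $v\in e$ lies on exactly $\deg(v)-1$ other edges, giving $m\left(\frac{q-1}{2}-1\right)+n\left(\frac{q+1}{2}-1\right)=\frac{q^2-q}{2}-1$; combined with $m+n=q+1$ this forces $m=n=\frac{q+1}{2}$. You instead work locally inside the single edge: identifying its vertices with $\mathbb{P}^1(\f{q})\cong\f{q^2}^*/\f{q}^*$, you show that the degree dichotomy of Lemma~\ref{graph:2degrees} is governed by the square class of the norm form $Q(a,b)=N_{\f{q^2}/\f{q}}(a+b\theta)$, and then get the even split from multiplicativity and surjectivity of the norm. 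I checked the delicate middle step: with $E=(\bar\theta-\theta)/(a+b\theta)$ one indeed finds $-\delta=4N(E)/e_1^2$, so $\deg(v)=\frac{q+1}{2}$ exactly when $N(E)$ is a non-square, and the $e_1=0$ case is consistent (note the paper's own passage from $c^2+\delta$ to $c^2-\delta$ contains a harmless sign slip, and as you say the value of $\chi(-1)$ is irrelevant to the final count). The trade-off: the paper's proof is shorter and needs only the intersection pattern of edges and the total number of irreducible quadratics, whereas yours requires more careful bookkeeping but yields strictly more information --- it tells you exactly \emph{which} vertices carry which degree, namely those on which the anisotropic norm form attached to $p_1$ is a square versus a non-square, something the global count cannot see.
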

	
	\begin{proof}
	Let $e=DH(p(x))/\f{q}^*$ be an edge of $G$. Suppose there are $m$ vertices in $e$ of degree $\frac{q-1}{2}$ and $n$ vertices in $e$ of degree $\frac{q+1}{2}$. There are totally $\frac{q^2-q}{2}$ edges in $G$. By Fact~1 besides the edge $e$, any of the other $\frac{q^2-q}{2}-1$ edges intersects with $e$. So  
    \[
    m\left(\frac{q-1}{2}-1\right)+n\left(\frac{q+1}{2}-1\right)=\frac{q^2-q}{2}-1.
    \]	
   By Lemma~\ref{graph:2degrees}, we have
   \[ m+n=q+1.\]
		So
		\[
		 m=n=\frac{q+1}{2}.
		\]
	\end{proof}
	
		\begin{thm}\label{numberofvertices}
		The hypergraph $G$ has $|V|=q^2$ vertices.
	\end{thm}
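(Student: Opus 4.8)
The plan is to determine $|V|$ by two double-counting identities, using only Fact~1 (each edge contains $q+1$ vertices and any two distinct edges meet in exactly one vertex) together with Lemma~\ref{graph:2degrees} (every vertex has degree $\frac{q-1}{2}$ or $\frac{q+1}{2}$). Let $N=\frac{q^2-q}{2}$ be the number of edges, which equals the number of monic irreducible quadratic polynomials over $\f{q}$. Let $a$ denote the number of vertices of degree $\frac{q+1}{2}$ and $b$ the number of vertices of degree $\frac{q-1}{2}$, so that the quantity to be found is $|V|=a+b$. Because Lemma~\ref{graph:2degrees} allows only these two degrees, every vertex is accounted for by $a$ or $b$, and the strategy is to pin down $a$ and $b$ by two independent linear relations.

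First I would count incident vertex--edge pairs. Since each of the $N$ edges contains exactly $q+1$ vertices, there are $N(q+1)$ such incidences; summing the degrees over all vertices gives the same total, so
\[
a\,\frac{q+1}{2}+b\,\frac{q-1}{2}=N(q+1)=\frac{q(q-1)(q+1)}{2}.
\]
Next I would count unordered pairs of distinct edges. By Fact~1 every such pair meets in exactly one vertex, so assigning to each pair its unique common vertex is a bijection from the set of all $\binom{N}{2}$ edge-pairs onto the set of edge-pairs sharing a vertex; counting the latter vertex-by-vertex gives
\[
a\binom{(q+1)/2}{2}+b\binom{(q-1)/2}{2}=\binom{N}{2}=\frac{q(q-1)(q-2)(q+1)}{8}.
\]

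These are two independent linear equations in $a$ and $b$. Clearing denominators, the first becomes $a(q+1)+b(q-1)=q(q-1)(q+1)$, and the second (after dividing by $q-1$) becomes $a(q+1)+b(q-3)=q(q-2)(q+1)$; subtracting yields $2b=q(q+1)$, hence $b=\frac{q^2+q}{2}$, and back-substitution gives $a=\frac{q^2-q}{2}$. Therefore $|V|=a+b=q^2$, as claimed. The two points that require care are (i) verifying that any two distinct edges genuinely meet in exactly one vertex, so that the pair-count is a clean bijection — this is exactly Fact~1 after passing to the quotient by $\f{q}^*$, since the $q-1$ common Laurent functions form a single $\f{q}^*$-orbit — and (ii) confirming via Lemma~\ref{graph:2degrees} that only the two degree values $\frac{q\pm1}{2}$ occur, so the sums collapse to two equations in the two unknowns $a,b$. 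Beyond keeping the binomial arithmetic straight, I expect no substantive obstacle; the argument needs neither Lemma~\ref{distributionofirreducible} nor the (circular-for-us) count in Theorem~\ref{numberofdeepholes:k=q-2}.
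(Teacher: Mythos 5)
Your proof is correct, and it reaches the count by a route that is close in spirit to the paper's but organized differently. The paper first proves Lemma~\ref{distributionofirreducible} (in each fixed edge exactly half of the $q+1$ vertices have degree $\frac{q+1}{2}$ and half have degree $\frac{q-1}{2}$) by double-counting, inside a single edge $e$, the intersections of $e$ with the other $\frac{q^2-q}{2}-1$ edges; it then divides the incidence multiset by the two multiplicities to get $|V|=\frac{q^2-q}{2}+\frac{q(q+1)}{2}=q^2$. You instead skip that intermediate lemma and work globally: one equation from counting vertex--edge incidences, a second from counting unordered pairs of edges via their unique common vertex, giving $a(q+1)+b(q-1)=q(q-1)(q+1)$ and $a(q+1)+b(q-3)=q(q-2)(q+1)$, hence $a=\frac{q^2-q}{2}$, $b=\frac{q^2+q}{2}$, $|V|=q^2$. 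The ingredients are identical (Fact~1 and Lemma~\ref{graph:2degrees}); your version is a ``globalized'' form of the paper's local count, and it buys a slightly more self-contained argument --- you recover the degree distribution $a,b$ as a byproduct rather than needing the per-edge half-and-half statement as input --- at the cost of not recording the finer local information that Lemma~\ref{distributionofirreducible} provides. Your two parenthetical cautions are exactly the right ones: the single-common-vertex property does follow from Fact~1 because the $q-1$ common rational functions form one $\f{q}^*$-orbit, and every vertex lies on at least one edge by the definition of $V$, so the two degree values exhaust all vertices. The arithmetic checks out.
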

	\begin{proof}
		As a multi-set, the union of all the edges has cardinality
		\[
		 |\{\mbox{monic irreducible polynomials of degree $2$}\}|(q+1)=\frac{q^2-q}{2}(q+1).
		\]
		By Lemma~\ref{distributionofirreducible}, half of the elements in the multi-set have multiplicity $\frac{q+1}{2}$, and the other half elements in the multi-set have multiplicity $\frac{q-1}{2}$.
          So the hypergraph has
		\[
		|V|= \frac{\frac{\frac{q^2-q}{2}(q+1)}{2}}{\frac{q+1}{2}}+\frac{\frac{\frac{q^2-q}{2}(q+1)}{2}}{\frac{q-1}{2}}=q^2
		\]
		vertices.
	\end{proof}
	
	With the preparation above, we now turn to our main result of this section.
	\begin{thm}
		Let $q$ be an odd prime power. When $p(x)$ runs over all the monic irreducible polynomials of degree $2$ over $\f{q}$, the deep holes in Theorem~\ref{deephole:irr2} are all the deep holes of $PRS(q+1,q-2)$.
	\end{thm}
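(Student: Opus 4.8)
The plan is to prove this by a pure counting argument, matching the number of deep-hole cosets we have constructed against the exact total supplied by Kaipa. By Theorem~\ref{deephole:irr2} we already know that every coset lying in $\bigcup_{p}DH(p(x))$, where $p$ ranges over all monic irreducible quadratics, is a genuine coset of deep holes of $PRS(q+1,q-2)$. Thus the inclusion ``constructed deep holes $\subseteq$ all deep holes'' is automatic, and it suffices to show that the number of distinct constructed cosets equals the number $(q-1)q^2$ of all deep-hole cosets provided by Theorem~\ref{numberofdeepholes:k=q-2}. Once the two cardinalities agree, the inclusion must be an equality, which is exactly the assertion.

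To count the constructed cosets I would pass to the hypergraph $G=(V,E)$ whose vertex set is $V=\bigcup_{p}DH(p(x))/\f{q}^*$, i.e.\ the set of $\f{q}^*$-orbits of constructed deep-hole cosets under the scaling action $u+C\mapsto cu+C$ (well defined since $C$ is linear, so $cC=C$). The key input is Theorem~\ref{numberofvertices}, which gives $|V|=q^2$. It then remains to determine the size of each orbit. Because a deep-hole coset $u+C$ satisfies $u\notin C$, an equality $c_1u+C=c_2u+C$ with $c_1,c_2\in\f{q}^*$ would give $(c_1-c_2)u\in C$ and hence $c_1=c_2$; therefore the $q-1$ scalar multiples of any constructed deep hole are pairwise distinct cosets, and every $\f{q}^*$-orbit has size exactly $q-1$. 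Consequently the number of distinct constructed cosets is $|V|\cdot(q-1)=q^2(q-1)=(q-1)q^2$.

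Comparing this with the count $(q-1)q^2$ of Theorem~\ref{numberofdeepholes:k=q-2} closes the argument: the constructed deep holes are a subset of all deep holes with the same (finite) cardinality, so they exhaust all deep holes of $PRS(q+1,q-2)$. I expect the theorem itself to be a short bookkeeping step; all the genuine difficulty is front-loaded into the vertex count $|V|=q^2$ of Theorem~\ref{numberofvertices}, which in turn relies on the detailed hypergraph analysis in Lemmas~\ref{graph:2degrees} and~\ref{distributionofirreducible} (the discriminant/square-counting computation showing each vertex has degree $\tfrac{q-1}{2}$ or $\tfrac{q+1}{2}$, split evenly on every edge). The main obstacle is thus establishing that degree distribution and the resulting vertex total; with that granted, the final equality $|V|(q-1)=(q-1)q^2$ and the conclusion follow immediately.
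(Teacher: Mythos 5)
Your proposal is correct and follows essentially the same route as the paper: the paper's proof is exactly the two-line counting argument comparing $|V|(q-1)=q^2(q-1)$ from Theorem~\ref{numberofvertices} with the total $(q-1)q^2$ from Theorem~\ref{numberofdeepholes:k=q-2}. Your added verification that each $\f{q}^*$-orbit of cosets has size exactly $q-1$ (since a deep hole $u$ satisfies $u\notin C$) is a small point the paper leaves implicit, but otherwise the arguments coincide.
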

	\begin{proof}
		By Theorem~\ref{numberofvertices}, there are
		\[
		|V|(q-1)= q^2(q-1)
		\]
	cosets of deep holes found in Theorem~\ref{deephole:irr2}.
	
	On the other hand, by Theorem~\ref{numberofdeepholes:k=q-2} there are totally $q^2(q-1)$ cosets of deep holes. So the deep holes in Theorem~\ref{deephole:irr2} are all the deep holes of $PRS(q+1,q-2)$ if $p(x)$ runs over all the monic irreducible polynomials of degree $2$.

\end{proof}

\section{Deep holes of $PRS(q+1,q-3)$}\label{section:k=q-3}
In this section, we extend the construction using irreducible quadratic polynomials to irreducible cubic polynomials for the code $PRS(q+1,q-3)$. More deep holes of $PRS(q+1,q-3)$ are obtained.

In Section~\ref{section:k=q-2} it is proved that for odd $q$ and $k=q-2$ there are totally $q^2(q-1)$ cosets of deep holes constructed in Theorem~\ref{deephole:irr2} which form all the deep holes of the PRS code $PRS(q+1,q-2)$. But for the code $PRS(q+1, q-3)$, by Lemma~\ref{disjoint:degree2}
the total number of cosets of deep holes in Theorems~\ref{deephole:irr2} and~\ref{deephole:degreek} is
\[
\frac{q^2-q}{2}(q^2-1)+q(q-1)=(q-1)\left(\frac{q^3}{2}+\frac{q}{2}\right).
\]
By Theorem~\ref{numberofdeepholes:k=q-3} the exact number of cosets of deep holes of $PRS(q+1, q-3)$ is
\[
(q-1)\left(\frac{q^3}{2}+q^2+\frac{q}{2}\right).
\]
Hence there are other deep holes! 

\begin{thm}[Construction from irreducible cubic polynomials]\label{deephole:irr3}
	Let $q$ be a prime power and let $k=q-3$. Let $C=PRS(q+1,k)$. Let $p(x)\in\f{q}[x]$ be a monic irreducible polynomial of degree $3$.
	\begin{enumerate}
		\item There are totally
		\[
		(q-1)\frac{q^2+q+2}{2}
		\]
		cosets of deep holes of the form
		\[
		(u_f,0)+C
		\]
		where $f(x)=\frac{a+bx+cx^2}{p(x)}$ for some non-zero polynomial $a+bx+cx^2\in\f{q}[x]$.
		\item There are new cosets of deep holes of $C$ of the form above besides the deep holes in Theorem~\ref{deephole:irr2} and Theorem~\ref{deephole:degreek}.
	\end{enumerate}
	
\end{thm}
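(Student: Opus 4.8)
The plan is to drop the supercode method of Theorem~\ref{deephole:irr2} and argue through the syndrome characterization recalled before Theorem~\ref{numberofdeepholes:k=q-2}. The reason is structural: the natural MDS supercode spanned by $C=PRS(q+1,q-3)$ together with the vectors attached to $1/p,\,x/p,\,x^2/p$ has dimension $k+3=q$, hence minimum distance $(q+1)-q+1=2$, which is one less than $\rho(C)=q-k=3$. Thus a word $(u_{g/p},0)$ with $\deg g\le 2$ has error distance only a priori in $\{2,3\}$, and the whole problem becomes deciding which value occurs. Since $\rho(PRS(q+1,q-3))=q-k$ is known, the coset $(u_{g/p},0)+C$ is a deep hole precisely when its syndrome, viewed in $PG(3,q)$, lies off every real secant of the twisted cubic $\mathrm{NRC}=\{(1,x,x^2,x^3)\,|\,x\in\f{q}\}\cup\{(0,0,0,1)\}$, i.e. off every line through two rational points of $\mathrm{NRC}$.

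For part (1) I first compute the syndrome $(S_0,S_1,S_2,S_3)$ with $S_j=\sum_{x\in\f{q}}x^jg(x)/p(x)$ and $g=a+bx+cx^2$. Setting $m_j=\sum_{x\in\f{q}}x^j/p(x)$ and $p(x)=x^3+\alpha x^2+\beta x+\gamma$, the identity $x^jp(x)/p(x)=x^j$ and the vanishing $\sum_{x}x^i=0$ for $0\le i\le q-2$ give the recurrence $m_{j+3}+\alpha m_{j+2}+\beta m_{j+1}+\gamma m_j=0$ for $0\le j\le q-2$. Hence every syndrome is orthogonal to $\eta_p=(\gamma,\beta,\alpha,1)$, so all these syndromes lie in the single hyperplane $H_p=\eta_p^{\perp}\cong PG(2,q)$. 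The linear map $g\mapsto(S_0,S_1,S_2,S_3)$ is injective, since a degree count as in Lemma~\ref{disjoint:degree2} forces $(u_{g/p},0)\in C$ to satisfy $g=Pp$ with $\deg P\le q-4$ and therefore $g=0$; so the image is all of $H_p$. The crucial point is that $p$ irreducible makes $H_p$ disjoint from every rational point of $\mathrm{NRC}$: a finite one would be a root of $p$ in $\f{q}$, and the point at infinity pairs with $\eta_p$ to $1$. Consequently no real secant lies inside $H_p$, so each of the $\binom{q+1}{2}$ real secants meets $H_p$ in exactly one (necessarily off-curve) point, and these points are pairwise distinct because an off-curve point lies on at most one real secant — two secants through a common off-curve point would place four points of the cubic in one plane. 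Therefore the non-deep points of $H_p$ are exactly these $\binom{q+1}{2}=\tfrac{q^2+q}{2}$ points, leaving $(q^2+q+1)-\tfrac{q^2+q}{2}=\tfrac{q^2+q+2}{2}$ deep points and hence $(q-1)\tfrac{q^2+q+2}{2}$ deep cosets of the stated form.

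For part (2) I would compare totals and then place the surplus inside the cubic family. By Theorem~\ref{numberofdeepholes:k=q-3} there are $(q-1)(\tfrac{q^3}{2}+q^2+\tfrac{q}{2})$ deep cosets in all, whereas those produced by Theorems~\ref{deephole:irr2} and~\ref{deephole:degreek} number only $(q-1)(\tfrac{q^3}{2}+\tfrac{q}{2})$, as computed at the start of this section using Lemma~\ref{disjoint:degree2}; so $(q-1)q^2$ deep cosets are neither quadratic nor of degree $k$. A deep point $P=(P_0,P_1,P_2,P_3)$ is realized by some $\tfrac{g}{p}$ with a given monic cubic $p$ exactly when $P\in H_p$, that is, when the coefficients of $p$ satisfy the one linear equation $\gamma P_0+\beta P_1+\alpha P_2=-P_3$. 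As a deep point has $(P_0,P_1,P_2)\neq 0$ (otherwise $P$ is the point at infinity of $\mathrm{NRC}$), this equation describes an affine plane of $q^2$ monic cubics, and the L-function (Weil) estimate for the number of irreducible cubics with a prescribed linear relation among their coefficients guarantees an irreducible one on this plane. Thus every surplus point is of the cubic form for some irreducible $p$, producing genuinely new deep holes; the same count in fact suggests that, for $q$ large, the cubic construction recovers every deep hole.

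The principal obstacle is exactly the breakdown noted in the first paragraph: the supercode now has minimum distance $2<3=\rho$, so membership in it no longer certifies a deep hole, and one must pass to the finer secant-avoidance criterion. Everything then rests on the incidence geometry of the twisted cubic in $PG(3,q)$, above all on the two facts that $H_p$ misses all rational points of the curve while each real secant crosses $H_p$ once and only once; these are what turn the count into the clean value $\binom{q+1}{2}$. The remaining delicate verification is the exponential-sum input for part (2) — bounding, and handling small $q$ for, the number of irreducible cubics meeting the coefficient hyperplane attached to a surplus point.
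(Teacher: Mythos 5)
Your part (1) is correct and takes a genuinely different route from the paper's. The paper stays on the polynomial side: it characterizes $d((u_{g/p},0),C)=q-k-1$ by the existence of $h$ with $g-hp$ splitting into $q-2$ (resp.\ $q-1$) distinct linear factors, proves such a splitting is unique, and counts $(q-1)\bigl(\binom{q}{q-2}+\binom{q}{q-1}\bigr)=(q-1)\frac{q^2+q}{2}$ non-deep cosets, which yields the same total. You transport everything to the syndrome space: the syndromes of the $g/p$ fill exactly the hyperplane $H_p$ dual to the coefficient vector of $p$; irreducibility of $p$ is precisely the statement that $H_p$ misses the normal rational curve; hence the $\binom{q+1}{2}$ secants puncture $H_p$ in $\binom{q+1}{2}$ pairwise distinct off-curve points. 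All of these steps check out (the injectivity degree count and the general-position facts about the twisted cubic are as you say). The two computations are dual — a $(q-2)$-subset $S$ over which $g-hp$ splits corresponds to the secant through the two curve points indexed by $\f{q}\setminus S$ — but your formulation makes the count $\binom{q+1}{2}$ transparent.

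Part (2) has a genuine gap, in two respects. First, the theorem asserts new deep holes of the form $\frac{a+bx+cx^2}{p(x)}$ for the \emph{fixed} irreducible cubic $p$ of the hypothesis, whereas your comparison of global totals only shows that the $(q-1)q^2$ surplus cosets are distributed among the hyperplanes $H_{p'}$ as $p'$ varies, i.e.\ that \emph{some} irreducible cubic produces new deep holes. The paper instead fixes $p$ and bounds, inside the $(q-1)\frac{q^2+q+2}{2}$ cubic-form cosets for that $p$, the overlap with Theorem~\ref{deephole:irr2} (at most $\frac{q^2-q}{2}(q-1)$ cosets) and with Theorem~\ref{deephole:degreek} (at most $q(q-1)$ cosets), leaving at least $q-1$ new cosets for that same $p$; to salvage your approach you would need the analogous localized bound (e.g.\ that each quadratic syndrome line meets $H_p$ in at most one point). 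Second, the input you invoke — that every plane $\gamma P_0+\beta P_1+\alpha P_2=-P_3$ with $(P_0,P_1,P_2)\neq 0$ contains a monic irreducible cubic — is asserted via an unspecified ``L-function estimate'' and, as you concede, left unverified for small $q$; a bound with main term $q^2/3$ and a square-root error term does not by itself cover all $q$. (One can in fact argue this elementarily, since the reducible locus meets that plane in $q$ lines which cannot be pairwise disjoint unless $(P_0,P_1,P_2)=0$; but note that the resulting statement would give completeness of the cubic construction over all $p$, which the Remark closing Section~\ref{section:k=q-3} puts in doubt — and in any case it does not repair the localization problem above.)
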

\begin{proof}
	We have seen that in this case the covering radius $\rho({PRS(q+1,k)})= q-k$. Similarly to Theorem~\ref{deephole:irr2}, we construct a supercode $C'$ using the irreducible polynomial $p(x)$. Then $C'$ is an MDS code with parameters $[q+1,k+3]$. So for any rational function $f(x)=\frac{a+bx+cx^2}{p(x)}$ where $a,b,c$ are not zero simultaneously,
	we have
	\[
	q-k-1= d(C')\leq d((u_f,0),C)\leq\rho({PRS(q+1,k)})= q-k.
	\]
	The error distance $d((u_f,0),C)= q-k-1$ if and only if there is $h(x)\in\f{q}[x]$ of degree $ k-2$ such that
	\[
	a+bx+cx^2-h(x)p(x)
	\]
	splits as $k+1=q-2$ linear factors, or
	\[
	a+bx+cx^2-(dx^{k-1}+h(x))p(x)
	\]
	splits as $k+2=q-1$ linear factors for some $d\in\f{q}^*$.
	Moreover, for fixed $a+bx+cx^2$ if there is $h(x)\in\f{q}[x]$ of degree $ k-2$ such that
	\[
	a+bx+cx^2-h(x)p(x)
	\]
	splits as $k+1=q-2$ linear factors, then such $h(x)$ is unique. If not, suppose there is another $h_1(x)\in\f{q}[x]$ of degree $ k-2$ such that
	\[
	a+bx+cx^2-h(x)p(x)
	\]
	splits as $k+1=q-2$ linear factors. Let $S$ and $T$ be ($q-2$)-subsets of $\f{q}$ such that
	\[
	a+bx+cx^2-h(x)p(x)=d\prod_{s\in S}(x-s)
	\]
	and
	\[
	a+bx+cx^2-h_1(x)p(x)=d_1\prod_{t\in T}(x-t)
	\]
	for some $d,d_1\in\f{q}^*$. Then
	\[
	(h(x)-h_1(x))p(x)=d_1\prod_{t\in T}(x-t)-d\prod_{s\in S}(x-s).
	\]
	Note that
	\[
	|S\cap T|\geq q-4.
	\]
	So the polynomial $(h(x)-h_1(x))p(x)$ has at least $q-4$ linear factors. Since the polynomial $h(x)-h_1(x)$ has degree
	\[
	\deg(h(x)-h_1(x))\leq q-5,
	\]
	the polynomial $p(x)$ has at least $1$ linear factor, which contradicts to the irreducibility of $p(x)$. So for $a+bx+cx^2$ if there is $h(x)\in\f{q}[x]$ of degree $ k-2$ such that
	$a+bx+cx^2-h(x)p(x)$
	splits as $k+1=q-2$ linear factors, then the factorization is unique. On the other hand, any $(q-2)$-subset $S$ of $\f{q}$ contributes $q-1$ polynomials $a+bx+cx^2$ in the following way
	\[
	a+bx+cx^2\equiv d\prod_{s\in S}(x-s) \mod p(x)
	\]
	where $d\in\f{q}^*$ is arbitrary.
	 Hence, the number of such $a+bx+cx^2$ is
	\[
	(q-1)\binom{q}{q-2}.
	\]
	Similarly, the number of polynomials $a+bx+cx^2$ such that
	$a+bx+cx^2-h(x)p(x)$
	splits as $k+2=q-1$ linear factors for some $h(x)$ of degree $k-1$ is
	\[
	(q-1)\binom{q}{q-1}.
	\]
	By the same argument again, there is no $a+bx+cx^2$ satisfying the two splitting conditions at the same time.
	So the total number of $f$ satisfying  $d((u_f,0),C)= q-k-1$ is
	\[
	(q-1)\left(\binom{q}{q-2}+\binom{q}{q-1}\right).
	\]
	Hence the number of cosets of deep holes of $C$ of the form $ \left(u_{\frac{a+bx+cx^2}{p(x)}},0\right)+C$ is
	\[
	q^3-1- (q-1)\left(\binom{q}{q-2}+\binom{q}{q-1}\right)=(q-1)\frac{q^2+q+2}{2}.
	\]
	This proves the first statement.
	
	Next we show that there are NEW deep holes besides the deep holes in Theorems~\ref{deephole:irr2} and~\ref{deephole:degreek}.
	
	FIX $p(x)=x^3+\alpha x^2+\beta x+\gamma$ a monic irreducible polynomial. Suppose there are a monic irreducible polynomial $q(x)$ of degree $2$ and a polynomial $f(x)\in \f{q}[x]$ of degree $\leq k-2$ such that
	\[
	\frac{a+bx+cx^2}{p(x)}=\frac{d+ex}{q(x)}+f(x)\qquad\forall x\in \f{q}.
	\]
	Then one can solve to get
	\[
	a+bx+cx^2\equiv \frac{l(x^q-x)}{q(x)}\mod p(x)
	\]
	and
	\[
	d+ex \equiv \frac{-l(x^q-x)}{p(x)}\mod q(x)
	\]
	for some $l\in\f{q}^*$. So running over all monic irreducible polynomial of degree $2$ and $l\in\f{q}^*$, they contribute at most
	\[
	\frac{q^2-q}{2}(q-1)
	\]
	cosets of deep holes of the form in the lemma. So besides the deep holes in Theorem~\ref{deephole:irr2}, there are at least
	\[
	(q-1)\frac{q^2+q+2}{2}-\frac{q^2-q}{2}(q-1)=(q-1)(q+1)
	\]
	cosets of deep holes of the form $\left(u_{\frac{a+bx+cx^2}{p(x)}},0\right)+C$.
	
	We have already seen that the number\footnote{Indeed, for fixed $p(x)=x^3+\alpha x^2+\beta x+\gamma$ one can show only $\left(u_{ex^k-e\alpha x^{k-1}},0\right)+C,\,e\in\f{q}^*$ can be represented by $\left(u_{\frac{a+bx+cx^2}{p(x)}},0\right)+C$ where $a+bx+cx^2\equiv -e(x^q-x)\mod p(x)$.} of cosets of deep holes in Theorem~\ref{deephole:degreek}
	\[
	|\{(u_f,v)+C\,|\,\deg(f)=k,\,v\in\f{q}\}|= q(q-1).
	\]
	So there are at least
	\[
	(q-1)(q+1)-(q-1)q=q-1
	\]
	cosets of deep holes of the form $\left(u_{\frac{a+bx+cx^2}{p(x)}},0\right)+C$ besides the deep holes in Theorem~\ref{deephole:irr2} and Theorem~\ref{deephole:degreek}.

\end{proof}

Next we study the relationship between deep holes in Theorem~\ref{deephole:irr3} and the deep holes in Theorems~\ref{deephole:irr2} for $k=q-3$. This problem is related to the distribution problem of irreducible polynomials which has been studied extensively in number theory (see~\cite{Wan97} and~\cite{Coh05}). However, the lower bound in the references is vacuous in our situation. To overcome the difficulty, we use the method in~\cite{Wan97} to give an explicit formula for the distribution of irreducible cubic polynomials.
\begin{lem}[Distribution of irreducible cubic polynomials]\label{distributionofcubicirrepoly1}
	For any non-zero polynomial $a+bx\in\f{q}[x]$ and any monic irreducible polynomial $q(x)\in\f{q}[x]$ of degree $2$, there exists a monic irreducible polynomial $p(x)\in\f{q}[x]$ of degree $3$ such that
	\[
	p(x)  \equiv l(a+bx)\mod q(x)
	\]
	for some $l\in\f{q}^*$.
\end{lem}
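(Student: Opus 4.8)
The plan is to prove a stronger \emph{explicit distribution} statement: for the fixed linear form $a+bx$ I will compute exactly the number $M$ of monic irreducible cubics $p(x)$ whose residue modulo $q(x)$ is a nonzero $\f{q}$-multiple of $a+bx$, and then read off $M>0$. Since $q(x)$ is irreducible of degree $2$, the quotient $K=\f{q}[x]/(q(x))\cong\f{q^2}$ is a field, and any monic irreducible cubic is automatically coprime to $q(x)$, so its residue $\bar p$ lies in $K^\ast$. Writing $s$ for the image of $a+bx$ in $K^\ast$, the requirement $p\equiv l(a+bx)\ (\mathrm{mod}\ q(x))$ with $l\in\f{q}^\ast$ says precisely that $\bar p$ lies in the coset $\f{q}^\ast s$ of the subgroup $\f{q}^\ast\subseteq K^\ast$. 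As $[K^\ast:\f{q}^\ast]=q+1$, there are $q+1$ such cosets, and they are separated by the $q+1$ \emph{even} Dirichlet characters modulo $q(x)$, namely the characters $\chi$ of $K^\ast$ that are trivial on $\f{q}^\ast$.

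First I would express $M$ as a character sum. By orthogonality on $K^\ast/\f{q}^\ast$,
\[
M=\frac{1}{q+1}\sum_{\chi}\overline{\chi(s)}\,T(\chi),\qquad T(\chi)=\sum_{\deg P=3}\chi(P),
\]
the inner sum over monic irreducible cubics and the outer over even characters. The trivial character $\chi_0$ supplies the main term $T(\chi_0)=\tfrac13(q^3-q)$, the total count of monic irreducible cubics.

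Next I would evaluate $T(\chi)$ for nontrivial even $\chi$ by the function-field L-function method of Wan~\cite{Wan97}. For $L(u,\chi)=\sum_{f\text{ monic}}\chi(f)u^{\deg f}$ one has $\sum_{\deg f=n}\chi(f)=0$ for every $n\ge\deg q(x)=2$, so $L(u,\chi)$ is a polynomial of degree $\le 1$. Representing the $q^2-1$ nonzero residues as the nonzero polynomials of degree $\le 1$ and grouping them by leading coefficient gives $\sum_{g\in K^\ast}\chi(g)=\big(\sum_{c\in\f{q}^\ast}\chi(c)\big)L(1,\chi)=(q-1)L(1,\chi)$ for even $\chi$; since the left side vanishes, $L(1,\chi)=0$, which forces $L(u,\chi)=1-u$. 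Hence the von Mangoldt coefficient is $\psi_3(\chi)=-1$, and the prime-power identity $\psi_3(\chi)=3\,T(\chi)+\sum_{\deg P=1}\chi(P)^3$ yields $T(\chi)=\tfrac13\big(-1-c_1(\chi^3)\big)$, where $c_1(\chi^3)=\sum_{a\in\f{q}}\chi^3(x-a)$ is the degree-one coefficient of $L(u,\chi^3)$. Applying the same computation $L(u,\chi^3)=1-u$ when $\chi^3\neq\chi_0$, and using $c_1(\chi_0)=q$ when $\chi^3=\chi_0$, produces the clean dichotomy
\[
T(\chi)=\begin{cases}0 & \text{if }\chi^3\neq\chi_0,\\[2pt] -\dfrac{q+1}{3} & \text{if }\chi^3=\chi_0,\ \chi\neq\chi_0.\end{cases}
\]

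Finally I would assemble the explicit formula. Only the even characters of order dividing $3$ survive, and the nontrivial ones exist exactly when $3\mid q+1$, in which case there are precisely two, mutually conjugate and taking values in the cube roots of unity. Substituting gives $M=\tfrac13 q(q-1)$ when $3\nmid q+1$, while when $3\mid q+1$ one gets $M=\tfrac13(q^2-q+1)$ or $M=\tfrac13(q-2)(q+1)$ according as $\chi(s)\neq 1$ or $\chi(s)=1$ for the order-$3$ characters; a consistency check shows these counts sum over the $q+1$ cosets to $\tfrac13(q^3-q)$. In every case $M>0$ for $q\ge 3$, which proves the lemma, the lone genuine exception $q=2$ (where one coset contains no cubic) being irrelevant since $k=q-3$ forces $q\ge 5$. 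The main obstacle is the exact L-function input, that is, pinning down $L(u,\chi)=1-u$ for even nontrivial $\chi$ by establishing simultaneously the degree bound and the vanishing $L(1,\chi)=0$; once this is secured the character sums collapse to an exact count rather than a mere Weil-type estimate, and positivity is immediate.
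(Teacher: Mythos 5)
Your proof is correct and follows essentially the same route as the paper: orthogonality over the characters of $K^\ast=(\f{q}[x]/(q(x)))^\ast$ trivial on $\f{q}^\ast$, the evaluation $L(u,\chi)=1-u$, and the degree-$3$ prime-counting identity, arriving at the same explicit counts $\tfrac13 q(q-1)$, $\tfrac13(q^2-q+1)$ and $\tfrac13(q-2)(q+1)$. Your only additions are a direct verification that $L(1,\chi)=0$ (where the paper cites \cite{Wan97} and remarks it can be proved directly) and the observation that the count vanishes for $q=2$ in one coset --- a boundary case the paper's closing positivity claim overlooks, but which is harmless here since $k=q-3\geq 2$ forces $q\geq 5$.
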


\begin{proof} Let $G = (\f{q}[x]/(q(x)))^*$. This is just the multiplicative group of the quadratic extension of the field $\f{q}$. For $\alpha \in G$, let
	$N_3(\alpha)$ denote the number of pairs $(p(x), l)$ such that $p(x)\in \f{q}[x]$ is monic irreducible of degree $3$, $l\in \f{q}^*$ and
	$p(x) \equiv l \alpha \mod q(x)$. We want to show that $N_3(\alpha)>0$. In fact, we will give a simple explicit formula for $N_3(\alpha)$
	which immediately imply that $N_3(\alpha)>0$.
	
	Let $\hat{G}$ be the complex character group of $G$, which is the cyclic group of order $q^2-1$. Let $\hat{G_1}$ denote the subgroup of $\hat{G}$
	consisting of those characters whose restriction to $\f{q}^*$ is trivial. It is clear that $\hat{G_1}$ is a cyclic group of order $q+1$. Let $\pi_k$
	denote the set of monic irreducible polynomials of degree $k$ in $\f{q}[x]$. A standard character sum argument shows that
	$$N_3(\alpha) = \frac{1}{q^2-1} \sum_{\chi \in \hat{G}} \sum_{p(x)\in \pi_3} \sum_{l \in \f{q}^*} \chi(\frac{p(x)}{l\alpha}).$$
	If $\chi \not\in \hat{G_1}$, then $\sum_{l\in \f{q}^*}\chi(l^{-1}) =0$. It follows that
	\begin{align*}
	N_3(\alpha) &= \frac{q-1}{q^2-1} \sum_{\chi \in \hat{G_1}} \sum_{p(x)\in \pi_3} \chi(\frac{p(x)}{\alpha})\\
	&=\frac{1}{q+1}\left(\frac{q^3-q}{3}
	+\sum_{\chi \in \hat{G_1}\setminus\{1\}} \chi^{-1}(\alpha) \sum_{p(x)\in \pi_3} \chi(p(x))\right).
	\end{align*}

	Now $\chi$ is a nontrivial character of $G$ which is trivial on $\f{q}^*$. For positive integers $k$,  consider the character sums and their L-function
	$$S_k(\chi) = \sum_{d|k} \sum_{p(x) \in \pi_d} d\chi(p(x)^{k/d}), \ \ L(\chi, T) = \exp(\sum_{k=1}^{\infty}\frac{S_k(\chi)}{k}T^k).$$
	By \cite{Wan97}, the L-function $L(\chi, T)$ of $\chi$ is a
	polynomial of degree $1$ with the trivial factor $1-T$.  This can also be proved directly. It follows that $L(\chi, T) = 1-T$ and $S_k(\chi) =-1$ for all $k$.  	
	Taking $k=3$, we obtain
	$$-1 = S_3(\chi) = 3\sum_{p(x) \in \pi_3} \chi(p(x)) + S_1(\chi^3).$$
	If $\chi^3 \not =1$, then $S_1(\chi^3) = -1$ and we deduce
	$$\sum_{p(x) \in \pi_3} \chi(p(x)) =0$$
	If $\chi^3=1$, then $\chi=\chi_3$ is a primitive character of order $3$ trivial on $\f{q}$ (necessarily $q\equiv 2 \mod 3$), 	then $S_1(\chi^3) =q$
	and
	$$\sum_{p(x) \in \pi_3} \chi(p(x)) = -\frac{q+1}{3}.$$	
	Putting the above together, we deduce the following result.
	If $q\not\equiv 2\mod 3$, then $N_3(\alpha) = q(q-1)/3$. If $q\equiv 2 \mod 3$, then
	$$N_3(\alpha) = \frac{1}{3}( q(q-1) - (\chi_3(\alpha) +\chi_3^{-1}(\alpha))).$$
	In summary,  we have
	$$N_3(\alpha) = \frac{1}{3} (q(q-1) -r_3(\alpha)),$$
	where $r_3(\alpha) =0$ if $q\not\equiv 2 \mod 3$, $r_3(\alpha) = -1$ if $q\equiv 2 \mod 3$ and $\chi_3(\alpha)\not=1$, and 	
	$r_3(\alpha) = 2$ if $q\equiv 2 \mod 3$ and $\chi_3(\alpha)=1$.
\end{proof}

\begin{thm}
	Let $q$ be a prime power and let $k=q-3$. The deep holes in Theorem~\ref{deephole:irr2} are contained in deep holes in Theorem~\ref{deephole:irr3}.
\end{thm}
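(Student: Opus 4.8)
The plan is to reduce the containment to the distribution Lemma~\ref{distributionofcubicirrepoly1}, since the genuinely hard analytic content (the positivity of $N_3(\alpha)$ via the $L$-function of a character sum) already lives there. Fix a quadratic deep hole, given by a monic irreducible $q(x)\in\f{q}[x]$ of degree $2$ and a nonzero linear polynomial $d+ex$; I want to produce a monic irreducible cubic $p(x)$ and a nonzero $a+bx+cx^2$ so that the coset $\left(u_{\frac{a+bx+cx^2}{p(x)}},0\right)+C$ equals $\left(u_{\frac{d+ex}{q(x)}},0\right)+C$. From the analysis in the proof of Theorem~\ref{deephole:irr3}, such an equality forces the existence of some $l\in\f{q}^*$ with
\[
(d+ex)\,p(x)\equiv -l(x^q-x)\mod q(x);
\]
conversely, once $p(x)$ and $l$ are found, setting $(a+bx+cx^2)\,q(x)\equiv l(x^q-x)\mod p(x)$ produces a valid (nonzero) cubic deep hole, and a short Chinese Remainder check shows the two cosets then coincide. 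So the whole theorem comes down to solving the displayed congruence for an irreducible cubic $p(x)$ and a scalar $l$.

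Next I would pass to the residue field $\f{q^2}=\f{q}[x]/(q(x))$, writing $\theta$ for the image of $x$. Since $q(x)$ is irreducible of degree $2$ we have $\theta\notin\f{q}$, hence $\tau:=\theta^q-\theta\neq 0$; likewise $d+e\theta\neq 0$ because $(d,e)\neq(0,0)$ and $\theta\notin\f{q}$. Thus $\gamma_0:=-\tau/(d+e\theta)$ is a nonzero element of $\f{q^2}$, and its unique degree-$\leq 1$ representative is a nonzero linear polynomial $a_0+b_0x$. Reducing the congruence above modulo $q(x)$ turns it into the single requirement $p(\theta)=l\,\gamma_0$ for some $l\in\f{q}^*$, that is,
\[
p(x)\equiv l\,(a_0+b_0x)\mod q(x).
\]

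At this point I would invoke Lemma~\ref{distributionofcubicirrepoly1} directly: applied to the nonzero linear polynomial $a_0+b_0x$ and the irreducible quadratic $q(x)$, it furnishes a monic irreducible cubic $p(x)$ together with $l\in\f{q}^*$ satisfying exactly this congruence (the formula $N_3(\alpha)=\frac13(q(q-1)-r_3(\alpha))$ with $r_3(\alpha)\in\{-1,0,2\}$ guarantees at least one such pair for every $q$). Defining $a+bx+cx^2$ by the companion congruence modulo $p(x)$, one checks by the Chinese Remainder Theorem that $(a+bx+cx^2)q(x)-(d+ex)p(x)\equiv l(x^q-x)$ modulo $p(x)q(x)$, so the difference $\frac{a+bx+cx^2}{p(x)}-\frac{d+ex}{q(x)}$ agrees on $\f{q}$ with a polynomial of degree $\leq q-5=k-2$; hence the two cosets are equal. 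As $q(x)$ and $d+ex$ were arbitrary, every quadratic deep hole of Theorem~\ref{deephole:irr2} appears among the cubic deep holes of Theorem~\ref{deephole:irr3}.

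The only real obstacle is the distribution statement that every $\f{q}^*$-class in $\f{q^2}^*$ is represented by the reduction modulo $q(x)$ of some monic irreducible cubic, and this is precisely what Lemma~\ref{distributionofcubicirrepoly1} supplies; everything else is the bookkeeping that identifies $-\tau/(d+e\theta)$ as the relevant nonzero linear residue and verifies the coset equality. I would take a little care that $\gamma_0$ is genuinely nonzero (so the representative $a_0+b_0x$ is nonzero, as the lemma requires) and that the reduction $a+bx+cx^2$ modulo the cubic $p(x)$ cannot vanish, both of which follow from $l\neq 0$ together with the coprimality of $x^q-x$ to $p(x)$ and to $q(x)$.
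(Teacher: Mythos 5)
Your proposal is correct and follows essentially the same route as the paper: both reduce the containment to Lemma~\ref{distributionofcubicirrepoly1} by solving $p(x)\equiv l(x^q-x)/(d+ex) \bmod q(x)$ for a monic irreducible cubic $p(x)$ and $l\in\f{q}^*$, then defining $a+bx+cx^2$ by the companion congruence modulo $p(x)$ and checking that the two rational functions differ on $\f{q}$ by a polynomial of degree $\leq k-2$. You simply spell out the details the paper leaves as ``one can show'' (the passage to the residue field, the nonvanishing checks, and the CRT/degree verification), all of which are correct.
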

\begin{proof}
	 Let $C=PRS(q+1,k)$. For any coset of deep holes in Theorem~\ref{deephole:irr2}
	\[
	\left(u_{\frac{d+ex}{q(x)}},0\right)+C
	\]
	where $q(x)$ is a monic irreducible quadratic polynomial and $d+ex$ is non-zero,
	by Lemma~\ref{distributionofcubicirrepoly1} there is a monic irreducible polynomial $p(x)$ of degree $3$ such that
	\[
	p(x)  \equiv \frac{l(x^q-x)}{d+ex}\mod q(x)
	\]
	for some $l\in\f{q}^*$. Let
	\[
	a+bx+cx^2\equiv \frac{-l(x^q-x)}{q(x)}\mod p(x).
	\]
	Then one can show that there is a polynomial $f(x)\in \f{q}[x]$ of degree $\leq k-2$ such that
	\[
	\frac{a+bx+cx^2}{p(x)}=\frac{d+ex}{q(x)}+f(x)\qquad\forall x\in \f{q}.
	\]
	So
	\[
	\left(u_{\frac{d+ex}{q(x)}},0\right)+C=	\left(u_{\frac{a+bx+cx^2}{p(x)}},0\right)+C.
	\]
\end{proof}

The theorem below shows that the deep holes constructed from monic irreducible cubic polynomials contain the deep holes of degree $k$.
\begin{thm}
	Let $q$ be a prime power and let $k=q-3$. The deep holes in Theorem~\ref{deephole:degreek} are contained in deep holes in Theorem~\ref{deephole:irr3}.
\end{thm}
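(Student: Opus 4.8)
The plan is to mimic the reduction showing the degree-$k$ deep holes sit inside the quadratic construction (the $k=q-2$ case), replacing the irreducible quadratic by an irreducible \emph{cubic} with a prescribed subleading coefficient.

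\emph{Normalization.} Given a deep hole $(u_f,v)$ with $\deg f=k$, I would first scale by the leading coefficient of $f$ and subtract the codewords $(u_{x^i},0)\in C$ ($0\le i\le k-2$) together with $(u_{vx^{k-1}},v)\in C$. This rewrites the coset as $(u_g,0)+C$ with $g(x)=x^k-wx^{k-1}$, where $w$ ranges over all of $\f{q}$ as $f$ and $v$ vary. Thus it suffices to realize each $(u_g,0)+C$ as one of the cosets of Theorem~\ref{deephole:irr3}.

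\emph{The defining identity.} I would seek a monic irreducible cubic $p(x)=x^3+wx^2+\beta x+\gamma$ whose subleading coefficient is exactly $w$, and then take $a+bx+cx^2$ to be the unique representative of degree $\le 2$ of $-(x^q-x)\bmod p(x)$; this is nonzero since $p\nmid x^q-x$. The quotient $Q(x)=\frac{x^q-x+(a+bx+cx^2)}{p(x)}$ is then a genuine polynomial, monic of degree $q-3=k$, whose $x^{k-1}$-coefficient equals $-w$ (the negative of the subleading coefficient of $p$). Setting $h:=g-Q$, the two top coefficients cancel precisely because the subleading coefficient of $p$ is $w$, so $\deg h\le k-2$. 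Reducing the polynomial identity $g=Q+h$ modulo $x^q-x$ yields $g(x)=\frac{a+bx+cx^2}{p(x)}+h(x)$ for all $x\in\f{q}$, and since $\deg h\le k-2$ we have $(u_h,0)\in C$, whence $(u_g,0)+C=\bigl(u_{(a+bx+cx^2)/p},0\bigr)+C$ lies in Theorem~\ref{deephole:irr3}.

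\emph{The main obstacle.} Everything hinges on the existence, for every $w\in\f{q}$, of a monic irreducible cubic over $\f{q}$ with subleading coefficient $w$ — the cubic analogue of the quadratic existence that was handled by discriminants when $k=q-2$. I would prove this by a trace count in $\f{q^3}$: the roots of such cubics are exactly the degree-$3$ elements $\theta\in\f{q^3}$ with $\mathrm{Tr}_{\f{q^3}/\f{q}}(\theta)=-w$. The fiber $\mathrm{Tr}^{-1}(-w)$ has $q^2$ elements, of which at most $q$ lie in the unique proper subfield $\f{q}$; the remaining at least $q^2-q>0$ elements have degree exactly $3$ (as $3$ is prime) and fall into Frobenius orbits of size $3$, producing at least $(q^2-q)/3>0$ irreducible cubics with subleading coefficient $w$. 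No discriminant shortcut is available here, but this elementary count is uniform in the characteristic; alternatively a character-sum/L-function argument in the spirit of Lemma~\ref{distributionofcubicirrepoly1} would also suffice.
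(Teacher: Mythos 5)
Your proposal is correct and follows the same route as the paper: normalize the degree-$k$ deep hole to $g(x)=x^k-(w+v)x^{k-1}$ (you absorb $v$ into $w$, which is legitimate since $(u_{vx^{k-1}},v)\in C$), find a monic irreducible cubic whose $x^2$-coefficient matches, set $a+bx+cx^2\equiv-(x^q-x)\bmod p(x)$, and peel off the quotient to land in the cubic construction. The one place you diverge is the existence of a monic irreducible cubic with prescribed subleading coefficient: the paper simply cites known results on the distribution of irreducible polynomials (Wan, Hsu/Ham, Cohen), whereas you prove it from scratch by counting the trace fiber $\mathrm{Tr}_{\f{q^3}/\f{q}}^{-1}(-w)$, discarding the at most $q$ elements lying in $\f{q}$, and grouping the remaining $\ge q^2-q$ degree-$3$ elements into Frobenius orbits of size $3$. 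Your count is valid (the fiber has $q^2$ elements, $3$ is prime so every element outside $\f{q}$ has degree exactly $3$, and the trace is Galois-invariant so orbits stay inside the fiber), and it makes the proof self-contained and uniform in the characteristic, at the cost of a short extra argument; the paper's citation is shorter but leans on external machinery. You also spell out explicitly why $\deg(g-Q)\le k-2$ via the $x^{q-1}$-coefficient of $p(x)Q(x)$, which the paper leaves implicit. No gaps.
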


\begin{proof}
	 Let $C=PRS(q+1,k)$. For any coset of deep holes $(u_f,v)+C$ in Theorem~\ref{deephole:degreek}, wlog, we may assume $f(x)=x^k-wx^{k-1}$, as $(u_{x^i},0)\in PRS(q+1,k)$ for all $0\leq i\leq k-2$ and any non-zero scale multiple of a deep hole is still a deep hole. Let
	\[
	g(x)=f(x)-vx^{k-1}=x^k-(w+v)x^{k-1}.
	\]
	Then we have
	\[
	d((u_g,0),PRS(q+1,k))=d((u_f,v),PRS(q+1,k)).
	\]
	
	By theorems in~\cite{Wan97} and~\cite{Ham98} (also see~\cite[Theorem~2.7]{Coh05}), there exists a monic irreducible polynomial $p(x)\in \f{q}[x]$ with $x^2$-coefficient $w+v$. Since $x^q-x$ is coprime to $p(x)$, we have
	\[
	a+bx+cx^2\equiv -(x^q-x)\mod p(x)
	\]
	is non-zero. So there exists some polynomial $h(x)\in\f{q}[x]$ of degree $\leq k-2$ such that
	\[
	(g(x)-h(x))p(x)={a+bx+cx^2}+x^q-x,
	\]
	which is also equivalent to
	\[
	g(x)=\frac{a+bx+cx^2}{p(x)}+h(x)\qquad \forall x\in \f{q}.
	\]
	So any coset in $\{(u_f,v)+C\,|\,\deg(f)=k,\,v\in\f{q}\}$ can be written as
	\[
	\left(u_{\frac{a+bx+cx^2}{p(x)}},0\right)+C
	\]
	for some monic irreducible polynomial $p(x)\in\f{q}[x]$ of degree $3$ and some non-zero polynomial $a+bx+cx^2\in\f{q}[x]$.
	
\end{proof}

	\begin{rem}
	\begin{enumerate}
		\item It is natural to ask that in the case $k=q-3$ if deep holes constructed from all irreducible polynomials of degree $3$ are all the deep holes of $PRS(q+1,q-3)$. For small $q$, using mathematical software, e.g., SageMath~\cite{sagemath}, one could count the number of cosets of deep holes constructed from irreducible polynomials of degree $3$ and verify if it equals $(q-1)\left(\frac{q^3}{2}+q^2+\frac{q}{2}\right)$. The experiment result would suggest that the deep holes constructed from irreducible polynomials of degree $3$ are not complete for prime power $q\geq 8$.
		\item An open problem is to count the number of cosets of deep holes constructed in Theorem~\ref{deephole:irr3} when the construction runs over all the irreducible polynomials of degree $3$ and find all the deep holes of $PRS(q+1,q-3)$.
	\end{enumerate}
	
\end{rem}

\section*{Acknowledgement}
Part of this paper was written when the two authors were visiting Beijing International Center for Mathematical Research (BICMR). The authors would like to thank Prof. Ruochuan Liu for his hospitality. Part of this paper was written when the first author was visiting University of Oklahoma. He would like to thank Prof. Qi Cheng for his hospitality.

\bibliographystyle{IEEETranS}
\bibliography{covering}
\end{document}